\let\OLDthebibliography\thebibliography
\renewcommand\thebibliography[1]{
  \OLDthebibliography{#1}
  \setlength{\parskip}{0pt}
  \setlength{\itemsep}{0.2pt plus 0.3ex}
}
\newcommand{\bphi}{\boldsymbol{\varphi}}
\newcommand{\bpsi}{\boldsymbol{\Psi}}
\newtheorem{theorem}{Theorem}
\newtheorem{lemma}[theorem]{Lemma}
\newtheorem{coro}[theorem]{Corollary}
\newtheorem{prop}[theorem]{Proposition}
\newtheorem{conjecture}{Conjecture}
\newtheorem*{assumptions*}{Assumptions}
\newtheorem*{rem}{Remark}
\newtheorem*{question}{Question}
\newtheorem*{problem}{Problem}
\newtheorem{remark}[theorem]{Remark} 
\theoremstyle{definition}
\newtheorem{defini}{Definition}
\newcommand{\bfs}{{\mathbf s}}
\newcommand{\Aa}{{\mathcal A}}
\newcommand{\Bb}{{\mathcal B}}
\newcommand{\Dd}{{\mathcal D}}
\newcommand{\Gg}{{\mathcal G}}
\newcommand{\Hh}{{\mathcal H}}
\newcommand{\Ii}{{\mathcal I}}
\newcommand{\Jj}{{\mathcal J}}
\newcommand{\Oo}{{\mathcal O}}
\newcommand{\Pp}{{\mathcal P}}
\newcommand{\Rr}{{\mathcal R}}
\newcommand{\Vv}{{\mathcal V}}
\newcommand{\Ab}{{\mathbb A}}
\newcommand{\CC}{{\mathbb C}}
\newcommand{\EE}{{\mathbb E}}
\newcommand{\GG}{{\mathbb G}}
\newcommand{\HH}{{\mathbb H}}
\newcommand{\NN}{{\mathbb N}}
\newcommand{\PP}{{\mathbb P}}
\newcommand{\QQ}{{\mathbb Q}}
\newcommand{\RR}{{\mathbb R}}
\newcommand{\VV}{{\mathbb V}}
\newcommand{\ZZ}{{\mathbb Z}}
\newcommand{\Sss}{{\mathfrak S}}
\newcommand{\sss}{{\mathfrak s}}
\newcommand{\one}{{\bf 1}}
\newcommand{\nul}{{\bf 0}}
\newcommand{\qtx}[1]{\quad\text{#1}\quad}
\newcommand{\wt}{\widetilde}
\newcommand{\Mat}{{\rm Mat}}
\newcommand{\Her}{{\rm Her}}
\newcommand{\pmat}[1]{\begin{pmatrix} #1  \end{pmatrix}}
\newcommand{\smat}[1]{\left( \begin{smallmatrix} #1  \end{smallmatrix} \right)}
\newcommand{\diag}{{\rm diag}}
\DeclareMathOperator{\im}{{\rm Im}}
\DeclareMathOperator{\re}{{\rm Re}}
\DeclareMathOperator{\ran}{{\rm Ran}}
\DeclareMathOperator{\Var}{{\rm Var}}
\DeclareMathOperator{\supp}{{\rm supp}}
\DeclareMathOperator{\spec}{{\rm spec}}
\DeclareMathOperator{\sgn}{{\rm sgn}}
\newcommand{\ess}{{\rm ess}}
\numberwithin{theorem}{section}
\numberwithin{equation}{section}
\title{Anderson transition at 2 dimensional growth rate on antitrees and spectral theory for operators with one propagating channel}
\author{Christian Sadel\thanks{The research of C.S. has received funding from the People Programme (Marie Curie Actions) of the European 
Union's Seventh Framework Programme (FP7/2007-2013) under REA grant agreement number 291734.}\\[.2cm]
{\small Institute of Science and Technology, Austria}}
\begin{document}
\maketitle 

\begin{abstract}
 We show that the Anderson model has a transition from localization to delocalization at exactly 2 dimensional growth rate on antitrees with normalized edge weights which are certain discrete graphs. The kinetic part has a one-dimensional structure allowing a description through transfer matrices which involve some Schur complement. 
 For such operators we introduce the notion of having one propagating channel and extend theorems from the theory of one-dimensional Jacobi operators that relate the behavior of transfer matrices with the spectrum. These theorems are then applied to the considered model. In essence, in a certain energy region the kinetic part averages the random potentials along shells and the transfer matrices behave similar as for a one-dimensional operator with random potential of decaying variance.
 At $d$ dimensional growth for $d>2$ this effective decay is strong enough to obtain absolutely continuous spectrum, whereas for some uniform $d$ dimensional growth 
 with $d<2$ one has pure point spectrum in this energy region. At exactly uniform $2$ dimensional growth also some singular continuous spectrum appears, at least at small disorder.
 As a corollary we also obtain a change from singular spectrum ($d\leq 2$) to absolutely continuous spectrum ($d\geq 3)$ 
 for random operators of the type $\Pp_r \Delta_d \Pp_r+\lambda \Vv$ on $\ZZ^d$, where $\Pp_r$ is an orthogonal radial projection, $\Delta_d$ the discrete adjacency operator (Laplacian) on $\ZZ^d$ and $\lambda \Vv$ a random potential.
\end{abstract}

\tableofcontents

\section{Introduction}

Anderson models are random Schr\"odinger operators given by the sum of a kinetic operator and a random potential.
In a discrete setting the kinetic part is typically the adjacency operator or Laplacian of a discrete graph and the random potential is a multiplication operator with real, independent, identically distributed random values at each vertex. 
Most commonly studied are the Anderson models on the lattices $\ZZ^d$ and the Bethe lattices (infinite regular trees).
In these cases and for continuous versions consisting of the negative Laplacian and random potential bumps 
in $\RR^d$ several things are known. The Anderson model typically localizes (has pure point spectrum) at spectral edges and for high disorder
\cite{FS,FMSS,DLS,SW,CKM,DK,Kl1,AM,A,Wang,Klo, BK}. However, so far, the high disorder localization in the discrete setup requires some regularity on the randomness, localization for the Bernoulli potential in $\ZZ^d$, $d\geq 2$ is still an open problem. 
In one dimension \cite{GMP,KuS,CKM} and quasi-one dimensional graphs like trees with long line sequences \cite{Breu} and strips \cite{Lac,KlLS} the Anderson model localizes for any disorder.
But it is possible that a built in symmetry prevents localization for a quasi-one dimensional random operator as e.g. in \cite{SS}. 
For $d=2$ one expects localization at any disorder and for $d\geq 3$ the existence of some absolutely continuous spectrum (short a.c. spectrum) is expected for small disorder. These conjectures remain big open problems.

The existence of a.c. spectrum for the Anderson model has first been proved on Bethe lattices (regular trees) \cite{Kl3} and was extended to other
tree-like graphs with exponentially growing boundary which are all infinite dimensional
\cite{ASW, FHS1, FHS2, H, KLW, KLW2, FHH, KS, AW, Sad, Sad2, Sha}.
It appears that the hyperbolic nature of such graphs leads to conservation of
a.c. spectrum and ballistic dynamical behavior for small disorder \cite{Kl2, KS2, AW2}.
Using the fractional moment method \cite{AM} one finds localization at high disorder also on all these graphs (cf. \cite{Tau}).
We therefore have some Anderson transition (change in spectral behavior) when increasing the disorder.

Transitions of the spectral type are also known for random decaying potentials on $\ZZ^d$ when changing the decay rate \cite{KLS, Bou} and similar families of random 
Jacobi operators \cite{BL, BFS}.
Even (at least) two transitions occur when increasing a random transversally periodic potential which is added to a random radial symmetric potential of fixed disorder on a binary tree, cf.
\cite[Corollary~1.5]{FLSSS}. Here we seek for a transition when changing the dimension. In fact, these results and open conjectures raise the following questions.
\begin{question}
 Can one find some finite dimensional graph $\GG$ (in the sense of a polynomial growth of the graph) such that the Anderson model at small disorder with i.i.d. potential has absolutely continuous spectrum and such that
 the analysis is simpler than the difficult problem on $\ZZ^d$, $d\geq 2$?
 Can one find a 'nice' family of graphs $\GG_d$ of dimension $d\geq 1$ such that the spectral type of the Anderson model changes
 at some $d\in[2,3]$ for low disorder?
\end{question}

\subsection{Dimension of discrete graphs}

Let $\GG$ be some graph with countably many vertices which is edge connected and locally 
finite\footnote{This means that at each vertex the vertex degree, i.e. number of edges at that vertex, is finite. Note that we do not assume a uniform upper bound}.
The graph distance, or better step distance $d(x,y)\in\ZZ_+$ is defined by the smallest number of steps needed to go from $x$ to $y$ along edges,
i.e. the smallest number $n$ such that there is a sequence $x=x_0,\, x_1,\ldots,x_n=y$ where
$x_i$ and $x_{i+1}$ are connected by an edge. We define further $d(x,x)=0$ and as $\GG$ is edge connected we have $d(x,y)<\infty$ for any $x,y\in\GG$. 
Let us choose some non-empty, finite set of vertices $S_0\subset\GG$ which we call the roots of $\GG$ and let
\begin{gather*}
S_n :=\{x\in\GG\,:\, d(S_0,x)=n\},\quad \Bb_n:=\{x\in\GG\,:\, d(S_0,x)\leq n\},\,\\
\Sss_n:=\{x\in\Bb_n\,:\,d(x,S_{n+1})=1\,\}\;,\quad s_n:=\#(S_n),\;b_n:=\#(\Bb_n),\;\sss_n:=\#(\Sss_n) \;
\end{gather*}
Clearly, $\Bb_n=\bigcup_{j=0}^n S_j,\;b_n=\sum_{j=0}^n s_j$ and $0<s_j<\infty$ as $\GG$ is connected, has infinitely many vertices and is locally finite.
$S_n$ is the shell or sphere of distance $n$ around $S_0$.
The boundary of a set $\Lambda\subset\GG$ is given by $\partial \Lambda=\{x\,:\,d(x,\Lambda)\leq 1\;\text{and}\;d(x,\GG\setminus\Lambda)\leq 1\}$.
One has an interior boundary $\partial \Lambda\cap \Lambda$ and an exterior boundary $\partial \Lambda \setminus \Lambda$.
The interior boundary of $\Bb_n$ is $\Sss_n$ and the exterior boundary is $S_{n+1}$.

\begin{defini} Let $d\geq 1$ be some real number.\\
 (a) We say that the volume growth of $\GG$ is
 \begin{itemize}
 \item  $d$-dimensional, if $cn^{d} < b_n < Cn^{d}$ for $C>c>0$ and all $n$.
 \item at least (at most) $d$-dimensional, if $b_n>Cn^d$ (resp. $b_n<Cn^d$) for some $C>0$ and all $n$.
 \end{itemize}
 (b) We say that (starting from $S_0$) the growth rate of $\GG$ is
 \begin{itemize}
 \item  $d$-dimensional, if $cn^{d-1} < s_n < Cn^{d-1}$ for $C>c>0$ and all $n$.
 \item {\bf uniform} $d$-dimensional, if $\,\lim\limits_{n\to\infty} s_n\,/\,n^{d-1}\,=\,C$ for $C>0$.
 \item at least (at most) $d$-dimensional, if $s_n>Cn^{d-1}$ (resp. $s_n<Cn^{d-1}$) for $C>0$ and all $n$.
 \end{itemize}
 (c) The logarithmic ratio of the interior (resp. exterior) surface-area and volume has a $d$-dimensional behavior if 
 $\lim\limits_{n\to\infty} \log(\sss_n)\,/\,{\log(b_n)}\,=\,1-\frac1d\;$, resp. 
 $\lim\limits_{n\to\infty} {\log(s_{n+1})}\,/\,{\log(b_n)}\,=\,1-\frac1d\;$
\end{defini}
Except for the constant, the bounds on the volume growth do not depend on the choice of the roots $S_0$. 
Clearly, the growth rate bounds on $s_n$ imply the corresponding volume growth bounds. A $d$-dimensional growth rate also implies a $d$-dimensional behavior for the logarithmic ratio of the exterior surface-area and volume.

The more restrictive concept of a uniform $d$-dimensional growth-rate will be needed for some of the presented results. Let us mention that the lattice $\ZZ^d$ has a uniform $d$-dimensional growth rate for any $d$ and any finite set of roots $S_0$ (cf. Section~\ref{sec:PDP}.)

The above concepts do not give much information about the structure of edges. If the vertex degree is not uniformly bounded one may have strange effects. For graphs with such a uniform bound the adjacency and Laplace operators are bounded.
Therefore, in the general case one should allow for normalizing edge weights, i.e. we assign some $0\neq a(x,y)=a(y,x) \in \RR$ whenever $d(x,y)=1$ and let $a(x,y)=0$ if $d(x,y)\neq 1$.
The weights shall be chosen such that the matrix $a(x,y)$ defines a bounded, self-adjoint operator $\Aa$ on $\ell^2(\GG)$ with some delocalized energy region, i.e.
there should be a non-empty open set $I\subset \spec_{{\rm ac}}(\Aa)$ such that the spectrum of $\Aa$ is purely a.c. in $I$. 
Let us denote the class of such weighted graphs $(\GG,a)$ by $\Gg$.

The Anderson model on $(\GG,a)$ is then given by the sum of $\Aa$ with a random, independent identically distributed potential $\Vv$ coupled with some $\lambda>0$ determining the disorder, i.e.
\begin{equation}\label{eq-op}
(H_\lambda \psi)(x) \,=\, ((\Aa \,+\,\lambda \Vv)\psi)\,(x)\;=\;\sum_{y:d(x,y)=1} a(x,y)\psi(y)\,+\,\lambda \,v(x)\psi(x)\;,
\end{equation}
where the $v(x)$, $x\in\GG$, are real, independent, identically distributed random variables. 
Typically $\EE(v(x))=0$ and $\EE(v^2(x))<\infty$ where $\EE$ denotes the expectation value. 
\begin{problem}
Find 'nice' families of graphs $(\GG_d,a_d) \in\Gg$ with $d$-dimensional growth rates for any $d$ 
where the spectral type of the Anderson model changes at some $d\in[2,3]$ in some interval.
\end{problem}
\noindent We give such an example in this work.

\subsection{Main result}

The main objects are the following graphs which (with standard weights $a(x,y)=1$) are called
antitrees in \cite{Woj, KLWo, BrK}.

\begin{defini}
For a sequence $\bfs = (s_n)_{n\geq 0}$ of positive integers $s_n>0$ we let $\Ab_\bfs$ be the following graph:
The $n$-th shell $S_n$ consists of $s_n$ vertices, each vertex in $S_n$ is connected to each vertex in $S_{n\pm1}$ and for $x\in S_n, \, y\in S_{n+1}$ the edge
from $x$ to $y$ obtains the weight $a(x,y)=1/\sqrt{s_n s_{n+1}}$. There are no edges within $S_n$ (see Figure~\ref{fig:1}).
We  call $\Ab_\bfs$ with these weights the {\bf antitree with normalized edge-weights associated to the sequence $\bfs=(s_n)_{n\geq0}$}.
The corresponding weighted adjacency operator given by the matrix $a(x,y)$ will be called $\Aa_\bfs$. (Recall, $a(x,y)=0$ if there is no edge from $x$ to $y$, also note that as above, $S_n=\{y:d(y,S_0)=n\}$.)
\end{defini}

\begin{figure}[ht]
\begin{center}
 \includegraphics[width=5cm]{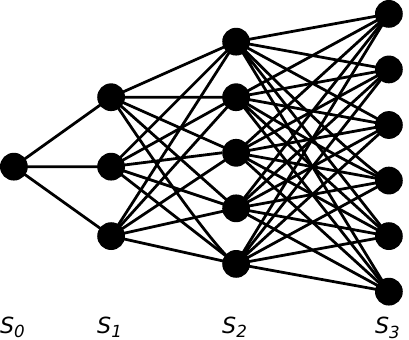}
\end{center}
\caption{Example for an antitree
}\label{fig:1}
\end{figure}

Breuer and Keller \cite{BrK} calculated the spectrum of the Laplacian and adjacency operator on antitrees with standard weights ($a(x,y)=1$ whenever $a(x,y)\neq 0$).
They considered general so-called spherically homogeneous graphs where it is feasible to use the Gram-Schmidt orthonormalization procedure on a sequence 
$\psi,\, H\psi,\,H^2\psi\,\ldots$ to get a Jacobi operator.
Compared to their paper the presented methods are different and the methods work in principle for any potential and additional hopping terms within the shells $S_n$
destroying the spherical homogeneity.

The normalization of the edge weights in this way ensures that $\Aa_\bfs$ is bounded on $\ell^2(\Ab_\bfs)$ and the
spectrum is $[-2,2]$ independent of the sequence $\bfs=(s_n)_n$ which corresponds to some remark in \cite{BrK}.
\begin{prop}\label{prop-H0}
 The spectrum of the weighted adjacency operator $\Aa_\bfs$ on the antitree with normalized edge-weights $\Ab_\bfs$ is given 
 by $\sigma(\Aa_\bfs)=[-2,2]$ and it is absolutely continuous except for a (possibly) embedded eigenvalue at $0$ with multiplicity $\sum_{n=0}^\infty (s_n-1)$ (which may be infinite). 
\end{prop}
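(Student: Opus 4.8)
\emph{Proof strategy.} The plan is to use the permutation symmetry of $\Ab_\bfs$ within shells to reduce $\Aa_\bfs$ to a direct sum of the free half-line Jacobi matrix and the zero operator. Write $\ell^2(\Ab_\bfs)=\bigoplus_{n\geq 0}\ell^2(S_n)$ according to the shells, and for each $n$ let $\phi_n:=s_n^{-1/2}\sum_{x\in S_n}\delta_x$ be the normalized uniform vector on $S_n$. Set $\Hh_{\rm s}:=\overline{{\rm span}}\{\phi_n:n\geq 0\}$; its orthogonal complement is $\Hh_{\rm s}^\perp=\bigoplus_{n\geq 0}\{u\in\ell^2(S_n):\sum_{x\in S_n}u(x)=0\}$, a space of dimension $\sum_{n\geq 0}(s_n-1)$. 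Since $\Aa_\bfs$ commutes with the group $\prod_{n}\Sym(S_n)$ acting on $\ell^2(\Ab_\bfs)$, both $\Hh_{\rm s}$ and $\Hh_{\rm s}^\perp$ are invariant under $\Aa_\bfs$ (alternatively this follows directly from the two computations below together with self-adjointness).

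The key point is the explicit action of $\Aa_\bfs$ on the shell vectors. Summing $\Aa_\bfs\delta_x$ over $x\in S_n$, each $y\in S_{n\pm1}$ receives a contribution from all $s_n$ vertices of $S_n$, and the factor $s_n\cdot(s_n s_{n\pm1})^{-1/2}$ combines with the normalization of $\phi_n$ to give $\Aa_\bfs\phi_n=\phi_{n-1}+\phi_{n+1}$ for $n\geq 1$ and $\Aa_\bfs\phi_0=\phi_1$. Hence, under the unitary $\phi_n\mapsto e_n$, the restriction $\Aa_\bfs|_{\Hh_{\rm s}}$ is exactly the free Jacobi operator $J_0$ on $\ell^2(\ZZ_{\geq 0})$, $(J_0 u)_n=u_{n+1}+u_{n-1}$ with $u_{-1}:=0$. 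It is classical that $J_0$ has purely absolutely continuous spectrum $[-2,2]$ and no eigenvalues (it is diagonalized by the discrete sine transform into multiplication by $2\cos\theta$ on $L^2((0,\pi))$).

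On the complement the operator vanishes: if $u\in\ell^2(S_n)$ with $\sum_{x\in S_n}u(x)=0$, then in $\Aa_\bfs u$ the coefficient of each $\delta_y$, $y\in S_{n\pm1}$, is a scalar multiple of $\sum_{x\in S_n}u(x)=0$, so $\Aa_\bfs u=0$. Thus $\Hh_{\rm s}^\perp\subseteq\ker\Aa_\bfs$, hence $\Aa_\bfs|_{\Hh_{\rm s}^\perp}=0$; and since $J_0$ is injective, in fact $\ker\Aa_\bfs=\Hh_{\rm s}^\perp$, of dimension $\sum_{n\geq 0}(s_n-1)$.

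Assembling, $\Aa_\bfs\cong J_0\oplus 0$, so $\sigma(\Aa_\bfs)=[-2,2]\cup\{0\}=[-2,2]$; the absolutely continuous part, coming entirely from $\Hh_{\rm s}$, already fills $[-2,2]$, and the only further spectral contribution is the eigenvalue $0$ from $\Hh_{\rm s}^\perp$ of multiplicity $\sum_{n}(s_n-1)$, embedded in the a.c.\ spectrum (and absent exactly when all $s_n=1$); there is no singular continuous spectrum. I do not anticipate a genuine obstacle here: the argument is a direct symmetry reduction, and the only care needed is in verifying $\Aa_\bfs\phi_n=\phi_{n-1}+\phi_{n+1}$ from the chosen edge-weight normalization and in quoting the standard spectral analysis of the free half-line Jacobi matrix. (This is the $\lambda=0$ special case of the Jacobi-matrix reduction developed later in the paper.)
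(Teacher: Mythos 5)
Your proof is correct and follows essentially the same route as the paper, which obtains Proposition~\ref{prop-H0} as the $V_n=\nul$ special case of the decomposition in Proposition~\ref{prop-spec-decom}: the cyclic space spanned by the uniform shell vectors $\phi_n$ carries the free half-line Jacobi operator with purely a.c.\ spectrum $[-2,2]$, and the orthogonal complement (mean-zero vectors on each shell) lies in the kernel, giving the eigenvalue $0$ of multiplicity $\sum_n(s_n-1)$. Your explicit verification of $\Aa_\bfs\phi_n=\phi_{n+1}+\phi_{n-1}$ and of the vanishing on the complement is exactly the computation the paper leaves implicit.
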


For a better understanding of the normalization consider the ordinary adjacency operator $\Delta_d$ on $\ZZ^d$, its spectrum is $[-2d,2d]$.
The graph distance on $\ZZ^d$ is given by the $1$-norm $\|\cdot\|_1$, hence let $s_n=\#\{x\in\ZZ^d\,:\,\|x\|_1=n\}$.
Now let $\Pp_r$ be the orthogonal projection on the radial subspace $\HH_r=\{\psi\in\ell^2(\ZZ^d):\psi(x)=\psi(y) \text{ whenever }\|x\|_1=\|y\|_1\}$. Then, $\frac1d \Pp_r\Delta_d \Pp_r$ is basically the adjacency operator of an antitree $\Ab_\bfs$ with weights
$a(x,y)=\left(1+\Oo(1/n^2)\right)\,/\,\sqrt{s_n s_{n+1}}$ if $\|x\|_1=n,\, \|y\|_{1}=n+1$ (cf. Corollary~\ref{coro:main} and Section~\ref{sec:PDP}).
Hence, asymptotically $\frac1d \Pp_r\Delta_d \Pp_r$ looks like $\Aa_\bfs$.

Basically we will show the following: Consider the Anderson model $H_\lambda$ on the set of antitrees $\Ab_\bfs$ with uniform $d$-dimensional growth rate
and certain nice distributions $\PP_v$ of the singe site potentials $v(x)$.
Then, the spectral type of $H_\lambda$ in some set $I_\lambda$ is pure point for $d<2$, it is mixed pure point and singular continuous at $d=2$ and it is purely absolutely continuous for $d>2$.

\begin{assumptions*} $\phantom{H}$
\begin{enumerate}
 \item[{\rm (A1)}] The distribution $\PP_v$ of the i.i.d. random potentials $v(x)$ is supported in $[-1,1]$, has mean zero $\EE_v(v)=0$ 
 and positive variance $0<\sigma^2:=\EE_v(v^2)\leq 1$\,.
 \item[{\rm (A2)}] The distribution $\PP_v$ of the i.i.d. random potentials $v(x)$ is absolutely continuous with respect to the Lebesgue measure.
\end{enumerate}
\end{assumptions*}
\noindent Here and below, $\EE_v(f(v))=\int f(v) \PP_v(dv)$.
Assumption (A1) will be always important, (A2) will matter for the singular spectrum.
We let $\supp(\PP_v)$ denote the support of the measure $\PP_v$. Under assumption (A1) we can define \mbox{$0<v_{+}\leq 1$} and $-1\leq v_{-}<0$ 
by 
$$ 
 v_{+}\,=\,\max(\supp(\PP_v))\qtx{and}  v_{-}\,=\,\min(\supp(\PP_v))\;.
$$
For $E/\lambda \not\in\supp(\PP_v)$ we define
\begin{equation}\label{eq-def-h}
h_{E,\lambda}:= \frac{1}{\EE_v(1/(E-\lambda v))} \qtx{and set} I_\lambda:=\{E\in\RR\,:\, E\not\in[\lambda v_-,\lambda v_+]\;\;\text{and}\;\;|h_{E,\lambda}|<2\;\}\,.
\end{equation}
Note that $h_{E,\lambda}$ could be technically infinite if $\EE_v(1/(E-\lambda v))=0$ but this will not happen for $E\in I_\lambda$. It is not difficult to see that
$I_\lambda=(c'_{\lambda},\lambda v_-) \cup (\lambda v_+, c_\lambda)$, however for large $\lambda$ one or both of these intervals may be empty (see also Remark~\ref{rem:I_l}).
For concrete examples see Remark~\ref{rem:I_l}.
Moreover, for $E\in I_\lambda$ we define
\begin{equation}\label{eq-def-gamma}
\sigma_{E,\lambda}^2:=\EE_v\left(\frac{1}{(E-\lambda v)}-h_{E,\lambda}^{-1}\right)^2\;,\qquad
 \gamma_{E,\lambda}\,:=\,\frac{h_{E,\lambda}^4\,\sigma^2_{E,\lambda}}{2(4-h_{E,\lambda}^2)}\,>\,0\,.
\end{equation}
To get a feeling of these quantities for small disorder, note that for $\lambda\to 0$ we have 
\begin{equation}\label{eq-scaling}
h_{E,\lambda}\,=\,E+\Oo(\lambda^2)\,,\qquad \sigma_{E,\lambda}^2\,=\,\Oo(\lambda^2)\,.
\end{equation}
Last but not least, let us also introduce the canonical injection $P_n$ from $\ell^2(S_n)\cong\CC^{s_n}$ into $\ell^2(\Ab_\bfs)$ so that for $\psi\in\ell^2(\Ab_\bfs)$ one has 
$P_n^*\psi=(\psi(x))_{x\in S_n}\in\ell^2(S_n)\cong \CC^{s_n}$. One can identify $\psi$ with the direct sum $\bigoplus_n P_n^* \psi \in \bigoplus_n \ell^2(S_n)\cong \bigoplus_n \CC^{s_n}$.
The main result is the following:

\begin{theorem}\label{th:main} Let $H_\lambda=\Aa_\bfs+\lambda\Vv$ be the Anderson model on the antitree $\Ab_\bfs$ as in \eqref{eq-op} 
with i.i.d. potential satisfying {\rm (A1)} and let $\lambda>0$. 
Then, $I_\lambda$ is not empty for $\lambda\leq 2+\frac{2\sigma^2}{2-\sigma^2}$ and for the graphs $\Ab_\bfs$ with uniform $d$-dimensional growth rate there is a transition of the spectral type in $I_\lambda$ at $d=2$.
More precisely, we find the following:
\begin{enumerate}[{\rm (i)}]
 \item If $\Ab_\bfs$ has at least $d$ dimensional growth rate for some $d>2$ (i.e. $s_n>cn^{d-1}$), and in fact, whenever $\sum_n s_n^{-1}<\infty$, then, the spectrum of $H_\lambda$ is almost surely purely
 absolutely continuous in $I_\lambda$ and $I_\lambda$ is in the spectrum.
 \item If $\Ab_\bfs$ has uniform $d$-dimensional growth rate for some $1\leq d\leq 2$ ($s_n/n^{d-1}$ converges), then the spectrum of $H_\lambda$ in $I_\lambda$ is almost surely singular and $I_\lambda$ is in the spectrum. 
 \item If additionally {\rm (A2)} is satisfied and $\Ab_\bfs$ has uniform $d$-dimensional growth rate for some $1<d< 2$ with
 $\lim_{n\to\infty} s_n/n^{d-1}= C$, then, $H_\lambda$ has almost surely dense pure point spectrum in $I_\lambda$.
 Moreover, almost surely the random eigenvectors $\psi_j$ for the random eigenvalues $E_j\in I_\lambda$ are sub-exponentially decaying in the sense that
 $$
 \lim_{n\to\infty} 
 \frac{\log\left(\sqrt{{\|P_n^*\psi_{j}\|}^2\,+\,{\|P_{n+1}^*\psi_{j}\|}^2\,} \right)}{n^{2-d}}\,=\,-\,\frac{\gamma_{E_j,\lambda}}{C(2-d)}\,
 $$
 \item If additionally {\rm (A2)} is satisfied and $\lim_{n\to\infty} s_n/n= C$, i.e. $\Ab_\bfs$ has uniform $2$-dimensional growth rate, 
 then,  $H_\lambda$ has almost surely
 dense pure singular continuous spectrum in $J_\lambda$ and dense pure point spectrum in $I_\lambda \setminus J_\lambda$ where
 $$
 J_\lambda\,:=\,\left\{ E\in I_\lambda\,:\,\gamma_{E,\lambda}\,/\,C\,\leq \,1/2\,\right\}\;.
 $$
 Moreover, almost surely, the random eigenvectors $\psi_j$ for the random eigenvalues $E_j\in I_\lambda \setminus J_\lambda$ are polynomially decaying in the sense that
$$
 \lim_{n\to\infty} 
 \frac{\log\left(\sqrt{{\|P_n^*\psi_{j}\|}^2\,+\,{\|P_{n+1}^*\psi_{j}\|}^2\,} \right)}{\log(n)}\,=\,-\,\frac{\gamma_{E_j,\lambda}}{C}\,.
 $$
\end{enumerate}
\end{theorem}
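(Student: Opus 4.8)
I focus on part (iv), the $2$-dimensional case $s_n/n\to C$; the lower-dimensional parts (ii)--(iii) use the same machinery with weaker decay.

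The plan is to reduce, via the ``one propagating channel'' description of $H_\lambda$ on $\Ab_\bfs$ developed in the body of the paper, the spectral analysis inside $I_\lambda$ to a half-line Jacobi-type recursion with random coefficients of critically decaying variance, and then to combine a Pr\"ufer/EFGP analysis of the associated transfer matrices with power-law subordinacy and a Simon--Wolff spectral-averaging argument. Fix $E\in I_\lambda$ and write $h=h_{E,\lambda}=2\cos\theta_E$ with $\theta_E\in(0,\pi)$, which is possible because $|h|<2$ on $I_\lambda$. The Schur-complement form of the shell-to-shell transfer matrix shows that crossing $S_n$ contributes a determinant-one $2\times2$ matrix whose ``energy parameter'' is $h+w_n(E)$, with off-diagonal normalized to $1$ by the choice $\sqrt{s_n s_{n+1}}\,a(x,y)=1$; here $w_n(E)$ is an explicit function of $(v(x))_{x\in S_n}$, and a first-order expansion of $\tfrac1{s_n}\sum_{x\in S_n}(E-\lambda v(x))^{-1}$ around its mean $h^{-1}=\EE_v\big((E-\lambda v)^{-1}\big)$ gives $\EE(w_n)=\Oo(1/s_n)$, $\Var(w_n)=\big(h^4\sigma_{E,\lambda}^2+o(1)\big)/s_n$, with all higher centered moments $\Oo(s_n^{-1})$ --- the boundedness of $(E-\lambda v)^{-1}$ on $I_\lambda$ being used throughout. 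Since $s_n/n\to C$, this is precisely a critically decaying random perturbation with variance coefficient $h^4\sigma_{E,\lambda}^2/C$ sitting at effective energy $h\in(-2,2)$.

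Next I would run the EFGP transform of this recursion. Writing $\Phi_N^E$ for the product of the first $N$ transfer matrices and $(R_N,\vartheta_N)$ for the modified Pr\"ufer radius and angle, one obtains
\[
\log R_{n+1}^2-\log R_n^2\,=\,w_n(E)\,g(\vartheta_n)\,+\,w_n(E)^2\,\widetilde g(\vartheta_n)\,+\,\Oo(|w_n|^3)
\]
with explicit trigonometric $g,\widetilde g$. The centered part of $w_n(E)g(\vartheta_n)$ is a martingale-difference array with summed conditional variance $\sim c\sum_{k\le n}k^{-1}\sim c\log n$, hence a.s.\ of lower order than $\log n$ by the martingale strong law; the quadratic terms furnish a deterministic drift $\sim c'\log n$; and the phases $\vartheta_n$ equidistribute against this slowly varying perturbation as long as $\theta_E$ avoids a discrete exceptional set. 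Carrying out the trigonometric averages (the identity $8\sin^2\theta_E=2(4-h^2)$ producing the denominator of $\gamma_{E,\lambda}$) one gets, for all $E\in I_\lambda$ off a discrete set and almost surely,
\[
\lim_{N\to\infty}\frac{\log\|\Phi_N^E\|}{\log N}\,=\,\frac{\gamma_{E,\lambda}}{C}\,,
\]
and, up to normalization, a unique subordinate solution $u^E$ with $\lim_n\log|u^E_n|/\log n=-\gamma_{E,\lambda}/C$. By the structure theorems, $\|P_n^*\psi\|$ is comparable to the norm of the corresponding two-step solution vector, so an $\ell^2$ eigenfunction at $E$ has its reduction equal to $u^E$, which is in $\ell^2$ iff $\sum_n n^{-2\gamma_{E,\lambda}/C}<\infty$, i.e.\ iff $\gamma_{E,\lambda}/C>\tfrac12$. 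From here the spectral type follows: on $J_\lambda$ (where $\gamma_{E,\lambda}/C\le\tfrac12$), for a.e.\ $E$ even the subordinate solution fails to be $\ell^2$ near infinity --- at the borderline $\gamma_{E,\lambda}/C=\tfrac12$ this still uses $\sum n^{-1}=\infty$ --- so there are no eigenfunctions; since this obstruction is a tail property independent of the potential on $S_0$, spectral averaging over the root-shell potentials $(v(x))_{x\in S_0}$ (legitimate by (A2), an averaged rank-one spectral measure being absolutely continuous) removes any exceptional point mass, and the finitely many solutions of $\gamma_{E,\lambda}/C=\tfrac12$ are a.s.\ not eigenvalues because $E\mapsto\gamma_{E,\lambda}$ is real-analytic and non-constant on $I_\lambda$; as $\|\Phi_N^E\|\to\infty$ a.e.\ rules out absolutely continuous spectrum (the a.c.\ part being supported on energies with bounded transfer matrices), the spectrum is a.s.\ purely singular continuous on $J_\lambda$, and dense there since $J_\lambda\subset\sigma(H_\lambda)$. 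On $I_\lambda\setminus J_\lambda$ (where $\gamma_{E,\lambda}/C>\tfrac12$), $\|(H_\lambda-E-i0)^{-1}\delta_{x_0}\|^2\sim\sum_n n^{-2\gamma_{E,\lambda}/C}<\infty$ for a.e.\ $E$, so the Simon--Wolff criterion (again using (A2)) gives a.s.\ pure point spectrum there, dense since $I_\lambda\setminus J_\lambda\subset\sigma(H_\lambda)$; each eigenfunction reduces to the subordinate solution, whence
\[
\lim_{n\to\infty}\frac{\log\left(\sqrt{\|P_n^*\psi_j\|^2+\|P_{n+1}^*\psi_j\|^2}\,\right)}{\log n}\,=\,-\,\frac{\gamma_{E_j,\lambda}}{C}\,.
\]

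The hard part is the passage from the Lebesgue-almost-every-$E$ statements delivered by the EFGP analysis to the spectral-measure statements that are actually needed --- controlling the random, Lebesgue-null set of eigenvalues and excluding singular continuous mass on $I_\lambda\setminus J_\lambda$ --- which is exactly where assumption (A2) becomes indispensable, through spectral averaging over root-shell sites together with the Gilbert--Pearson subordinacy dichotomy and the Simon--Wolff criterion. A secondary difficulty is making the reduction to the effective one-channel recursion quantitative enough that $\Var(w_n(E))=\big(h^4\sigma_{E,\lambda}^2+o(1)\big)/s_n$ holds uniformly for $E$ in compact subsets of $I_\lambda$ (with controlled higher moments and controlled deviation of the phases from resonance), since this is what ties the Pr\"ufer growth exponent to the explicit constant $\gamma_{E,\lambda}/C$ rather than merely to a positive number; the exact threshold $\gamma_{E,\lambda}/C=\tfrac12$ is then disposed of by the real-analyticity of $E\mapsto\gamma_{E,\lambda}$.
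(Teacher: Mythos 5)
Your treatment of part (iv) follows essentially the same route as the paper: the harmonic-mean transfer matrices with variance $\sim h^4\sigma_{E,\lambda}^2/s_n$, the KLS-style modified Pr\"ufer/EFGP analysis yielding $\log\|T_E(n)\|/\log n\to\gamma_{E,\lambda}/C$ together with a subordinate solution of the reciprocal decay rate, the $\ell^2$ threshold $\gamma_{E,\lambda}/C=1/2$ separating $J_\lambda$ from its complement, elimination of a.c.\ spectrum via subordinacy, and the Simon--Wolff criterion applied through rank-one spectral averaging over the root-shell potentials under (A2). This matches the paper's Theorem on the transfer-matrix growth, its subordinacy theorems for one-propagating-channel operators, and its pure-point/singular-continuous dichotomy, including the borderline $\gamma/C=1/2$ and the role of $\sum n^{-1}=\infty$ there. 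The one place where you are glossing over a genuine issue specific to this model (as opposed to the scalar Jacobi case) is the passage from the spectral measure at the root shell to the full operator: $\Phi_0$ need not be cyclic for $H|\VV$, and one must separately dispose of eigenvectors in $\VV^\perp$ and of the exceptional set $B_\infty$; on $I_\lambda$ this works because $|E|>\lambda v_\pm$ forces $I_\lambda\cap B_\infty=\emptyset$ and excludes $\spec(V_n|\VV_n^\perp)$, but it does need to be said.

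The genuine gap is coverage: the statement also asserts (i) that for $\sum_n s_n^{-1}<\infty$ the spectrum in $I_\lambda$ is almost surely \emph{purely absolutely continuous}, and that $I_\lambda\neq\emptyset$ for $\lambda\leq 2+\frac{2\sigma^2}{2-\sigma^2}$; neither is addressed, and neither follows from the machinery you set up. In the summable regime the Pr\"ufer growth exponent vanishes and subordinacy gives nothing useful; the paper instead proves $\sup_n\EE(R_n^4)\leq\prod_n(1+C_E s_n^{-1})<\infty$ uniformly on compacts of $I_\lambda$, applies Fatou/Fubini to get $\liminf_n\int_a^b\|T_E(n)\|^4\,dE<\infty$ almost surely, and invokes the Last--Simon-type criterion (Theorem~\ref{th:sigma_0-ac}) for purity of the a.c.\ spectrum --- a different argument from the one you describe. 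The non-emptiness of $I_\lambda$ likewise requires the separate convexity estimate $|h_{E,\lambda}|<|E|-\lambda^2\sigma^2/(|E|+\lambda)$ of Remark~\ref{rem:I_l}. Finally, ``$I_\lambda$ is in the spectrum'' in all parts rests on the almost-sure essential spectrum computation (Proposition~\ref{prop-spec-H}), which your sketch assumes implicitly when claiming density of the point and singular continuous spectra.
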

\noindent Note that by analyticity of $\gamma_{E,\lambda}$ in $E\in I_\lambda$ it is clear that $J_\lambda$ is a finite union of intervals. Also note that for $C\to \infty$ the set $J_\lambda$ gets larger and converges to $I_\lambda$.
However, as long as $I_\lambda=(c'_{\lambda},\lambda v_-) \cup (\lambda v_+, c_\lambda)$ is not empty we have for $E\to c_\lambda$ or $E\to C'_\lambda$ that $|h_{E,\lambda}|\to 2$ and hence $\gamma_{E,\lambda}\to \infty$.
Therefore, $I_\lambda \setminus J_\lambda$ is not empty if $I_\lambda$ is not empty. From \eqref{eq-scaling} one obtains that for any energy $E\in (-2,2)\cap I_\lambda$ and small enough $\lambda$ we get $E\in J_\lambda$, so in case (iv) there is singular continuous spectrum for small disorder.
All the results are valid for any disorder $\lambda$ in principle, however, for large disorder, the set $I_\lambda$ may be empty, cf. Remark~\ref{rem:I_l}.

As explained above one can get this type of antitree model from a random operator on $\ZZ^d$ with radial projections around the adjacency operator.
Therefore, one has the following corollary:

\begin{coro}\label{coro:main}
 Let $\Delta_d$ be the adjacency operator on $\ZZ^d$, $\Pp_r$ the orthogonal projection on the radial subspace $\HH_r=\{\psi\in\ell^2(\ZZ^d)\,:\,\psi(x)=\psi((\|x\|_1,0,\ldots,0))\,\}$ and consider the operator
 $$H_\lambda\,=\,\tfrac1d \,\Pp_r\,\Delta_d\, \Pp_r\,+\,\lambda \Vv$$ where $\Vv$ is a random i.i.d. potential with single-site distribution $\PP_v$ satisfying {\rm (A1)}.
 Then, for $d\geq 3$ the spectrum of $H_\lambda$ is almost surely purely absolutely continuous in $I_\lambda$ and it is almost surely purely singular in $I_\lambda$ for $d\leq 2$.
 Also, $I_\lambda\subset \spec_\ess(H_\lambda)$ almost surely.
\end{coro}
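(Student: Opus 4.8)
The plan is to reduce the operator $\frac1d\Pp_r\Delta_d\Pp_r+\lambda\Vv$ on $\ZZ^d$ to an antitree model covered by Theorem~\ref{th:main}, up to an error that does not affect the spectral type in $I_\lambda$. First I would set $s_n=\#\{x\in\ZZ^d:\|x\|_1=n\}$ and observe that the radial subspace $\HH_r$ is naturally unitarily equivalent to $\ell^2(\Ab_\bfs)$ with this sequence $\bfs$: a radial vector is constant on each sphere $\{\|x\|_1=n\}$, so it is determined by one value per shell, and the canonical identification is an isometry after the weight $1/\sqrt{s_n}$ is attached. On the orthogonal complement $\HH_r^\perp$ the operator $\Pp_r\Delta_d\Pp_r$ vanishes, while $\lambda\Vv$ restricted there is irrelevant for what happens inside $I_\lambda$ (its contribution to the spectrum lies inside $[\lambda v_-,\lambda v_+]$ after restriction, hence outside $I_\lambda$); more carefully, one works with the direct sum decomposition and notes that $H_\lambda$ couples $\HH_r$ and $\HH_r^\perp$ only through $\lambda\Vv$, so a Schur-complement / rank argument confines the difference to the exceptional set. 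The key computational input, already flagged in the text right before the corollary, is that the compression of $\frac1d\Delta_d$ to $\HH_r$ is the adjacency operator of an antitree with weights $a(x,y)=(1+\Oo(1/n^2))/\sqrt{s_ns_{n+1}}$ for $\|x\|_1=n,\|y\|_1=n+1$; this is a direct count of how many lattice neighbours in shell $n+1$ a vertex of shell $n$ has, divided appropriately.

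Next I would check the growth rate: it is classical (and noted in the excerpt, to be proved in Section~\ref{sec:PDP}) that $s_n/n^{d-1}\to C_d>0$ for a suitable constant, i.e. $\ZZ^d$ has uniform $d$-dimensional growth rate. Hence for $d\geq 3$ we are in the regime $\sum_n s_n^{-1}<\infty$ and Theorem~\ref{th:main}(i) gives purely a.c.\ spectrum in $I_\lambda$ for the exact antitree $\Aa_\bfs+\lambda\Vv$; for $d\leq 2$ uniform $d$-dimensional growth puts us in Theorem~\ref{th:main}(ii), giving purely singular spectrum in $I_\lambda$. It remains to transport these conclusions across the two discrepancies: the $(1+\Oo(1/n^2))$ perturbation of the weights, and the decoupled piece on $\HH_r^\perp$.

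For the weight perturbation I would argue that replacing $a(x,y)$ by $(1+\varepsilon_n)a(x,y)$ with $\sum_n|\varepsilon_n|<\infty$ (which holds since $\varepsilon_n=\Oo(1/n^2)$) is, after the one-dimensional reduction underlying Theorem~\ref{th:main}, a trace-class-type perturbation of the relevant Jacobi-matrix data, or more robustly: the transfer-matrix analysis that drives Theorem~\ref{th:main} only sees the products of the hopping terms, and an $\ell^1$ multiplicative perturbation of those terms changes the Lyapunov/subordinacy behavior only by a convergent factor, preserving both the $\sum s_n^{-1}<\infty$ dichotomy and the nature of the spectral measures in $I_\lambda$. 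Concretely one can either quote the stability of a.c.\ spectrum under $\ell^1$ (even $\ell^2$) perturbations of Jacobi parameters, or re-run the fixed-energy transfer-matrix estimates of the main text with the extra bounded factor $\prod(1+\varepsilon_k)$ absorbed. The decoupled summand on $\HH_r^\perp$ contributes spectrum contained in the closure of $\{\lambda v(x)\}\subset[\lambda v_-,\lambda v_+]$, which is disjoint from $I_\lambda$ by the very definition of $I_\lambda$ in \eqref{eq-def-h}; and since $\Pp_r\Delta_d\Pp_r$ maps $\HH_r^\perp$ to $0$ and $\HH_r$ into itself, the only interaction between the two parts is the potential, which is a bounded self-adjoint operator — so a standard decoupling/feshbach argument (or: the spectral projections of $H_\lambda$ onto $I_\lambda$ live essentially on $\HH_r$) finishes the identification of the spectral type in $I_\lambda$. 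Finally $I_\lambda\subset\spec_\ess(H_\lambda)$ a.s.\ follows because $I_\lambda$ is in the spectrum of the antitree model by Theorem~\ref{th:main} and has no isolated points (it is a union of intervals), so every point of it is a non-isolated point of $\spec(H_\lambda)$.

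I expect the main obstacle to be making the reduction to the antitree \emph{rigorous} rather than heuristic: one must set up the unitary $\HH_r\to\ell^2(\Ab_\bfs)$ precisely, verify that under it $\frac1d\Pp_r\Delta_d\Pp_r$ becomes exactly $\Aa_{\bfs}$ up to the claimed $\Oo(1/n^2)$ weight error (this requires the neighbour-counting identities on $\ZZ^d$ spheres of Section~\ref{sec:PDP}), and then control that $\ell^1$ weight perturbation at the level of the transfer-matrix formalism used for Theorem~\ref{th:main} — since the theorem is stated for exact normalized weights $1/\sqrt{s_ns_{n+1}}$, one either needs a perturbative extension of it or must observe that its proof already accommodates such corrections. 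Handling the $\HH_r^\perp$ part is comparatively routine given that its spectrum avoids $I_\lambda$.
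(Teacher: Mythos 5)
Your overall strategy --- identify $\tfrac1d\,\Pp_r\Delta_d\Pp_r$ with an antitree adjacency operator up to weights $(1+\Oo(n^{-2}))/\sqrt{s_ns_{n+1}}$, then control that summable perturbation (trace-class invariance of the a.c.\ spectrum for $d\leq 2$; re-running the fixed-energy transfer-matrix moment bound with the extra $\Oo(n^{-2})$ factor for $d\geq 3$, where you rightly sense that trace-class stability alone does not give \emph{purity} of the a.c.\ spectrum) --- is exactly the paper's route in Section~\ref{sec:PDP}. There the neighbour count gives $a_n:=\phi_n^*D_n\phi_{n-1}=d+\Oo(n^{-2})$, hence $\Pp_r\Delta_d\Pp_r-d\,\Aa_\bfs$ is trace class; for $d\geq 3$ one writes $\phi_n'=(1+b_n)\phi_n$ with $(1+b_n)(1+b_{n-1})=a_n/d$, adjusts $b_0$ so that $b_n=\Oo(n^{-2})$, and checks that $x_n$ becomes $x_n(1+\Oo(n^{-2}))+\Oo(n^{-2})$, so that $\liminf_n\int_a^b\|T_E(n)\|^4\,dE<\infty$ survives.

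There is, however, a genuine gap in how you set up the reduction. You identify the radial subspace $\HH_r$ (one dimension per shell) with $\ell^2(\Ab_\bfs)$ (which has $s_n$ dimensions in shell $n$), and then propose to dispose of the coupling between $\HH_r$ and $\HH_r^\perp$ through $\lambda\Vv$ by a ``standard decoupling/Feshbach argument'' that you call comparatively routine. Taken literally, that step fails: $\lambda\Vv$ does not preserve $\HH_r$, the off-diagonal blocks $\Pp_r\Vv(1-\Pp_r)$ are of order $\lambda$ and in no sense negligible, and the Feshbach/Schur complement at energy $z$ produces precisely the energy-dependent effective entry $a_{z,n}=(\phi_n^*(z-\lambda V_n)^{-1}\phi_n)^{-1}$, the harmonic mean of \eqref{eq-harm-entry} --- that is, the entire content of the one-propagating-channel analysis, not an afterthought. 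A naive decoupling would instead leave you with the compressed potential $\Pp_r\Vv\Pp_r$, an arithmetic mean over shells, which is the wrong object. The correct (and simpler) reduction is on the full Hilbert space: $\ell^2(\ZZ^d)\cong\bigoplus_n\CC^{s_n}\cong\ell^2(\Ab_\bfs)$ shell by shell, under which $\lambda\Vv$ is \emph{literally} the i.i.d.\ potential of the antitree Anderson model and only the kinetic part changes, by the trace-class amount above; Theorem~\ref{th:main} then applies to the full operator, and the spectrum contributed by the spaces $\VV_n^\perp$ of Proposition~\ref{prop-spec-decom} lies in $\lambda\supp(\PP_v)$, hence outside $I_\lambda$, with no decoupling required. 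With this correction the rest of your plan goes through as in the paper.
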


Part of the statements in Theorem~\ref{th:main} is that $I_\lambda$ is always part of the essential spectrum of $H_\lambda$, almost surely. In fact, even though the 
models are not ergodic, we obtain an almost sure essential spectrum.

\begin{prop}\label{prop-spec-H}
Let $H_\lambda=\Aa_\bfs+\lambda \Vv$ be the Anderson model on $\Ab_\bfs$ and let {\rm (A1)} be satisfied. We find almost surely that
$$
\spec(H_\lambda)\,=\,[-2,2]+\lambda \supp(\nu) \qtx{if}
\sup_n\,\{s_n\}\,<\,\infty
$$
and 
$$
\spec_{\ess}(H_\lambda)\,=\,\lambda \supp(\PP_v)\,\cup\,\left\{E\not\in \lambda\supp(\PP_v)\,:\,|h_{E,\lambda}|\leq 2\,\right\}\,\supset\,I_\lambda
$$
if $\liminf_{n\to\infty} s_n/n^\alpha\,>\,0$
for some $\alpha>0$, i.e. if $\Ab_\bfs$ has at least $1+\alpha$ dimensional growth rate.
In the latter case one may find random eigenvalues in $[-2,2]+\lambda \supp(\PP_v) \setminus \spec_{\ess}(H_\lambda)$.
\end{prop}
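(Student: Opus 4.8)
The plan is to split the analysis into two independent pieces: the bounded-shell case, where the operator $H_\lambda$ can be directly compared with a family of finite-dimensional blocks, and the growing-shell case, where one must produce approximate eigenvectors (Weyl sequences) using the effectively one-dimensional Jacobi structure coming from the transfer-matrix description.

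First I would treat the case $\sup_n s_n<\infty$. Here $\Aa_\bfs$ is a bounded self-adjoint operator with $\spec(\Aa_\bfs)=[-2,2]$ by Proposition~\ref{prop-H0}, and $\|\lambda\Vv\|\leq\lambda$ almost surely, so $\spec(H_\lambda)\subset[-2-\lambda,2+\lambda]$. For the lower inclusion, fix $E\in[-2,2]$ and $w\in\supp(\PP_v)$. I would build a Weyl sequence by combining a Weyl sequence $\varphi_k$ for $\Aa_\bfs$ at energy $E$ (which, by the description of $\Aa_\bfs$ via the one-propagating-channel/Jacobi reduction, can be chosen supported on finitely many shells, hence on a region with uniformly bounded shell sizes) with the almost sure fact that for any $\varepsilon>0$ and any finite window of vertices there is, almost surely, a translate of that window on which all potential values lie within $\varepsilon$ of $w$ (this uses independence and $w\in\supp(\PP_v)$, i.e. a Borel--Cantelli argument along a sequence of disjoint shell-blocks of fixed size). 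Translating $\varphi_k$ into such a window and adjusting $k$ with $\varepsilon$ gives $\|(H_\lambda-(E+\lambda w))\psi_k\|\to 0$ with $\|\psi_k\|=1$, so $E+\lambda w\in\spec(H_\lambda)$; since $H_\lambda$ is bounded and $[-2,2]+\lambda\supp(\PP_v)$ is closed, and the reverse inclusion is an $\varepsilon/3$ argument using $\|(\Aa_\bfs - E)\varphi\|$ and $\|(\lambda\Vv-\lambda w)\varphi\|$ small, equality follows. One subtlety: $H_\lambda$ is not ergodic, so the spectrum is a priori random; the point is that the constructed Weyl sequences work on a full-measure event (an intersection over a countable dense set of pairs $(E,w)$), which is what ``almost surely'' refers to.

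For the essential spectrum under $\liminf_n s_n/n^\alpha>0$, the inclusion $\lambda\supp(\PP_v)\subset\spec_{\ess}(H_\lambda)$ is obtained as above but now using that the shells go to infinity, so the translated Weyl sequences can be pushed out to infinity and made mutually orthogonal, hence escape any finite-rank perturbation. For the part $\{E\notin\lambda\supp(\PP_v):|h_{E,\lambda}|\leq 2\}$, and in particular $I_\lambda$, I would invoke the transfer-matrix/Schur-complement reduction developed in the paper: for $E\notin\lambda\supp(\PP_v)$ the operator restricted to the radial-type channel is governed by transfer matrices whose ``unperturbed'' trace is $h_{E,\lambda}$, and when $|h_{E,\lambda}|\leq 2$ these are elliptic, allowing construction of polynomially bounded generalized eigenfunctions localized near a far-away shell; truncating such a solution between shells $N$ and $N+L$ and choosing $L$ large produces an approximate eigenvector with error $o(1)$, and pushing $N\to\infty$ gives orthogonality, hence membership in $\spec_{\ess}$. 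For the reverse inclusion $\spec_{\ess}(H_\lambda)\subset\lambda\supp(\PP_v)\cup\{E:|h_{E,\lambda}|\leq 2\}$, I would argue that for $E\notin\lambda\supp(\PP_v)$ with $|h_{E,\lambda}|>2$ the Schur complement makes the transfer matrices hyperbolic, so outside a finite number of shells any generalized solution grows exponentially in one direction; standard limit-point arguments (or a Combes--Thomas-type estimate after the Schur reduction) then show $(H_\lambda-E)$ has a bounded inverse modulo a finite-rank piece, i.e. $E\notin\spec_{\ess}(H_\lambda)$. The last sentence of the proposition — that genuine eigenvalues can occur in $[-2,2]+\lambda\supp(\PP_v)\setminus\spec_{\ess}$ — follows from the localization parts of Theorem~\ref{th:main} (cases (iii)–(iv)), whose eigenvalues $E_j\in I_\lambda$ already lie in $\spec_{\ess}$, combined with the observation that low-dimensional growth also forces $\ell^2$ eigenvectors concentrated on the first few shells outside $I_\lambda$; I would simply point to those results rather than reprove them.

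The main obstacle is the reverse inclusion for the essential spectrum in the growing-shell case, i.e. showing $E\notin\spec_{\ess}(H_\lambda)$ when $E\notin\lambda\supp(\PP_v)$ and $|h_{E,\lambda}|>2$: one must control the Schur complement $P_n^*(H_\lambda-E)^{-1}_{\text{shell}}P_n$ uniformly in $n$ (including the randomness of $\Vv$ within each shell) to conclude that the effective Jacobi operator has a spectral gap around $E$ robust enough to survive on the full operator up to a compact error. This is where the ``one propagating channel'' machinery of the paper does the real work, and the estimate has to be uniform as the shell dimension $s_n\to\infty$; everything else is a Weyl-sequence construction plus Borel--Cantelli.
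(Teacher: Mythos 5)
Your overall route is the paper's route: Weyl sequences built through the one-propagating-channel/Jacobi reduction for the inclusion, and hyperbolicity of the reduced transfer matrices plus the equivalence $E\in\spec_\ess(H)\Leftrightarrow 0\in\spec_\ess(\Jj_E)$ (Proposition~\ref{prop-spec_ess}) for the exclusion. But there are two concrete gaps. First, the central probabilistic ingredient is missing: the transfer-matrix entry is the \emph{random} harmonic mean $a_{E,n}=\bigl(\tfrac1{s_n}\sum_j(E-\lambda v_{n,j})^{-1}\bigr)^{-1}$, and the quantity $h_{E,\lambda}$ is only its almost-sure limit. You need a law of large numbers with Borel--Cantelli (the paper's Lemma~\ref{lem-a-to-h}, via even-moment bounds $\EE(Y^{2m})=\Oo(s_n^{-m})$ and monotonicity in $E$ to upgrade to all $E$ simultaneously) to know that $a_{E,n}\to h_{E,\lambda}$ almost surely; this is precisely where the hypothesis $\liminf_n s_n/n^\alpha>0$ enters, and without it neither ``$|h_{E,\lambda}|\le 2$ gives elliptic behaviour on long stretches'' nor ``$|h_{E,\lambda}|>2$ gives $|a_{E,n}|>2$ for all large $n$, hence a gap of $\Jj_E^{(N)}$ at $0$'' can be justified. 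Your sketch treats $h_{E,\lambda}$ as if it were the actual matrix entry.

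Second, your ``as above'' argument for $\lambda\supp(\PP_v)\subset\spec_\ess(H_\lambda)$ breaks when the shells grow: if the window supporting the test function contains whole shells, the probability that all $s_n$ potential values there are $\varepsilon$-close to $w$ is of order $p^{s_n}$ with $p<1$, which is summable when $s_n\gtrsim n^\alpha$, so Borel--Cantelli gives only finitely many good windows. The paper's fix is to use the antisymmetric vectors $\psi_k=\delta_{n_k,1}-\delta_{n_k,2}$, which are annihilated by $\Aa_\bfs$ and require only \emph{two} favourable potential values per shell, so the success probability per shell is bounded below uniformly. Unless your ``finite window of vertices'' is explicitly meant to be a two-point set of this kind, the step fails as written. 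Finally, you correctly flag the exclusion $|h_{E,\lambda}|>2\Rightarrow E\notin\spec_\ess$ as the hard part, but the actual content — constructing $(H-E)^{-1}$ from $\Jj_E^{-1}$ and bounding $\|\psi_E\odot\Jj_E^{-1}(w)\|$ via the weighted norm $\|\cdot\|_{a_E}$ — is deferred rather than supplied, so the proposal does not stand on its own there.
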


\begin{remark} \label{rem:I_l}
In general, note that under assumption {\rm (A1)} for $E\not\in[ \lambda v_-\,,\, \lambda v_+]$ and $v\in\supp(\PP_v)\subset[v_-,v_+]$ the sign of $E-\lambda v \in [E-\lambda  v_+,E-\lambda  v_-]$ is determined and 
it is clear that $h_{E,\lambda}\in [E-\lambda  v_+,\,E-\lambda  v_-]$. Moreover, by convexity $|h_{E,\lambda}|\leq|E|$ and
$h_{E,\lambda}$ is strictly monotonically increasing in $E$. It is not difficult to verify $dh_{E,\lambda}/dE>1$. 
Thus, $I_\lambda$ generally consists of two intervals
$$
[-2,\lambda v_-)\cup(\lambda v_+,2]\subset
I_\lambda\,=\,(c'_{\lambda},\lambda v_-) \cup (\lambda v_+, c_\lambda)\,\subset\,(-2+\lambda v_-,\lambda v_-)\cup (\lambda v_+, 2+\lambda v_+)\,,
$$
where $h_{c'_\lambda,\lambda}=-2$ and $h_{c_\lambda,\lambda}=2$.
For large $\lambda$ these values $c'_\lambda<\lambda v_-$ and $c_\lambda > \lambda v_+$ may or may not exist in the sense that each of the intervals $(c'_{\lambda},\lambda v_-)$ and $(\lambda v_+, c_\lambda)$ may be empty. More precisely, if $\EE_v(v_+-v)^{-1}=\infty$  
then $\lim_{E\downarrow \lambda v_+} h_{E,\lambda}=0$ and for any $\lambda>0$, $(\lambda v_+,  c_\lambda)\neq\emptyset$. 
However, if $\EE_v(v_+-v)^{-1} <\infty$, then
 $I_\lambda\cap \RR_+$ is empty for large $\lambda$. Correspondingly, $I_\lambda\cap \RR_-$ is empty for large $\lambda$ if
 $\EE_v(v-v_-)^{-1} <\infty$ and it is never empty if $\EE_v(v-v_-)^{-1}=\infty$.
If $\supp(\PP_v)=[v_-,v_+]$, then the essential spectrum has no gap. However, if there is a gap in $\supp(\PP_v)$, then for any value $e$ in that gap and large enough $\lambda$ 
one will have $|h_{e\lambda,\lambda}|=|\lambda h_{e,1}|>2$. Hence, a gap in $\spec_\ess(H_\lambda)$ will open at some $\lambda$.
The intersection of these gaps with $[-2,2]+\lambda\supp(\PP_v)$ contains random eigenvalues and can be seen as a Lifshitz tail regime.
 For illustration, let us consider the following cases where $v_+=1$ and $v_-=-1$.
 \begin{enumerate}[{\rm (i)}]
  \item For the Bernoulli distribution $\PP_v=\frac12(\delta_{-1}+\delta_1)$ which satisfies {\rm (A1)} but not {\rm (A2)} ($\delta_x$ denotes the normalized point measure at $x$), we find
  \begin{align*}
  &I_\lambda\,=\,\left(-1-\sqrt{1+\lambda^2}\,,\,-\lambda \right)\,\cup\,\left(\lambda\,,\,1+\sqrt{1+\lambda^2} \right)\,,\quad
  h_{E,\lambda}\,=\,\frac{E^2-\lambda^2}{E}\\
  &\spec_{\ess}(H_\lambda)\,=\,\left[-1-\sqrt{1+\lambda^2}\,,\,1-\sqrt{1+\lambda^2} \right]\,\cup\,\left[1-\sqrt{1+\lambda^2}\,,\,1+\sqrt{1+\lambda^2} \right]
  \end{align*}
  \item For the uniform distribution $\PP_v(dv)=\frac12\, 1_{[-1,1]}(v)\,dv $, where $1_A(v)$ denotes the indicator function of the set $A$ and 
  $dv$ the Lebesgue measure, we find
 \begin{align*}
  I_\lambda\,=\,\left(-\lambda \frac{e^\lambda+1}{e^\lambda-1}\,,\,-\lambda\right)\,\cup\,\left(\lambda\,,\,\lambda \frac{e^\lambda+1}{e^\lambda-1}\,\right)\;,\qquad
  &h_{E,\lambda}\,=\,\frac{2\lambda}{\ln(\frac{E+\lambda}{E-\lambda})}\\
  \spec_{\ess}(H_\lambda)\,=\,\left[-\lambda \frac{e^\lambda+1}{e^\lambda-1}\,,\,\lambda \frac{e^\lambda+1}{e^\lambda-1}\,\right]\,.
  \end{align*}
  \item For $\PP_v(dv)= 1_{[-1,1]}(v)\,(1-|v|)\,dv$ and $E\not\in[-\lambda,\lambda]$ we find with $x:=\lambda/E \in (-1,1)$ that
  $$
  h_{E,\lambda}\,=\,\frac{\lambda\; x}{ (1+x) \ln\left(1+x\right)\,+\,\left(1-x \right) \ln\left(1-x\right) }\,.
  $$
  Therefore, $|h_{E,\lambda}|> \lambda / (2\ln(2))$ for $|E|>\lambda$, and $I_\lambda$ is empty for $\lambda\geq 4 \ln(2)$. For $0\leq \lambda < 4\ln(2)$ we find $I_\lambda=(-c_\lambda,-\lambda)\cup(\lambda,c_\lambda)$ and $\spec_\ess(H_\lambda)=[-c_\lambda,c_\lambda]$ and for $\lambda\geq 4\ln(2)$ we have $\spec_\ess(H_\lambda)=[-\lambda,\lambda]$.
  
  \item For general distributions satisfying {\rm (A1)} we find for $0<\lambda\leq 2+\frac{2\sigma^2}{2-\sigma^2}$ that
  \begin{align*}
  & \left\{E\in\RR_\pm\,:\,\lambda v_\pm < |E| \leq 2+\tfrac{\lambda^2\sigma^2}{10}\right\}\,\subset\,\\
  & \left\{E\in\RR_\pm\,:\,\lambda v_\pm < |E| \leq 1-\tfrac{\lambda}{2}+\sqrt{(1+\tfrac{\lambda}{2})^2+\lambda^2 \sigma^2}\right\} \,\subset\, I_\lambda\,.
  \end{align*}
 \end{enumerate}
To obtain part {\rm (iv)} let $X=E-\lambda v$, $h=h_{E,\lambda}$ $\EE=\EE_v$ and note that
$h=\EE(h^2/X)$. Hence, 
$$
|E|-|h|=|E-h|=|\EE(X-h)|=\left|\EE\left(\frac{(X-h)^2}{X}\right)\right| \,\geq\, \frac{\EE((X-h)^2)}{|E|+\lambda}\,=\,
\frac{\lambda^2 \sigma^2+(\EE(X-h))^2}{|E|+\lambda}\,.
$$
Thus, $|h_{E,\lambda}| < |E|-\frac{\lambda^2\sigma^2}{|E|+\lambda}$. 
Analyzing $|E|-\frac{\lambda^2\sigma^2}{|E|+\lambda}\leq 2$ one finds the second relation in {\rm (iv)}.
For the first relation note that $\sqrt{a^2+b^2}\geq a+ \tfrac{1}{2\sqrt{a^2+b^2}} b^2$ and 
$\sqrt{(1+\lambda/2)^2+\sigma^2\lambda^2}\leq 5$ for $\lambda\leq 2+\frac{2\sigma^2}{2-\sigma^2}$ and $\sigma^2\leq 1$.
Note that part (iv) shows the statement on $I_\lambda$ being non-empty in Theorem~\ref{th:main}.
\end{remark}

Let me give some overview of the proofs.
The operators considered in Theorem~\ref{th:main} have a special structure which allows an analysis through $2\times 2$ transfer matrices. 
We say that such operators have one propagating channel, for a precise definition see Section~\ref{sub:spectrum}.
In fact, one of the novelties in this work is the realization that many techniques from one-dimensional theory such as
subordinacy theory by Gilbert-Pearson and Kahn-Pearson \cite{GP, KP} and links between transfer matrices and spectrum as developed by
Kiselev, Last and Simon \cite{LaSi, KLS} can be translated to operators with one propagating channel.
The precise theorems are listed in Section~\ref{sub:spectrum} and details are carefully carried out in Appendix~\ref{app:spectrum} which shows the modifications
that have to be made compared to the Jacobi operator case. For instance, not for all real energies all the transfer matrices will be defined. The set of energies where some transfer matrix is not defined will be denoted by $B_\infty$ and in order to make a direct translation of the subordinacy theory, one needs to stay away from the closure of $B_\infty$. Moreover, some new technical estimates are needed which are completely trivial in the Jacobi case
(cf. Lemma~\ref{lem-sub-keyest} and the proof in Section~\ref{sec:key-estimate-sub}).
The kinetic part of $H_\lambda$ effectively averages the potentials in the $n$-th shell within the energy regions $I_\lambda$ which is expressed in terms of a harmonic mean of random variables in the transfer matrices (cf. \eqref{eq-harm-entry}).
Then one needs some estimates as done in Section~\ref{sec:keyestimate} and the
spectral analysis tools translated from the one-dimensional theory, to 
see that one can treat the problem analogue to random decaying potentials in one dimension as done in \cite{KLS}.
One important step is to show that it is in principle sufficient to consider the spectral measure at the roots $S_0$, cf. Theorem~\ref{th:sigma_0}.
Here, this is not completely trivial because unlike in the pure one dimensional case $\Ab_\bfs=\ZZ_+$ ($s_n=1$ for all $n$), the localized states at the roots are not necessarily cyclic for $H_\lambda$, cf. Remark~\ref{rem:purepoint}~(ii). Starting from identities between the Green's functions and formal solutions of the
eigenvalue equation (cf. Lemma~\ref{lem:gr}) one can still show that for the energies where all transfer matrices are defined it is sufficient to consider
the measure at a special vector supported on $S_0$.

\vspace{.2cm}

The paper is organized as follows. Section~\ref{sub:spectrum} lists all adapted theorems from the theory of one-dimensional Jacobi operators to the more
general setup of operators with one propagating channel, they are deterministic (no randomness enters) and may be interesting on their own.
The translation of subordinacy theory requires some new estimates (Lemma~\ref{lem-sub-keyest}) proved in Section~\ref{sec:key-estimate-sub}.
Details of the adaption of these theorems are carefully carried out in Appendix~\ref{app:spectrum}.

In Section~\ref{sec:spec-H} we prove Proposition~\ref{prop-spec-H} based on Proposition~\ref{prop-spec_ess} also proved there. 
Section~\ref{sec:keyestimate} summarizes some facts on harmonic means of random variables and in Section~\ref{sec:pruefer} we conclude proving Theorem~\ref{th:main}. There we combine
some of the theorems of Section~\ref{sub:spectrum} and techniques as in \cite[Section~8]{KLS}, particularly Theorem~\ref{th-lim-log(T)} where some more details are given in Appendix~\ref{app-estimate}.
Finally, in Section~\ref{sec:PDP} we prove Corollary~\ref{coro:main}.

\subsection{Some open questions}


In the set $\Ii_\lambda:=[\lambda v_-,\lambda v_+]$ the random variables $E-\lambda v(x)$ do not have a distinct sign. The entries in the transfer matrices  (cf. \eqref{eq-def-Tt}, \eqref{eq-def-psi-T}) will be harmonic means of such random variables (cf. \eqref{eq-harm-entry}). Depending on the energy and the distribution of $\PP_v$, these harmonic means may not have an expectation and also $h_{E,\lambda}$ does not need to exist. 
Therefore, the transfer matrices have a lot of randomness and one might expect localization in this region.
On the other hand, as $s_n\to\infty$ the set $B_\infty$ (cf. \eqref{eq-def-A-B}) where not all transfer matrices are defined is dense in 
the interior of $\lambda\supp(\PP_v)$. 
This sort of also reflects the fact that we have an eigenvalue at $0$ with infinite multiplicity for $\lambda=0$. For positive $\lambda$ these 'states' may start to resonate and one could imagine the formation of some delocalized states.
However, similar as in a recent paper by Aizenman, Shamis and Warzel on resonances and partly delocalized states on the complete graph {\rm \cite{AShW}}, these states may only delocalize in the $\ell^1$ but not $\ell^2$ sense meaning that one could have eigenvectors not lying in $\ell^1(\Ab_\bfs)$. 
A deeper analysis of these resonances might be very interesting. The spectral type may also depend on the distribution $\PP_v$ or be random,
particularly in the region $\Ii_\lambda\setminus \lambda\supp(\PP_v)$.

 Even though the set $I_\lambda$ may not be empty even for large $\lambda$ and so one may have some a.c. spectrum if $\sum_n s_n^{-1}<\infty$,
there still should be some sort of large disorder localization:
 \begin{conjecture}
 For any compact interval $[a,b]$ and any sequence $\bfs$, there is a $\lambda_0>0$ such that
 for $\lambda>\lambda_0$ the spectrum of the Anderson model $H_\lambda$ on $\Ab_\bfs$ is almost surely pure point in $[a,b]$.
 \end{conjecture}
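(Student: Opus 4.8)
\medskip
\noindent\emph{Towards the conjecture.}\quad
Throughout this sketch assume $\PP_v$ has a bounded density (so in particular {\rm (A2)} holds); for the Bernoulli case the question should be as hard as on $\ZZ^d$. For $\lambda$ large the interval $[a,b]$ lies inside $\Ii_\lambda=[\lambda v_-,\lambda v_+]$ (recall $v_-<0<v_+$ by {\rm (A1)}), which is exactly the regime in which the one-propagating-channel/subordinacy theory of Section~\ref{sub:spectrum} gives nothing: the exceptional set $B_\infty$ of energies at which some transfer matrix is undefined contains all the points $\lambda v(x)$ and is therefore dense in $\lambda\supp(\PP_v)$, so $\overline{B_\infty}\supseteq[a,b]$ and every criterion requiring one to stay away from $\overline{B_\infty}$ is vacuous on $[a,b]$. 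The plan is instead to run a direct fractional moment argument \cite{AM,AW}, carried out on the \emph{reduced} form of $H_\lambda$: eliminate, via the Schur complement already used to derive \eqref{eq-harm-entry}, the ``non-radial'' subspace $u_n^{\perp}$ of each shell, where $u_n:=\one_{S_n}/\sqrt{s_n}$, obtaining an effective, energy-dependent one-dimensional Jacobi recursion along the chain of radial directions $u_n$ in which the harmonic means $h^{(n)}_{E,\lambda}=s_n\big/\sum_{x\in S_n}(E-\lambda v(x))^{-1}$ occur in the $2\times2$ transfer matrices \eqref{eq-def-Tt}. The key structural point is that the inter-shell coupling in $\Ab_\bfs$ is the rank one operator $|u_{n+1}\rangle\langle u_n|$, so this elimination introduces no factor growing with the shell sizes $s_n$; this is precisely how the normalization of the edge weights renders the unbounded vertex degree harmless and is what makes a genuinely one-dimensional scheme applicable here.

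\medskip
First I would prove, in this reduced description, the large-disorder Green's function estimate $\EE\big(|g_\Lambda(0,n;E+i\epsilon)|^{s}\big)\le C\,\rho^{\,n}$ for some $s\in(0,1)$ and $\rho<1$, uniformly in the truncation $\Lambda$, in $\epsilon>0$ and in $E\in[a,b]$, once $\lambda>\lambda_0([a,b])$, where $g_\Lambda$ denotes the radial Green's function $\langle u_m,(H_\lambda-z)^{-1}u_n\rangle$ computed on the truncation $\Lambda$ with $z=E+i\epsilon$. The advantage of the reduced picture is that each shell is a \emph{single} site of this chain, so the decoupling factor picked up per shell carries no sum over its $s_n$ vertices: it is controlled by $\EE\big(|h^{(n)}_{E,\lambda}|^{-s}\big)=\lambda^{-s}\,\EE\big(\big|\tfrac1{s_n}\sum_{x\in S_n}(E/\lambda-v(x))^{-1}\big|^{s}\big)\le C\lambda^{-s}$, uniformly in $s_n$: for $E\in[a,b]$ and $\lambda$ large the point $E/\lambda$ lies in a fixed compact subinterval of $(v_-,v_+)$, each summand $(E/\lambda-v(x))^{-1}$ has finite $s$-th moment for $s<1$ (here the bounded density and the compact support {\rm (A1)} are used), and the shell average keeps that moment bounded. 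One also needs the standard a priori bound $\sup_{\Lambda,E,\epsilon}\EE(|g_\Lambda(m,n)|^{s})<\infty$; here $g_\Lambda(m,n)$ is a resolvent matrix element of $H_\lambda$ between the fixed vectors $u_m,u_n$, each $v(x)$ enters $H_\lambda$ as a rank one perturbation, and the compactly supported bounded density lets the usual argument go through. Iterating the decoupling along the chain gives $\rho=C\lambda^{-s}<1$, and, since no $s_n$ ever entered, the bound is uniform over all sequences $\bfs$.

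\medskip
Next I would transfer this back to the full Green's function: for $x\in S_m$ and $y\in S_n$ one reconstructs $\langle\delta_x,G_\Lambda\delta_y\rangle$ from $g_\Lambda$ and the within-shell resolvents $(E-\lambda v(x))^{-1}$ through the same Schur-complement identities, and taking $s$-th moments absorbs the resonant denominators because $\EE(|E-\lambda v(x)|^{-s})<\infty$; this yields $\sum_{y}\EE\big(|\langle\delta_x,G_\Lambda(E+i0)\delta_y\rangle|^{s}\big)$ decaying exponentially in $d(x,y)$, uniformly in $\Lambda$ and in $E\in[a,b]$. From here the conclusion follows by the standard consequences of fractional moment localization \cite{AM,AW} --- eigenfunction correlator bounds, absence of continuous spectrum, and pure point spectrum with exponentially decaying eigenfunctions throughout $[a,b]$. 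This part is configuration-wise, so the non-ergodicity of $H_\lambda$ causes no trouble, and the subordinacy theory --- vacuous on $[a,b]$, as noted --- is bypassed entirely.

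\medskip
The hard part will be the analysis of the effective on-shell energies $h^{(n)}_{E,\lambda}$, which behave nothing like a bounded i.i.d.\ potential: they have no moment of order $\geq 1$, they are not identically distributed across shells, they depend on the energy, and their poles and zeros --- the set $B_\infty$ --- are dense in $\lambda\supp(\PP_v)$. Consequently the a priori estimate and the comparison of finite- and infinite-volume Green's functions must be carried out with care, and a quantitative regularity of $\PP_v$ (bounded or H\"older density) will genuinely be needed. The conceptual danger is the resonance effect flagged via \cite{AShW}: the $s_n$ nearly degenerate sites of a shell can conspire so that $|h^{(n)}_{E,\lambda}|$, although for fixed $[a,b]$ and $\lambda\to\infty$ typically of order $\lambda$ (a Cauchy-like variable times $\lambda$), is occasionally small and hence back in the oscillatory regime $|h^{(n)}_{E,\lambda}|<2$; one must show that such rare small values cannot build up $\ell^2$-delocalized states, which is exactly what the bound $\EE(|h^{(n)}_{E,\lambda}|^{-s})=\Oo(\lambda^{-s})$ above is meant to ensure and is the estimate that will need the most care.
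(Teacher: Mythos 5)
This statement is not proved in the paper: it is stated explicitly as an open \emph{conjecture} (the paper only observes that it is trivially true when $0\notin\supp(\PP_v)$, because a spectral gap opens for large $\lambda$). So there is no proof to compare against, and the only question is whether your sketch settles the problem. It does not: it is a research programme whose decisive steps are asserted rather than proved, as you yourself concede. The central quantitative claim, $\EE\big(|h^{(n)}_{E,\lambda}|^{-s}\big)\le C\lambda^{-s}$ \emph{uniformly in} $s_n$, does not follow from the finiteness of $\EE\big(|E/\lambda-v|^{-s}\big)$: the elementary subadditivity bound for $s<1$ gives $\EE\big|\tfrac1{s_n}\sum_x(E/\lambda-v(x))^{-1}\big|^{s}\le s_n^{1-s}\,\EE|E/\lambda-v|^{-s}$, which diverges with $s_n$. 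A uniform bound would require a genuine stable-law concentration statement for averages of $s_n$ i.i.d.\ Cauchy-tailed variables (a uniform tail bound $\PP(|\tfrac1{s_n}\sum X|>t)\le C/t$); this is plausible but is precisely the analytic content that is missing. Moreover, even granting it, the fractional moment method needs more than a marginal moment bound on the effective potential $a_{E,n}$: it needs an a priori bound $\sup_{\Lambda,E,\epsilon}\EE|g_\Lambda(m,n)|^{s}<\infty$ and a decoupling/conditional-regularity estimate for $a_{E,n}$, which is a nonlinear function of all $s_n$ variables in the shell, not a single random variable with a bounded density. The reduced operator $\Jj_E$ is also energy dependent (one studies $0\in\spec(\Jj_E)$), so the standard FMM package (spectral averaging, Simon--Wolff, eigenfunction correlators) does not apply off the shelf; the final step from fractional-moment decay to almost sure pure point spectrum of the full $H_\lambda$ on $[a,b]\subset\Ii_\lambda$ is exactly where the paper's own machinery is restricted (Theorem~\ref{th:subordinacy} requires staying away from $\overline{B_\infty}$).

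Two smaller corrections. First, $B_\infty$ does not contain the points $\lambda v(x)$; the sets $A_n$ consist of the zeros of $E\mapsto\phi_n^*(E-\lambda V_n)^{-1}\phi_n$, which are interlaced \emph{between} the eigenvalues $\lambda v_{n,j}$ (the latter are poles). The density of $B_\infty$ in $\lambda\supp(\PP_v)$ holds when $s_n\to\infty$, but the conjecture also covers bounded sequences $\bfs$, where $B_\infty$ need not be dense and the model is a random block-Jacobi operator for which more classical localization arguments are available. Second, the paper's framework is not entirely vacuous on $[a,b]$: Theorem~\ref{th:purepoint}~(i) and the Simon--Wolff step only require, for Lebesgue-a.e.\ $E$ (hence $E\notin B_\infty$, which is countable), an $\ell^2$ subordinate solution of \eqref{eq-eig-u}; it is only the subordinacy \emph{characterization} that needs $E\notin\overline{B_\infty}$. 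An alternative and arguably more natural route is therefore to prove almost sure exponential (or sufficiently fast) growth of the products $T_E(n)$ with the heavy-tailed entries $a_{E,n}$ for a.e.\ $E\in\Ii_\lambda$ and invoke Theorem~\ref{th:purepoint}~(i); but this faces the same unresolved difficulty of controlling products of matrices whose entries have no first moment.
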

 This conjecture is intertwined with the comment above as for any compact set and large enough $\lambda$ one has $[a,b]\subset \Ii_\lambda$.
This conjecture is trivially true if $0\not\in\supp(\PP_v)$ because of a gap in the spectrum for large $\lambda$.
In terms of the above theorems one may also conjecture the following:
  \begin{conjecture}
 If  $\sum_n s_n^{-1}=\infty$, then the spectrum of $H_\lambda=\Aa_\bfs+\lambda \Vv$ is almost surely singular at any disorder.  
 \end{conjecture}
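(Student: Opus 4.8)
The plan is to reduce the statement to a growth property of the $2\times 2$ transfer matrices supplied by the one-propagating-channel formalism of Section~\ref{sub:spectrum}, exactly as in the proof of Theorem~\ref{th:main}, and to regard the conjecture as the sharp converse of Theorem~\ref{th:main}(i). ``Almost surely singular'' means precisely $\spec_{\mathrm{ac}}(H_\lambda)=\emptyset$ almost surely, and by the adapted Last--Simon bound of Section~\ref{sub:spectrum} it suffices to prove that for each fixed $\lambda>0$ and Lebesgue-a.e.\ $E$ one has $\liminf_n\|T_n(E)\|=\infty$ almost surely, \emph{provided} $E$ stays off the closure of the exceptional set $B_\infty$ of \eqref{eq-def-A-B}; Fubini then removes the ``a.e.\ $E$''. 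I would split the essential spectrum into the region $G_\lambda:=\{E\notin\overline{B_\infty}:\ h_{E,\lambda}\text{ exists and }|h_{E,\lambda}|\neq 2\}\supset I_\lambda$, where the transfer matrices are well behaved, and its complement, which is essentially $\lambda\supp\PP_v$ (plus possible spectral gaps, where there is nothing to prove).

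On $G_\lambda$ I expect the argument behind Theorem~\ref{th:main}(ii) to carry over with only cosmetic changes. By \eqref{eq-harm-entry}, $T_n(E)$ is a fixed matrix determined by $h_{E,\lambda}$ perturbed by a random term with conditional mean $O(1/s_n)$ and conditional variance $\asymp\sigma^2_{E,\lambda}/s_n$, with the constant $\gamma_{E,\lambda}$ of \eqref{eq-def-gamma} governing the elliptic sub-case $|h_{E,\lambda}|<2$. Passing to Pr\"ufer variables as in \cite[Section~8]{KLS} (cf.\ Theorem~\ref{th-lim-log(T)} and Appendix~\ref{app-estimate}), $\log\|T_n(E)\|$ becomes, up to bounded fluctuations, a martingale with drift whose quadratic variation at step $n$ is $\asymp 1/s_n$; once $\sum_n s_n^{-1}=\infty$ this quadratic variation diverges, and the usual almost-sure dichotomy for such processes forces $\|T_n(E)\|\to\infty$. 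The one point to check is that the \emph{uniform} $d$-dimensional hypothesis in Theorem~\ref{th:main}(ii) is inessential here: it only served to pin down the sharp decay rate, whereas mere divergence needs only $\sum_n s_n^{-1}=\infty$ and local boundedness of $\gamma_{E,\lambda}$ on $G_\lambda$. The boundary $|h_{E,\lambda}|\to 2$, where $\gamma_{E,\lambda}\to\infty$ and the elliptic picture degenerates, requires a short separate treatment.

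The real difficulty, and where I expect the main obstacle to lie, is the region $\lambda\,\mathrm{int}(\supp\PP_v)$. There $h_{E,\lambda}$ need not exist, the harmonic means $s_n/\sum_{x\in S_n}(E-\lambda v(x))^{-1}$ that enter the transfer matrices need not obey a law of large numbers and can be heavy-tailed, and --- worst of all --- $B_\infty$ is \emph{dense} there, so $\overline{B_\infty}$ has positive Lebesgue measure and the transfer-matrix-to-spectrum dictionary of Section~\ref{sub:spectrum} no longer applies off a null set. One route keeps transfer matrices: since $B_\infty$ is only countable (for each shell one discards the finitely many energies making the relevant Schur complement singular), for a.e.\ such $E$ all $T_n(E)$ are defined; one would then prove $\|T_n(E)\|\to\infty$ by a Furstenberg/Kesten-type positivity argument for $\prod_n T_n(E)$ --- non-standard because the factors are not identically distributed (they depend on $s_n$) and carry heavy-tailed entries --- and bypass the density of $B_\infty$ by a hands-on Gilbert--Pearson boundary-value estimate in place of the packaged subordinacy theorem. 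The alternative route drops transfer matrices altogether: a fractional-moment bound $\EE\big(|G(x,y;E+i0)|^{s}\big)\le C\,e^{-c\,d(x,y)}$ in the style of \cite{AM}, adapted to the unbounded vertex degree and the normalized edge weights; this would even give pure point spectrum in the region, but it must contend with the resonance mechanism flagged via \cite{AShW}: the infinitely degenerate zero eigenvalue of $\Aa_\bfs$ could, for $\lambda>0$, spread into states delocalized in $\ell^1(\Ab_\bfs)$ but not in $\ell^2$, so one must rule out that these generate an absolutely continuous component, and the fractional-moment estimate may well fail inside $\lambda\supp\PP_v$ anyway. For these reasons the crux, in my view, is a treatment of $\prod_n T_n(E)$ uniform in $E$ across $\lambda\,\mathrm{int}(\supp\PP_v)$ together with a substitute for subordinacy theory valid on the dense set $\overline{B_\infty}$.
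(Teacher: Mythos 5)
The statement you are addressing is not proved in the paper at all: it appears there as an open \emph{conjecture} in the subsection ``Some open questions'', and the surrounding discussion is precisely the author's explanation of why no proof is available. Your proposal is, accordingly, a research plan rather than a proof, and you candidly leave the decisive obstruction unresolved: the region $\lambda\,\mathrm{int}(\supp\PP_v)$, where $h_{E,\lambda}$ need not exist, the harmonic means $a_{E,n}$ of \eqref{eq-harm-entry} can be heavy-tailed, $B_\infty$ is dense (so $\overline{B_\infty}$ can have full relative Lebesgue measure there and the subordinacy machinery of Theorem~\ref{th:subordinacy} is unavailable), and the resonance mechanism of \cite{AShW} must be excluded. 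This is exactly the list of difficulties the paper itself records as the reason the statement remains conjectural; naming two possible ``routes'' without carrying either out establishes nothing new.

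Moreover, even the part you describe as requiring ``only cosmetic changes'' is not covered by the paper's arguments under the bare hypothesis $\sum_n s_n^{-1}=\infty$. The divergence $\log\|T_E(n)\|\sim\gamma_{E,\lambda}\sum_{j\le n}s_j^{-1}$ in Theorem~\ref{th-lim-log(T)} is not a consequence of a martingale dichotomy alone: after the Taylor expansion of $\log R_{n+1}-\log R_n$ the drift contains the oscillatory sums \eqref{eq-oscil-terms}, and proving that these are $o\bigl(\sum_{j\le n}s_j^{-1}\bigr)$ uses the \emph{uniform} growth assumption $s_n/n^{\alpha}\to c$ in an essential way (Lemma~\ref{lem-ql} together with the monotonicity and concavity estimates of Appendix~\ref{app-estimate}). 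For an irregular sequence with $\sum_n s_n^{-1}=\infty$ these sums could a priori be comparable to the positive drift, so the assertion that ``mere divergence needs only $\sum_n s_n^{-1}=\infty$'' is itself an unproved claim, not a corollary of the paper's estimates. In short, the conjecture is open; your proposal correctly locates the difficulties, in agreement with the paper's own assessment, but closes none of them.
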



\section{Operators with one propagating channel \label{sub:spectrum}}

Now let us examine the structure of the operator $\Aa_\bfs+\lambda\Vv$ in more detail.
We will use the equivalence $\ell^2(\Ab_\bfs)=\bigoplus_n \ell^2(S_n)\cong \bigoplus_n \CC^{s_n}$ and can therefore write
$\psi=\bigoplus_{n=0}^\infty \psi(n)$ where $\psi(n)=(\psi(n,1),\ldots,\psi(n,s_n))^\top \in \CC^{s_n}$. The transpose shall emphasize that we will consider $\psi(n)$ as a column vector and the pairs $(n,j)$, $n\in\ZZ_+, j=1,\ldots,s_n$ denote the vertices in $\Ab_\bfs$.

Define $D_n\in\Mat(s_n\times s_{n-1})$ by $(D_n)_{jk}=\langle \delta_{n,j}\,|\,\Aa_\bfs\,\delta_{n-1,k}\rangle$
where $\delta_{n,j}$ is the normalized $\ell^2(\Ab_\bfs)$ vector with entry one at $(n,j)$ and entry zero on all other vertices and 
$\langle\cdot |\cdot\rangle$ denotes the scalar product.
Then, as each vertex in $S_n$ is only connected to vertices in $S_{n\pm 1}$, we find $(\Aa_\bfs \psi)(n)=D_{n+1}^*\psi(n+1)+D_n \psi(n-1)$.
 Note that
\begin{equation}\label{eq-def-phi}
D_n\,=\,\frac{1}{\sqrt{s_n s_{n-1}}} \pmat{1 & \cdots & 1\\ \vdots & & \vdots \\ 1 & \cdots & 1}\,=\,
\phi_n \phi_{n-1}^*\qtx{where} \phi_n=\frac{1}{\sqrt{s_n}} \;\pmat{1\\\vdots\\1}\,\in\, \CC^{s_n}\,,
\end{equation}
and the term $\phi_n\phi_{n-1}^*$ has to be understood as a matrix product of a colum vector with a row vector.
Therefore, the Anderson model can be written as
\begin{equation}\label{eq-H_lb}
(H_\lambda \psi)(n)\,=\,\phi_n (\phi_{n+1}^*\psi(n+1)+\phi_{n-1}^*\psi(n-1))+\lambda\,V_n \psi(n)
\end{equation}
where $V_n=\diag(v_{n,1},\ldots,v_{n,s_n})$
and the $v_{n,j}$ are real, independent identically distributed random variables with distribution $\PP_v$. 
Note, as the $\phi_n$ are vectors, the expression $\phi_n^* \psi(n)$ is the standard scalar product between $\phi_n$ and $\psi(n)$
in $\CC^{s_n}$. When necessary we will consider the $v_{n,j}$ as random variables on an abstract probability space $(\Omega,\Aa,\PP)$.
Expectations with respect to $\PP$ will be denoted by $\EE$, so $\EE(v_{n,j})=0,\,\EE(v_{n,j}^2)=\sigma^2$.
At $0$ we have Dirichlet boundary conditions, i.e. formally $s_{-1}=1,\,\psi(-1)=0$ and $\phi_{-1}$ is any number.

We will use some generalizations of well-known one-dimensional transfer matrix techniques and list the
corresponding theorems here. For the proofs one has to go through the one-dimensional theory and adapt the proofs and results step by step.
As the situation here is different 
from the pure one-dimensional case we give the details in Appendix~\ref{app:spectrum}.

Because of the more general nature of these theorems we drop the coupling constant $\lambda$ and let $V_n\in \Her(s_n)$ be general Hermitian matrices and $\phi_n\in\CC^{s_n}$
be any sequence of {\bf non-zero} vectors such that the operator
\begin{equation} \label{eq-H-prop}
(H\psi )(n)\,=\, \phi_n\left( \phi_{n+1}^* \psi(n+1)\,+\,\phi_{n-1}^* \psi(n-1) \right)\,+\, V_n \psi(n)\;
\end{equation}
acting on $\psi=\bigoplus_{n\geq 0} \psi(n)\,\in\,\bigoplus_{n\geq 0} \CC^{s_n} \cong \bigoplus_{n\geq 0} \ell^2(S_n)=\ell^2(\Ab_\bfs)$
is uniquely self-adjoint and 
\begin{equation}\label{eq-def-D0}
\Dd_0=\Big\{\psi=\bigoplus_{n\geq 0}\psi(n) \in \bigoplus_{n\geq 0} \CC^{s_n}\,:\, \psi(n)=0\;\text{for all but finitely many}\,n\Big\}
\end{equation}
is a core. By Remark~\ref{rem:selfad} this condition is satisfied if $\sum_{n=0}^\infty \|\phi_n\phi_{n+1}^*\|^{-1}=\infty\;.$

Now for $n\in\ZZ_+$ let $\Phi_n=\bigoplus_k\Phi_n(k) \in \ell^2(\Ab_\bfs)$ be defined by
\begin{equation}\label{eq-def-Phi}
\Phi_n=P_n\phi_n \qtx{i.e.} \Phi_n(n)=\phi_n\qtx{and} \Phi_n(k)=\nul\quad\text{for\;} n\neq k\;,
\end{equation}
where as above $P_n$ is the canonical injection of $\CC^{s_n}\cong\ell^2(S_n)$ into $\bigoplus_k\CC^{s_k} \cong \ell^2(\Ab_\bfs)$.
In the direct sum notation used above this means $\Phi_n = \bigoplus_{k=0}^{n-1} \nul \oplus \phi_n\oplus \bigoplus_{k>n} \nul$.

\begin{defini}
 A self-adjoint operator of the form \eqref{eq-H-prop} is said to have one propagating channel defined by the sequence of the vectors $\phi_n$.
\end{defini}
Let us explain this notion. Consider the kinetic part $H_0$ (all $V_n$ equal zero), then the modes of the $n$-th shell connected to $n-1$ shell 
are given by $\ran(P_n^* H_0 P_{n-1})$ and the modes of the $n$-th shell connecting the $n$-th with the $n+1$-st shell are given by $\ran(P_n^* H_0 P_{n+1})$.
In this case, both are given by the one-dimensional space spanned by $\phi_n$, i.e. these modes propagate through the $n$-th shell. 
The sequence $(\phi_n)_n$ forms a channel through which quantum waves can travel.
An obvious generalization is to have an operator as in \eqref{eq-H-prop} where the $\phi_n$ are $s_n \times k$ matrices of rank $k$. 
Then $\ran(P_n^* H_0 P_{n-1})=\ran(P_n^* H_0 P_{n+1})$ would be always a $k$-dimensional space and one would have $k$ propagating channels.

Let us start the analysis with the following trivial facts. 
\begin{prop}\label{prop-spec-decom}
Let $\VV_n\subset\CC^{s_n} \cong \ell^2(S_n)$ be the cyclic space of $\phi_n$ w.r.t. to $V_n$, i.e. $\VV_n$ is the span of
$V_n^k\phi_n$ for $k=0,\ldots,s_n-1$. Then one has the following:
\begin{enumerate}[{\rm (i)}]
 \item The direct $\ell^2$ sum $\VV:=\bigoplus_{n\geq 0} \VV_n\subset\ell^2(\Ab_\bfs)$ equals the cyclic space generated by the family $(\Phi_n)_{n\geq 0}$.
One has $\VV=\ell^2(\Ab_\bfs)$ if and only if $\phi_n$ is a cyclic vector for $V_n$ for all $n\in\NN$.
 \item $H$ leaves the space $\VV$ and the orthogonal components in $\ell^2(S_n)$, i.e. $\VV^\perp\cap \ell^2(S_n)\cong\VV_n^\perp$ invariant. In particular,
 the spectrum $\spec(H)=\spec(H|\VV)\cup \bigcup_n \spec(H|\VV_n^\perp)$ where $\spec(H|\VV_n^\perp)$ consists of finitely many eigenvalues with corresponding eigenvectors in $\ell^2(S_n)$.
\end{enumerate}
\end{prop}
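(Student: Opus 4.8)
The plan is to establish part (ii) together with the identification $\VV=\bigoplus_n\VV_n$ first, and then to deduce part (i) from the fact that $\VV$ reduces $H$. First I would note that, since $\VV=\bigoplus_n\VV_n$ is an orthogonal sum taken shell by shell inside $\ell^2(\Ab_\bfs)=\bigoplus_n\ell^2(S_n)$, the orthogonal projection $P_\VV$ onto $\VV$ acts shellwise, $(P_\VV\psi)(n)=P_{\VV_n}\psi(n)$ with $P_{\VV_n}$ the orthogonal projection of $\CC^{s_n}$ onto $\VV_n$; in particular $P_\VV$ maps $\Dd_0$ into itself. Two elementary facts about $\VV_n$ enter: it is $V_n$-invariant by construction and $V_n$ is Hermitian, so $\VV_n$ (and $\VV_n^\perp$) reduces $V_n$, i.e.\ $P_{\VV_n}V_n=V_nP_{\VV_n}$; and $\phi_n\in\VV_n$, so $P_{\VV_n}\phi_n=\phi_n$ and hence $\phi_n^*P_{\VV_n}=\phi_n^*$. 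Inserting these into \eqref{eq-H-prop} gives, for $\psi\in\Dd_0$, the identity $P_\VV H\psi=HP_\VV\psi$: in $(H\psi)(n)$ the off-diagonal terms $\phi_n(\phi_{n\pm1}^*\psi(n\pm1))$ are scalar multiples of $\phi_n\in\VV_n$ and see $\psi(n\pm1)$ only through its component along $\phi_{n\pm1}\in\VV_{n\pm1}$, while $V_n\psi(n)$ commutes past $P_{\VV_n}$. Since $\Dd_0$ is a core preserved by $P_\VV$, a standard closure argument upgrades this to strong commutation of $P_\VV$ with $H$, so $\VV$ and $\VV^\perp=\bigoplus_n\VV_n^\perp$ both reduce $H$.

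Next, on $\VV^\perp$ the hopping part of $H$ vanishes: if $\psi=P_nw$ with $w\in\VV_n^\perp$, then $\phi_n^*w=0$ because $\phi_n\in\VV_n$, so $H\psi=P_nV_nw$ with $V_nw\in\VV_n^\perp$. Hence $H$ leaves each finite-dimensional space $\VV^\perp\cap\ell^2(S_n)\cong\VV_n^\perp$ invariant and acts there as the Hermitian matrix $V_n|_{\VV_n^\perp}$, whose spectrum is a finite set of real eigenvalues with eigenvectors supported on $S_n$. The claimed decomposition $\spec(H)=\spec(H|\VV)\cup\bigcup_n\spec(H|\VV_n^\perp)$ is then the usual statement on the spectrum of an orthogonal direct sum of reducing pieces (with the only caveat that in general one should pass to the closure of $\bigcup_n\spec(H|\VV_n^\perp)$; here each such spectrum sits inside the eigenvalue set of the corresponding $V_m$). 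This settles (ii), and it also gives the last sentence of (i): $\VV=\ell^2(\Ab_\bfs)$ exactly when $\VV_n=\CC^{s_n}$ for every $n$, i.e.\ when every $\phi_n$ is cyclic for $V_n$.

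It remains to identify $\VV$ with the cyclic subspace $\Cc$ of the family $(\Phi_n)_{n\ge0}$. One inclusion is immediate: $\VV$ reduces $H$ and contains each $\Phi_n=P_n\phi_n$ (since $\phi_n\in\VV_n$), hence it contains $f(H)\Phi_n$ for all bounded Borel $f$, so $\Cc\subseteq\VV$. For $\VV\subseteq\Cc$ it suffices to show $P_nV_n^k\phi_n\in\Cc$ for all $n$ and all $k\ge0$, because $\VV_n=\mathrm{span}\{V_n^k\phi_n:0\le k<s_n\}$ by Cayley--Hamilton. I would argue by induction on $k$: the case $k=0$ is $\Phi_n\in\Cc$; for the step, applying $H$ to $P_nV_n^k\phi_n\in\Cc$ and reading off \eqref{eq-H-prop} shell by shell yields
\[
H\bigl(P_nV_n^k\phi_n\bigr)\;=\;P_nV_n^{k+1}\phi_n\;+\;\bigl(\phi_n^*V_n^k\phi_n\bigr)\bigl(\Phi_{n+1}+\Phi_{n-1}\bigr)\,,
\]
with the Dirichlet convention $\Phi_{-1}=0$; since the left-hand side lies in $\Cc$ by $H$-invariance and $\Phi_{n\pm1}\in\Cc$, so does $P_nV_n^{k+1}\phi_n$. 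This proves $\VV\subseteq\Cc$ and finishes (i).

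The only genuinely delicate point is the domain bookkeeping in the first step, namely turning the formal relation $P_\VV H=HP_\VV$ on $\Dd_0$ into genuine strong commutation with the a priori unbounded self-adjoint $H$; this is exactly why I would phrase it as a check on the core $\Dd_0$ and invoke self-adjointness. One should also be mildly careful in stating the spectral decomposition about whether a closure is needed on $\bigcup_n\spec(H|\VV_n^\perp)$. Everything else is a bookkeeping computation built on the two identities $\phi_n\in\VV_n$ and $P_{\VV_n}V_n=V_nP_{\VV_n}$.
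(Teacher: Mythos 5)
Your proof is correct. The paper itself offers no proof of this proposition — it introduces it with "Let us start the analysis with the following trivial facts" and never returns to it — so there is nothing to compare against; your write-up (shellwise commutation of $P_\VV$ with $H$ checked on the core $\Dd_0$, the vanishing of the hopping terms on $\VV_n^\perp$, and the induction $H(P_nV_n^k\phi_n)=P_nV_n^{k+1}\phi_n+(\phi_n^*V_n^k\phi_n)(\Phi_{n+1}+\Phi_{n-1})$ to get $\VV\subseteq\Cc$) is exactly the verification the author is leaving implicit. Your caveat that $\bigcup_n\spec(H|\VV_n^\perp)$ may need a closure to make the stated spectral decomposition literally true is a fair and correct observation about a minor imprecision in the statement.
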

If $V_n=\nul$ and $\|\phi_n\|=1$ for all $n$, then we find $\VV_n=\phi_n\CC$ and $H|\VV$ is isomorphic to the adjacency operator on
$\ZZ_+$ with pure a.c. spectrum on $[-2,2]$.  This shows Proposition~\ref{prop-H0}.

The eigenvalue equation $H\psi=z\psi$ with $\psi(n)\in\VV_n$ can be written as
\begin{equation}\label{eq-eig-psi}
(z-V_n)\,\psi(n)\,=\, \phi_n \left( \phi_{n+1}^* \psi(n+1) + \phi_{n-1}^* \psi(n-1)\right)\,.
\end{equation}
Let us define
$u_n=u_n(\psi):=\phi_n^* \psi(n)\,\in\, \CC$ and let $z \not\in \spec(V_n|\VV_n)$ which is always the case for non-real energies.
Then $(z-V_n)^{-1} \phi_n \in \VV_n$ is well defined, even if $z\in\spec(V_n|\VV_n^\perp)$. 
For a solution $\psi$ of \eqref{eq-eig-psi} with $z\not\in\spec(V_n|\VV_n)$,  $n\geq1$, we obtain for $u_n=u_n(\psi)$ that
\begin{equation}\label{eq-eig-u}
 u_n\,=\,(\phi_n^* (z-V_n)^{-1}\phi_n)\,(u_{n+1}+u_{n-1})\;.
\end{equation}
Assuming $\phi_n^*(z-V_n)^{-1} \phi_n$ is not zero, this can be rewritten as
\begin{equation} \label{eq-def-Tt}
 \pmat{u_{n+1} \\ u_n}\,=\,T_{z,n}\pmat{u_n \\ u_{n-1}}\qtx{where}
 T_{z,n}:=
 \pmat{\left(\phi_n^* (z-V_n)^{-1} \phi_n\right)^{-1} & -1 \\ 1 & 0 }\;.
\end{equation}
The $T_{z,n}$ are the $2\times 2$ transfer matrices at stage $n$ associated to the operator $H$.
By this equation for $n=0$ we may also define $u_{-1}$ for such a solution.
Again, for complex $z\not\in\RR$, $\phi_n^* (z-V_n)^{-1} \phi_n$ exists and is invertible as the imaginary part will be negative.
For $\alpha\in\VV_n, V_n\alpha=E\alpha$, multiplying \eqref{eq-eig-psi} with $\alpha^*$ from the left gives
$ 0\,=\,\alpha^*\phi_n (u_{n+1}+u_{n-1})$. Therefore, $u_{n+1}=-u_{n-1}$ for a  solution if $E\in\spec(V_n|\VV_n)$ (then $\alpha\in\VV_n$, so $\alpha^*\phi_n\neq 0$). Hence, define
\begin{equation}\label{eq-def-T_E}
 T_{E,n}=\pmat{0&-1\\1&0}\qquad\text{if $E\,\in\,\spec(V_n|\VV_n)$.}
\end{equation}
In fact, \eqref{eq-def-T_E} extends $z\mapsto T_{z,n}$ holomorphically to $z\in\spec(V_n|\VV_n)$. 
Furthermore, let us define
\begin{equation}\label{eq-def-A-B}
 A_n\,:=\,\{E\in\RR\,: \,\phi_n^*(E-V_n)^{-1} \phi_n=0\}\,,\quad
 B_n\,:=\,\bigcup_{k=0}^{n-1} A_k\,,\quad
 B_\infty\,:=\,\bigcup_{k=0}^\infty A_k\;.
\end{equation}
So $A_n$ is exactly the finite set where $T_{E,n}$ is not defined and consists of $\dim(\VV_n)-1$ points interlaced between the eigenvalues\footnote{Note that by cyclicity, the eigenvalues of $V_n|\VV_n$ are simple.} of $V_n|\VV_n$.

For any complex $z\not\in B_n$ we can define the transfer matrix $T_z(n)$ from $0$ to $n$ by
\begin{equation}
 T_z(n)\,:=\, T_{z,n-1} \,\cdots\, T_{z,1}T_{z,0}\;,\qtx{then}  
 \pmat{u_{n} \\ u_{n-1}}\,=\,T_{z}(n)\pmat{u_0 \\ u_{-1}}\;.
\end{equation}
For convenience we also define
\begin{equation}\label{eq-def-a-psi}
a_{z,n}:= \frac{1}{\phi_n^*(z-V_n)^{-1} \phi_n}\;,\qquad 
\psi_{z,n}:= \frac{(z-V_n)^{-1}\phi_n}{\phi_n^*(z-V_n)^{-1} \phi_n}.
\end{equation}
Note that
\begin{equation}\label{eq-rel-abpsi}
 \im(a_{z,n})\,/\,\im(z)\,=\,\|\psi_{z,n}\|^2\,\geq\,\left|\frac{d}{dz} a_{z,n} \right| \,\geq\,\frac{1}{\|\phi_n\|^2}\qtx{and} 
 \|\psi_{E,n}\|^2\,=\,\frac{d}{dE} a_{E,n}\;.
\end{equation}
As above, if $z=E\in \spec(V_n|\VV_n)$ then we let $a_{E,n}=0$ and $\psi_{E,n}$ can be defined by analytic extension
leading to  $\psi_{E,n}=\alpha / (\phi_n^* \alpha)$ where $\alpha$ is a unit eigenvector for the eigenvalue $E$ of $V_n|\VV_n$
(cf. proof of Lemma~\ref{lem:msr}~(ii)). 
Hence, $a_{z,n}$ and $\psi_{z,n}$ are defined for $z\not\in A_n$.

First let us consider the spectrum as a set. Following the transfer matrices let us define the corresponding Jacobi operator $\Jj_z$ on $\ell^2(\ZZ_+)$ 
for any $z\not\in B_\infty$ by
\begin{equation}\label{eq-def-Jj_z}
 (\Jj_z u)_n\,=\,u_{n+1}\,+\,u_{n-1}\,-\,a_{z,n}\,u_n.
\end{equation}
Then $\Jj_z$ has the same transfer matrices at energy $0$ and $\Jj_E$ is self-adjoint for real energies $E\not\in B_\infty$.
\begin{prop}\label{prop-spec_ess}
  Assume $\sup_n \|\phi_n\|<\infty$ and $E\not\in B_\infty$. Then, $E\in\spec(\Jj_E) \,\Rightarrow\,E\in\spec(H)$ and 
  $E\in\spec_\ess(\Jj_E)\,\Rightarrow\,E\in\spec_\ess(H)$. If moreover $E\not\in \spec(\Vv)$ where $\Vv=\bigoplus_{n=0}^\infty V_n$, then we have equivalence:
  $E\in \spec(H) \;\Leftrightarrow\; 0\in\spec(\Jj_E)$ and
$E\in \spec_{\ess}(H) \;\Leftrightarrow\; 0\in\spec_{\ess}(\Jj_E)$.
\end{prop}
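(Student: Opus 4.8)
The plan is to reduce the whole statement to a single \emph{lifting map} between $\ell^2(\ZZ_+)$ and the channel, together with the eigenfunction/Green's--function dictionary of Lemma~\ref{lem:gr}; this mirrors the one-dimensional fact that the spectrum of a Jacobi operator is governed by its transfer matrices. For $E\notin B_\infty$ I would define, on finitely supported sequences $u\in\ell^2(\ZZ_+)$, the map $(\iota_E u)(n):=u_n\,\psi_{E,n}\in\VV_n$, so that $\phi_n^*(\iota_E u)(n)=u_n$; it is well defined since $E\notin A_n$ for every $n$. The first step is the key algebraic identity
\begin{equation*}
\big((H-E)\,\iota_E u\big)(n)\;=\;\phi_n\,\big(\Jj_E u\big)_n\,,\qquad n\geq 0,
\end{equation*}
(with the Dirichlet convention $u_{-1}:=0$ built into the $n=0$ line), which follows in one line from \eqref{eq-H-prop} using $\phi_n^*\psi_{E,n}=1$ and $(V_n-E)\psi_{E,n}=-a_{E,n}\phi_n$. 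Combined with $\|\psi_{E,n}\|^2=\tfrac{d}{dE}a_{E,n}\geq\|\phi_n\|^{-2}$ from \eqref{eq-rel-abpsi} and with $M:=\sup_n\|\phi_n\|<\infty$, this gives the two-sided bounds $\|\iota_E u\|\geq M^{-1}\|u\|$ and $\|(H-E)\,\iota_E u\|\leq M\,\|\Jj_E u\|$. Note that the relevant correspondence is between $E\in\spec(H)$ and $0\in\spec(\Jj_E)$, matching the identity of transfer matrices of $\Jj_E$ at energy $0$ with those of $H$ at energy $E$.

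For the ``$\Leftarrow$'' parts — concretely $0\in\spec(\Jj_E)\Rightarrow E\in\spec(H)$ and its essential-spectrum analogue, for which $E\notin\spec(\Vv)$ is \emph{not} needed — I would argue purely with these bounds. Since the off-diagonal entries of $\Jj_E$ equal $1$, it is in the limit-point case, so finitely supported sequences form a core and deleting finitely many coordinates does not change $\spec_\ess(\Jj_E)$. Hence from $0\in\spec(\Jj_E)$ one gets finitely supported unit vectors $u^{(k)}$ with $\|\Jj_E u^{(k)}\|\to 0$, and from $0\in\spec_\ess(\Jj_E)$ such $u^{(k)}$ with pairwise disjoint supports running off to infinity. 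Pushing them through $\iota_E$ and normalising yields approximate eigenvectors of $H$ at $E$ that are, in the essential-spectrum case, pairwise orthogonal; so $E\in\spec(H)$, resp.\ $E\in\spec_\ess(H)$.

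For the converse I would use $E\notin\spec(\Vv)$, with $\delta:=\mathrm{dist}(E,\spec(\Vv))>0$. By Proposition~\ref{prop-spec-decom} the part of $H$ in $\VV^\perp$ has spectrum inside $\spec(\Vv)$, so $(H|\VV^\perp-E)^{-1}$ is bounded by $\delta^{-1}$; thus any (singular) Weyl sequence for $H$ at $E$ may be replaced by one lying in $\VV$. For $\varphi\in\VV$ set $u_n:=\phi_n^*\varphi(n)$ and $f:=(H-E)\varphi$; solving $((H-E)\varphi)(n)=f(n)$ for $\varphi(n)$ (the inhomogeneous form of \eqref{eq-eig-psi}) and applying $\phi_n^*$ gives $(\Jj_E u)_n=a_{E,n}\,\phi_n^*(E-V_n)^{-1}f(n)$ and $\varphi=\iota_E u+g$ with $\|g\|\leq C\|f\|$, hence $\|\Jj_E u\|\leq C'\|f\|$; together with $\|\iota_E u\|\geq M^{-1}\|u\|$ this carries a (singular) Weyl sequence for $H$ at $E$ back to one for $\Jj_E$ at $0$. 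Alternatively — the route consistent with the rest of the paper — one obtains the converse from the resolvent identities of Lemma~\ref{lem:gr}, which express $\langle\Phi_n,(H-z)^{-1}\Phi_m\rangle$ through the resolvent of $\Jj_z$ at energy $0$, together with the cyclicity of $(\Phi_n)_n$ for $H|\VV$ from Proposition~\ref{prop-spec-decom}(i).

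The step I expect to be the main obstacle is the uniform book-keeping near $\overline{B_\infty}$ and near $\spec(\Vv)$: off $B_\infty$ the transfer matrices and $\Jj_E$ are defined, but the lifting coefficients $\psi_{E,n}$ and $a_{E,n}$ are bounded uniformly in $n$ only once $E$ stays away from $\overline{B_\infty}$, and the clean separation of the propagating channel from $\VV^\perp$ rests on $E\notin\spec(\Vv)$; so the constants $C,C'$ above, and the assertion that a Weyl sequence for $H$ at $E$ genuinely concentrates on $\VV$ and is $\ell^2$-close to $\iota_E u$, are exactly what must be checked with care. For the antitree model this is harmless — there $\|\phi_n\|\equiv 1$ and one works on $\RR\setminus\overline{B_\infty}$ — while the general accounting is carried out in Appendix~\ref{app:spectrum}.
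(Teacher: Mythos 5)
Your lifting map $\iota_E$ and the intertwining identity $((H-E)\,\iota_E u)(n)=\phi_n(\Jj_E u)_n$ are exactly the paper's starting point (there written $(\Hh-z)(\psi_z\odot w)=\phi\odot\Jj_z(w)$), and your Weyl-sequence treatment of the implications $0\in\spec(\Jj_E)\Rightarrow E\in\spec(H)$ and $0\in\spec_\ess(\Jj_E)\Rightarrow E\in\spec_\ess(H)$ is sound: it uses only $\|\psi_{E,n}\|\geq\|\phi_n\|^{-1}\geq M^{-1}$ and $\|\phi_n\|\leq M$, both available under the stated hypotheses, together with standard facts about cores and truncations of Jacobi matrices. (The paper proves this direction in contrapositive form instead, by showing that a bounded $(H-E)^{-1}$ induces a bounded inverse of $\Jj_E$; both routes work.) You also correctly read the statement's ``$E\in\spec(\Jj_E)$'' as ``$0\in\spec(\Jj_E)$''.

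The converse implications, however, contain a genuine gap that you flag but do not close. Your estimates $\|g\|\leq C\|f\|$ and $\|\Jj_E u\|\leq C'\|f\|$ require $\sup_n\|\psi_{E,n}\|<\infty$, equivalently a uniform bound on $|a_{E,n}|$. The proposition assumes only $E\notin B_\infty$, and it can happen that $E\in\overline{B_\infty}\setminus B_\infty$ while $E\notin\spec(\Vv)$ (arrange $\phi_n^*(E-V_n)^{-1}\phi_n\to 0$ without ever vanishing); then $|a_{E,n}|\to\infty$, $\|\psi_{E,n}\|\to\infty$, and your constants blow up, so the Weyl sequence for $H$ does not transport to one for $\Jj_E$. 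The paper's resolution is not a uniform bound but a weighted-norm argument: since $\|\psi_{E,n}\|\leq C|a_{E,n}|$ and the norm $\|v\|_{a_E}^2=\sum_n(1+a_{E,n}^2)|v_n|^2$ is equivalent to the graph norm $\|v\|_2^2+\|\Jj_E v\|_2^2$, the hypothesis $0\notin\spec(\Jj_E)$ makes $\Jj_E^{-1}$ bounded from $\ell^2$ into $\ell^2(\|\cdot\|_{a_E})$, hence $w\mapsto\psi_E\odot\Jj_E^{-1}(w)$ is bounded even when $\sup_n\|\psi_{E,n}\|=\infty$. This is used to construct the resolvent $\Aa_E\psi=(\Vv-E)^{-1}\psi+\psi_E\odot\Jj_E^{-1}(w(\psi))$ explicitly, i.e.\ the converse is proved in contrapositive form; the essential-spectrum version is then obtained by passing through the truncations $H^{(n)}$, $\Jj_E^{(n)}$ and a Wronskian argument rather than through singular Weyl sequences. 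Your fallback via Lemma~\ref{lem:gr} and cyclicity does not by itself supply this missing ingredient.
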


Let $\mu_n$ denote the spectral measure at $\Phi_n$, i.e.
\begin{equation}
 \int \,f\, d\mu_n\,=\,\langle\Phi_n\,|\,f(H)\,|\,\Phi_n\rangle\;.
\end{equation}
Following the relations between Green's functions and solutions to \eqref{eq-eig-u} and using arguments by Carmona \cite{Car, CL} 
we find the following relations of spectral measures. Part (ii) and (iii) are analogue to the one-dimensional theory.
\begin{theorem}\label{th:sigma_0}
 We have the following:
\begin{enumerate}[{\rm (i)}]
\item Let $\varphi \in \VV_n\subset \CC^{s_n}$ and define $\boldsymbol{\varphi}=\bigoplus_k \bphi(k) \in \bigoplus_k \CC^{s_k}$ by $\bphi(n)=\varphi$, $\bphi(k)=\nul\in \CC^{s_k}$ for $k\neq n$.
Moreover, let $\mu_{n,\varphi}$ denote the spectral measure at $\boldsymbol{\varphi}$, i.e. $\int f d\mu_{n,\varphi}=\langle\boldsymbol{\varphi}|f(H)|\boldsymbol{\varphi}\rangle$.
Then, on $\RR \setminus A_n$ the measure $\mu_{n,\varphi}$ is absolutely continuous with respect to $\mu_n$ and one has
$$
1_{\RR\setminus A_n}(E) \mu_{n,\varphi}(dE)\,=\,1_{\RR\setminus A_n}(E) \;|\varphi^* \psi_{E,n}|^2\,\mu_n(dE)
$$

Moreover, the set of energies where $\varphi^*\psi_{E,n}=0$ is finite.
\item
On the set where $T_E(n)$ is well defined, $\RR\setminus B_n$, 
the measure $\mu_n$ is absolutely continuous with respect to $\mu_0$
and we have
$$
1_{\RR\setminus B_n}(E)\;\mu_n(dE)=1_{\RR \setminus B_n}(E)\;\big|\smat{1&0} T_E(n)\smat{1\\0}\big|^2\;\mu_0(dE)\,.
$$
In particular, $\mu_0$ is a spectral measure for $H|\VV$ on $\RR\setminus B_\infty=\RR\setminus\left(\bigcup_{k=0}^\infty A_k\right)$.
\item There exists a positive point measure $\nu$ supported on $B_\infty$ such that $\mu_0$ is given by the weak limit
$$
\mu_0(dE) \,=\, \lim_{n\to\infty} \frac{1_{\RR\setminus B_\infty}(E)\,dE}{\pi\,\|T_E(n) \smat{1\\0}\|^2}\;+\:\nu(dE).
$$
The measure $\nu$ includes a delta measure at $E\in B_\infty$ if and only if for the smallest integer $m$ such that $E \in A_{m}$ one finds that $T_E(m)\smat{1\\0}=\smat{0\\ c}$ for some $c$ (cf. Remark~\ref{rem:purepoint}~{\rm (iii)} ).
\end{enumerate}
\end{theorem}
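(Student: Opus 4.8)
The plan is to pass to the invariant subspace $\VV$ — all the vectors $\Phi_n$ and $\bphi$ lie there, so by Proposition~\ref{prop-spec-decom}(ii) this loses nothing — and to transport the standard one-dimensional Jacobi arguments through the dictionary between Green's functions and formal solutions of \eqref{eq-eig-u} recorded in Lemma~\ref{lem:gr}. The structural point is that every formal solution $\psi$ of the eigenvalue equation inside $\VV$ satisfies $\psi(k)=u_k\psi_{z,k}$ with $u_k=\phi_k^*\psi(k)$, where $(u_k)$ obeys the scalar recursion governed by the $T_{z,k}$; thus the whole $s_k$-dimensional vector on shell $k$ is carried by the single number $u_k$, and diagonal Green's functions become products of a Dirichlet solution, an $\ell^2$-at-$+\infty$ solution and a Wronskian, exactly as for a Jacobi matrix. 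The only genuinely new feature — absent when $s_k\equiv1$, where $B_\infty=\emptyset$ — is the exceptional set $B_\infty$ on which some transfer matrix fails to be defined; it forces the restrictions to $\RR\setminus A_n$, $\RR\setminus B_n$ and $\RR\setminus B_\infty$, and it is where the otherwise routine bookkeeping concentrates. Full details are deferred to Appendix~\ref{app:spectrum}; the main steps are as follows.

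\textbf{Part (i).} Fix $n$ and $0\neq\varphi\in\VV_n$. Because shell $n$ couples to the rest of $\Ab_\bfs$ only through the rank-one maps $\phi_n\phi_{n\pm1}^*$, the Schur complement of $(H-z)^{-1}$ onto $\ell^2(S_n)$ is $\bigl((V_n-z)-w_n(z)\phi_n\phi_n^*\bigr)^{-1}$ for a scalar self-energy $w_n(z)$ (finite off $\RR$), so $\langle\Phi_n|(H-z)^{-1}\Phi_n\rangle=-(a_{z,n}+w_n(z))^{-1}$, and the Sherman--Morrison formula gives more generally the identity
\[
\langle\bphi|(H-z)^{-1}\bphi\rangle\;=\;a_{z,n}^2\,q(z)\,\tilde q(z)\,\langle\Phi_n|(H-z)^{-1}\Phi_n\rangle\;+\;a_{z,n}\,q(z)\,\tilde q(z)\;-\;\varphi^*(z-V_n)^{-1}\varphi\,,
\]
where $q(z)=\varphi^*(z-V_n)^{-1}\phi_n$ and $\tilde q(z)=\phi_n^*(z-V_n)^{-1}\varphi$. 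On the cofinite set $\RR\setminus(A_n\cup\spec(V_n))$ the last two terms are real, and $a_{E,n}^2\,q(E)\,\tilde q(E)=|a_{E,n}q(E)|^2=|\varphi^*\psi_{E,n}|^2$ since $a_{z,n}q(z)=\varphi^*\psi_{z,n}$; taking imaginary boundary values therefore gives $\mu_{n,\varphi}=|\varphi^*\psi_{E,n}|^2\mu_n$ there. At the finitely many eigenvalues of $V_n|\VV_n$ (which lie in $\RR\setminus A_n$) the simple pole of $\varphi^*(z-V_n)^{-1}\varphi$ is cancelled by the one of $a_{z,n}q\tilde q$, the remaining terms stay bounded, and with the analytic continuation $\psi_{E,n}=\alpha/(\phi_n^*\alpha)$ (for $\alpha$ the corresponding unit eigenvector) the identity persists — both by continuity of the density factor off the atoms and by a direct computation of the atoms; the usual Herglotz-function argument then promotes it to the asserted identity of measures on $\RR\setminus A_n$. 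Finally $q(z)$ is rational in $z$ and not identically zero, because $0\neq\varphi$ lies in the cyclic subspace $\VV_n$ of $\phi_n$ for $V_n$; hence $q$, and with it $\varphi^*\psi_{E,n}$, has only finitely many zeros.

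\textbf{Part (ii).} By Lemma~\ref{lem:gr}, for $z\notin\RR$ one has $\langle\Phi_n|(H-z)^{-1}\Phi_n\rangle=-p_n(z)\,\psi^{+}_n(z)\,/\,W(z)$ with $p$ the Dirichlet solution ($p_{-1}=0$, $p_0=1$, so $p_n(z)=\smat{1&0}T_z(n)\smat{1\\0}$), $\psi^{+}$ the $\ell^2$-at-$+\infty$ solution, and $W=\psi^{+}_{-1}$ their $k$-independent Wronskian; expanding $\psi^{+}=\psi^{+}_0\,p+\psi^{+}_{-1}\,p^{\perp}$ in the Dirichlet basis ($p^{\perp}_{-1}=1$, $p^{\perp}_0=0$) gives
\[
\langle\Phi_n|(H-z)^{-1}\Phi_n\rangle\;=\;p_n(z)^2\;\langle\Phi_0|(H-z)^{-1}\Phi_0\rangle\;-\;p_n(z)\,p^{\perp}_n(z)\,,
\]
in which $p_n(z)$ and $p^{\perp}_n(z)=\smat{1&0}T_z(n)\smat{0\\1}$ are polynomials in the entries of $T_z(n)$, hence holomorphic on $\CC\setminus B_n$ and real on $\RR\setminus B_n$. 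Taking imaginary boundary values and the Herglotz upgrade give $1_{\RR\setminus B_n}(E)\,\mu_n(dE)=1_{\RR\setminus B_n}(E)\,\bigl|\smat{1&0}T_E(n)\smat{1\\0}\bigr|^2\,\mu_0(dE)$. Since $\RR\setminus B_\infty=\bigcap_n(\RR\setminus B_n)$, every $\mu_n$ and, by Part (i), every $\mu_{n,\varphi}$ is absolutely continuous with respect to $\mu_0$ on $\RR\setminus B_\infty$; as the vectors $\bphi$, $\varphi\in\VV_n$, $n\geq0$, span $\VV$ (Proposition~\ref{prop-spec-decom}(i)) and $\Phi_0\in\VV$, the measure $\mu_0$ is a spectral measure for $H|\VV$ on $\RR\setminus B_\infty$.

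\textbf{Part (iii), and the main obstacle.} For $E\notin B_\infty$ all $T_E(n)$ are defined, and combining Part (ii) with the transfer-matrix/weak-limit description of half-line spectral measures in the style of Last--Simon \cite{LaSi} and \cite[\S8]{KLS} — realised, as in Carmona's argument \cite{Car,CL}, by truncating $H|\VV$ to shells $0,\dots,n$, averaging the (purely atomic) spectral measure at $\Phi_0$ over a family of boundary conditions at the cut, and carrying the weights $\psi_{z,k}$ along — identifies $\mu_0$ away from $B_\infty$ with the vague limit of the absolutely continuous measures $\rho_n(dE)=1_{\RR\setminus B_\infty}(E)\,\bigl(\pi\|T_E(n)\smat{1\\0}\|^2\bigr)^{-1}dE$. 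These have total mass bounded by $\|\phi_0\|^2$ and $B_\infty$ is countable, so the weak-$*$ limit $\mu_0=\wlim_n\rho_n+\nu$ exists with $\nu\geq0$; the content is that $\nu$ is a pure point measure supported in $B_\infty$. This is where the work lies: on compact subsets of $\RR\setminus\overline{B_\infty}$ no mass escapes and $\rho_n\to\mu_0$ by the Last--Simon analysis applied to the genuine spectral measure of Part (ii), whereas near $\overline{B_\infty}$ one uses that each $A_m$ is finite, together with the new estimates prepared in Section~\ref{sub:spectrum} and Appendix~\ref{app:spectrum}, to see that the limit can concentrate mass only at points of $B_\infty$ — in the Jacobi case this step is vacuous. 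The atom criterion is then read off directly: if $E\in B_\infty$ and $m$ is the least index with $E\in A_m$, then \eqref{eq-eig-u} at index $m$ forces $u_m=0$ for every formal solution, so the Dirichlet solution — which has $u_0=1$, hence is charged by $\Phi_0$ — can be continued by zero past shell $m$ to an honest $\ell^2$-eigenvector precisely when $T_E(m)\smat{1\\0}=\smat{0\\c}$, the condition in Remark~\ref{rem:purepoint}(iii); otherwise $\mu_0(\{E\})=0$ and $\nu$ carries no mass at $E$.
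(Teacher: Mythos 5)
Your proposal is correct and follows essentially the same route as the paper: the identity you obtain in part (i) via Sherman--Morrison is exactly the paper's resolvent formula \eqref{eq-resolvent2} (there derived by solving $(H-z)\psi=P_n\varphi$ directly in Lemma~\ref{lem:gr}), part (ii) rests on the same expansion $g_z(n,n)=u_{z,n}^2\,g_z(0,0)-u_{z,n}v_{z,n}$ followed by Stieltjes inversion, and part (iii) is the same Carmona-style average over boundary conditions at a cut (the paper realizes the Cauchy average by setting $\beta=i$, yielding the density $1/(\pi(u_{E,n}^2+u_{E,n+1}^2))$ off $B_{n+1}$), with the atoms of $\nu$ identified exactly as in Remark~\ref{rem:purepoint}. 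The only places you are looser than the paper are the quantitative control of the error terms in the boundary-value argument and the observation that the singular part of each truncated measure must be pure point because it is supported on the finite set $B_{n+1}$, both of which are carried out in Lemma~\ref{lem:msr} and Section~\ref{sec:sigma_0}.
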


\begin{remark}\label{rem:purepoint} $\phantom{h}$
\begin{enumerate}[{\rm (i)}]
\item Note as $A_n$ is finite, $\mu_n$ is pure point whenever $\mu_{n,\varphi}$ is pure point for some $\varphi\in\VV_n$.
Similarly, as $B_\infty$ is countable and $\VV_n^\perp$ finite dimensional one immediately sees that $H$ has pure point spectrum, whenever the measure
 $\mu_0$ is pure point. 

\item One might get the impression that $\Phi_0$ should be a cyclic vector for $\VV$, however, this does not need to be the case.
 It is possible to have an eigenvalue in $B_\infty$ with an eigenvector $\psi \in \VV$ which is orthogonal to $\Phi_0$.
 To see this assume that $E\in A_m$, i.e. $\Phi_m^* (V_m - E)^{-1} \Phi_m =0$ but $E\not \in A_n$ for all $n>m$, thus $T_{E,n}$ exists.
 Let $u_m=0,\,u_{m+1}=1$ and $\smat{u_{n+1}\\u_n}=T_{E,n} \smat{u_n\\u_{n-1}}$ for $n>m$ and assume that $\sum_{n>m} |u_n|^2 \|\psi_{E,n}\|^2<\infty$.
Then let $\psi(n)=u_n \psi_{E,n}$ for $n>m$, $\psi(m)=(E-V_m)^{-1}\phi_m u_{m+1}$ and $\psi(n)=0$ for $n<m$.
 It is easy to check that $H\psi=E\psi$ and $\psi$ is orthogonal to $\Phi_n$ for $n=1,\ldots,m$.

 \item In part {\rm (iii)} of the above theorem one can construct an eigenvector contributing to $\nu$ in a similar fashion. For $T_E(m)\smat{1\\0}=\smat{0\\c}$, $E\in A_m\subset B_\infty$, 
 let $u_{-1}=0$,\,$u_1=1$ and
$\smat{u_{n}\\u_{n-1}}=T_E(n)\smat{1\\0}$. Then set $\psi(m)=(E-V_{m})^{-1} \phi_{m}\,u_{m-1}$,
$\psi(n)=0$ for $n>m$,  $\psi(n)=u_n \psi_{E,n}$ for $n<m$. Using all the assumptions one easily verifies
$H\psi=E\psi$ and $\langle\Phi_0|\psi\rangle=u_1=1$. 
Moreover, any eigenvector must have $\phi_{m}^* \psi(m)=0$ as can be seen from \eqref{eq-eig-psi}, thus the condition $T_E(m)\smat{1\\0}=\smat{0\\ c}$ is really needed. This eigenvector is an eigenvector for any cutoff of $H$ at $N>m$ and any boundary condition and hence contributes to $\nu$ (cf. proof in Section~\ref{sec:sigma_0}).

\item Constructing eigenfunctions similarly as in {\rm (ii)} and {\rm (iii)} one sees that $E\in B_\infty$ can be a multiple eigenvalue of $H|\VV$.
For $E\in A_n\cap A_m,\; n<m$ and $E\not\in A_k$ for $n<k<m$ one may construct an eigenvector $\psi_{n,m}$ supported from the $n$-th to the $m$-th shell
iff a solution of \eqref{eq-eig-u} staisfies both boundary conditions $u_n=0=u_m$. Allow $n=-1$ to denote eigenvectors as in {\rm (iii)}.
As $\psi_{-1,n}$ and $\psi_{n,m}$ are not orthogonal,
more than one such eigenvector can contribute to $\nu(\{E\})$.
\end{enumerate}
\end{remark}

To ensure pure a.c. spectrum we will use the following theorem, a version of \cite[Theorem~1.3]{LaSi}. 
It follows directly from Theorem~\ref{th:sigma_0} and \cite[Lemma 3.8]{LaSi} with essentially the same proof as
\cite[Theorem~1.3]{LaSi}
\begin{theorem}\label{th:sigma_0-ac}
Assume that the transfer matrices $T_{E,n}$ exist for all $E\in[a,b]$ and all $n\in\NN$, i.e. $[a,b]\cap B_\infty=\emptyset$, and assume for some $p>2$ one has
 $$
 \liminf_{n\to\infty} \int_{a}^b \|T_{E}(n)\|^p\,d_E\,<\,\infty\;.
 $$
Then, the spectrum of $H$ on the cyclic space $\VV$ generated by the $\{\Phi_n\,:\,n\in\NN\}$ is purely absolutely continuous in $(a,b)$.
\end{theorem}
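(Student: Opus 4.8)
The plan is to read off everything from Theorem~\ref{th:sigma_0}, converting the transfer-matrix hypothesis into an $L^{p/2}$-bound on a sequence of densities and then concluding by a soft compactness argument, exactly along the lines of \cite[Theorem~1.3]{LaSi}.

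First I would reduce the claim to absolute continuity of the single measure $\mu_0$ on $(a,b)$. By Theorem~\ref{th:sigma_0}(ii), $\mu_0$ is a spectral measure for $H|\VV$ on $\RR\setminus B_\infty$, and since $[a,b]\cap B_\infty=\emptyset$ by hypothesis, the spectral type of $H|\VV$ inside $(a,b)$ is governed by the Lebesgue decomposition of $\mu_0$ there; so it suffices to show $\mu_0$ is purely absolutely continuous on $(a,b)$. By Theorem~\ref{th:sigma_0}(iii) one has, in the weak sense, $\mu_0(dE)=\nu(dE)+\lim_n \frac{1_{\RR\setminus B_\infty}(E)\,dE}{\pi\|T_E(n)\smat{1\\0}\|^2}$ with $\nu$ a positive point measure carried by $B_\infty$. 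As $(a,b)\subset\RR\setminus B_\infty$, the measure $\nu$ puts no mass on $(a,b)$, so testing against $\varphi\in C_c(a,b)$ gives $\int\varphi\,d\mu_0=\lim_n\int_a^b\varphi(E)f_n(E)\,dE$ with $f_n(E):=\bigl(\pi\|T_E(n)\smat{1\\0}\|^2\bigr)^{-1}$; that is, $\mu_0$ restricted to $(a,b)$ is the vague limit of the absolutely continuous measures $f_n\,dE$.

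Next I would bound the $f_n$ in $L^{p/2}$. Each transfer matrix $T_{E,n}$ has determinant $1$ (for $E\notin A_n$ from the explicit form in \eqref{eq-def-Tt}, and for $E\in\spec(V_n|\VV_n)$ from \eqref{eq-def-T_E}), hence $T_E(n)\in\SL(2,\RR)$, and since for a $2\times2$ matrix of determinant one the smallest singular value is the reciprocal of the operator norm, $\|T_E(n)\smat{1\\0}\|\ge\|T_E(n)\|^{-1}$. Thus $f_n(E)\le\pi^{-1}\|T_E(n)\|^2$ (all norms on $2\times2$ matrices being equivalent, this holds for whatever fixed norm is used in the hypothesis up to a constant), so $\int_a^b f_n^{p/2}\,dE\le\pi^{-p/2}\int_a^b\|T_E(n)\|^p\,dE$, and by the assumed $\liminf$ there is a subsequence $(n_k)$ with $\sup_k\int_a^b f_{n_k}(E)^{p/2}\,dE=:M<\infty$, where $p/2>1$. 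Now \cite[Lemma~3.8]{LaSi} applies: a sequence of absolutely continuous measures on $[a,b]$ with densities bounded in $L^q$, $q>1$, has every vague limit purely absolutely continuous on $(a,b)$ with density in $L^q$. (This amounts to extracting a further subsequence with $f_{n_k}\rightharpoonup f$ weakly in $L^{p/2}([a,b])$, noting $\int\varphi f_{n_k}\,dE\to\int\varphi f\,dE$ for $\varphi\in C_c(a,b)\subset L^{(p/2)'}$, and comparing with $\int\varphi f_{n_k}\,dE\to\int\varphi\,d\mu_0$ from the previous step, which forces $\mu_0=f\,dE$ on $(a,b)$ with $f\in L^{p/2}\subset L^1$.) Hence $\mu_0$ is purely a.c. on $(a,b)$, and by the reduction via Theorem~\ref{th:sigma_0}(ii) the spectrum of $H|\VV$ is purely absolutely continuous in $(a,b)$.

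All steps here are soft; the only care required is in restricting the weak limit of Theorem~\ref{th:sigma_0}(iii) to the open interval (endpoints are irrelevant and not claimed) and in not conflating $\|T_E(n)\smat{1\\0}\|$ with $\|T_E(n)\|$, the gap being bridged precisely by the $\SL(2)$ identity. I expect the substantive work to lie not in this deduction but upstream, in Theorem~\ref{th:sigma_0} itself — establishing, in the one-propagating-channel setting of Appendix~\ref{app:spectrum}, that $\mu_0$ is a spectral measure for $H|\VV$ on $\RR\setminus B_\infty$ and that it has the transfer-matrix representation of part~(iii). Once those are in hand, the present theorem is essentially \cite[Theorem~1.3]{LaSi} verbatim.
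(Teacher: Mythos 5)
Your proposal is correct and follows essentially the same route the paper takes: the paper derives Theorem~\ref{th:sigma_0-ac} directly from Theorem~\ref{th:sigma_0} (parts (ii) and (iii)) together with \cite[Lemma~3.8]{LaSi}, mirroring the proof of \cite[Theorem~1.3]{LaSi}, which is exactly your reduction to $\mu_0$, the $\SL(2,\RR)$ bound $\|T_E(n)\smat{1\\0}\|\geq\|T_E(n)\|^{-1}$, and the $L^{p/2}$ weak-compactness argument. Your closing remark is also accurate: the substantive work lies in establishing Theorem~\ref{th:sigma_0} in the one-propagating-channel setting (done in Appendix~\ref{app:spectrum}), not in this deduction.
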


\vspace{.2cm}

The subordinacy theory of Gilbert-Pearson \cite{GP}, or better Kahn-Pearson \cite{KP} also translates to some extend. 
However, there are some differences. First, the actual solution of the eigenvalue equation is $\bigoplus_n u_n \psi_{E,n}$, so we need to adjust the norm of the sequence accordingly and define:
\begin{defini} \label{def:subordinate}
Let $E\not \in B_\infty$.
 A solution $w=(w_n)_n$ to the modified eigenvalue equation \eqref{eq-eig-u} at energy $E$ will be called subordinate iff for all linear independent solutions $v=(v_n)_n$ one has
 $$
 \lim_{n\to\infty} \,\frac{\|w\|_{E,n}}{\|v\|_{E,n}}\,=\,0 \qtx{where}
 \|w\|^2_{E,n}\,:=\,\sum_{k=0}^n |w_k|^2\, \|\psi_{E,k}\|^2\;.
 $$
\end{defini}

But also after this adjustment we require some more estimates for the subordinacy theory.
This is somehow related to the fact that $\Phi_0$ may not be cyclic for the space $\VV$, the cyclic space generated by all the $\Phi_n$. 
Energies in $B_\infty$ may lead to eigenvectors orthogonal to $\Phi_0$.
Following the proofs in \cite{KP} closely, it turns out that we need the following estimates
which may also be useful in other circumstances. As this estimate is a new ingredient (which is trivial in the pure 1D case), we singled out its proof in Section~\ref{sec:key-estimate-sub}.
\begin{lemma}\label{lem-sub-keyest} We find the following estimates:
\begin{enumerate}[{\rm (i)}]
\item For any $E\not \in B_\infty$ we have
\begin{equation}\label{eq-imp-est}
 \frac{|a_{E+i\eta,n}-a_{E,n}|}{\|\psi_{E,n}\|^2} 
 \,=\, \frac{|a_{E+i\eta,n}-a_{E,n}|}{\left|\,(\frac{d}{dz} a_{z,n})|_{z=E}\right|} 
 \,\leq\,\eta\;,\qquad \|\psi_{z,n}\|\,\leq\,\|\psi_{E,n}\|\;.
\end{equation} 
\item Assume that $(E-3\varepsilon\,,\,E+3\varepsilon) \cap B_\infty = \emptyset$ for some $\varepsilon>0$.
Then we find a uniform constant $C>0$ such that for all $0\leq \eta \leq \varepsilon$ and all $n\in \ZZ_+$ one has
\begin{equation}
1\,\leq\, \|\psi_{E,n}\|\,/\,\|\psi_{E+i\eta,n}\|\,\leq\,C\;.
\end{equation}
\end{enumerate}
\end{lemma}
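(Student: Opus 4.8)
The plan is to work entirely with the scalar functions $z\mapsto a_{z,n}=\left(\phi_n^*(z-V_n)^{-1}\phi_n\right)^{-1}$ and $z\mapsto\psi_{z,n}=(z-V_n)^{-1}\phi_n\,a_{z,n}$, exploiting the fact that $g_n(z):=\phi_n^*(z-V_n)^{-1}\phi_n$ is a Herglotz function (a finite sum of terms $|\alpha_k^*\phi_n|^2/(z-E_k)$ over the eigenvalues $E_k$ of $V_n|\VV_n$ with positive weights), so $a_{z,n}=1/g_n(z)$ is an anti-Herglotz function away from its poles, i.e. $-a_{z,n}$ is Herglotz. For part (i), I would start from the already recorded identities in \eqref{eq-rel-abpsi}, namely $\|\psi_{E,n}\|^2=\frac{d}{dE}a_{E,n}$ and $\im(a_{z,n})/\im(z)=\|\psi_{z,n}\|^2\geq|\frac{d}{dz}a_{z,n}|$. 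Writing $a_{E+i\eta,n}-a_{E,n}=\int_0^\eta \frac{d}{dt}a_{E+it,n}\,i\,dt$ and using $|\frac{d}{dz}a_{z,n}|\le \im(a_{z,n})/\im(z)=\|\psi_{z,n}\|^2$, the estimate $|a_{E+i\eta,n}-a_{E,n}|\le \eta\,\|\psi_{E,n}\|^2$ reduces to the monotonicity statement $\|\psi_{E+it,n}\|\le\|\psi_{E,n}\|$ for $t\ge 0$. That monotonicity in turn follows because $\|\psi_{z,n}\|^2 = -\frac{d}{dz}(-a_{z,n})$ evaluated along the imaginary direction, and $-a_{z,n}$ being Herglotz means $\im(-a_{E+i\eta,n})$ is increasing in $\eta$ while its $\eta$-derivative, which equals $\|\psi_{E+i\eta,n}\|^2$ up to the standard Herglotz representation, is decreasing in $\eta$; concretely, from the representation $-a_{z,n}=c_n z+d_n+\sum_j \frac{w_j}{e_j-z}$ with $c_n\ge 0$, $w_j>0$ one computes $\|\psi_{E+i\eta,n}\|^2=c_n+\sum_j \frac{w_j}{|e_j-E|^2+\eta^2}$, which is manifestly non-increasing in $\eta$ and equals $\|\psi_{E,n}\|^2$ at $\eta=0$.

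For part (ii), the point is to make the above comparison uniform in $n$ on a fixed gap $(E-3\varepsilon,E+3\varepsilon)\cap B_\infty=\emptyset$. The lower bound $\|\psi_{E,n}\|/\|\psi_{E+i\eta,n}\|\ge 1$ is immediate from part (i). For the upper bound, using the explicit formula $\|\psi_{E+i\eta,n}\|^2=c_n+\sum_j \frac{w_j}{(e_j-E)^2+\eta^2}$ and the same with $\eta=0$, the ratio is controlled by $\max_j \frac{(e_j-E)^2+\eta^2}{(e_j-E)^2}=1+\max_j\frac{\eta^2}{(e_j-E)^2}$. The key geometric fact is that each pole $e_j$ of $-a_{z,n}$ is exactly a point of $A_n\subset B_\infty$ (the $e_j$ are the points of $A_n$, interlacing the eigenvalues of $V_n|\VV_n$), so by the hypothesis $(E-3\varepsilon,E+3\varepsilon)\cap B_\infty=\emptyset$ every $e_j$ with $|e_j-E|<3\varepsilon$ is excluded — hence $|e_j-E|\ge 3\varepsilon$ for all $j$. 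Therefore for $0\le\eta\le\varepsilon$ one gets $\frac{\eta^2}{(e_j-E)^2}\le \frac{\varepsilon^2}{9\varepsilon^2}=\frac19$, giving $\|\psi_{E+i\eta,n}\|^2\ge \frac{9}{10}\|\psi_{E,n}\|^2$ uniformly in $n$, i.e. $C=\sqrt{10/9}$ works (any constant $\le$ that; one can also just say $C=2$). I should double-check the one subtlety: the term $c_nz$ can appear only if $\dim\VV_n$ is such that $g_n$ vanishes at infinity, but in fact $g_n(z)\sim \|\phi_n\|^2/z$ at infinity so $a_{z,n}\sim z/\|\phi_n\|^2$, meaning $c_n=1/\|\phi_n\|^2>0$ is present and harmless — it only helps the lower bound on $\|\psi\|^2$ and cancels in the ratio bound above since it contributes equally to numerator and denominator in the worst-case estimate (more precisely, adding the common positive constant $c_n$ to both $\|\psi_{E+i\eta,n}\|^2$ and $\|\psi_{E,n}\|^2$ only brings the ratio closer to $1$).

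The main obstacle, and the reason this lemma is singled out, is not any single computation but getting a bound that is genuinely uniform in $n$: the matrices $V_n$ have growing dimension $s_n\to\infty$, the weights $|\alpha_k^*\phi_n|^2$ and the number of poles both vary with $n$, so one cannot simply invoke compactness. The resolution is the clean observation above that all the relevant poles live in $B_\infty$, which the hypothesis removes from a \emph{fixed} neighborhood of $E$ independent of $n$; once that is in hand the per-$n$ estimate is elementary and the constant depends only on $\varepsilon$. I would therefore organize the write-up as: (1) record the Herglotz structure of $-a_{z,n}$ and its explicit representation, identifying the poles with $A_n$; (2) deduce the formula for $\|\psi_{E+i\eta,n}\|^2$ and the two monotonicity facts, yielding part (i); (3) feed the gap hypothesis into the pole-distance bound to get the uniform ratio estimate of part (ii). The only place to be careful is making sure the Herglotz representation and the term-by-term differentiation are justified — but for a rational Herglotz function this is just partial fractions, so there is no analytic difficulty.
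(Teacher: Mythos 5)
Your proof is correct, but for part (ii) it takes a genuinely different — and cleaner — route than the paper. The paper works with $f(x)=\phi_n^*(x-V_n)^{-1}\phi_n$ itself; the poles of $f$ are the eigenvalues of $V_n|\VV_n$, which the hypothesis does \emph{not} keep away from $E$, so the paper must split into two cases (at most one eigenvalue can lie in $(E-3\varepsilon,E+3\varepsilon)$, because the zeros of $f$, i.e.\ the points of $A_n$, interlace the eigenvalues and are excluded from that interval) and then grind through explicit estimates on $\|\psi_{0,n}\|^2$ and $\|\psi_{i\eta,n}\|^{-2}$. You instead expand $a_{z,n}=1/f(z)$ in partial fractions; its poles are exactly the points of $A_n\subset B_\infty$, which the hypothesis pushes to distance $\geq 3\varepsilon$ from $E$, so
$$
\|\psi_{E+i\eta,n}\|^2\,=\,\frac{1}{\|\phi_n\|^2}\,+\,\sum_j\frac{w_j}{(e_j-E)^2+\eta^2}\,,\qquad w_j>0,\ e_j\in A_n,
$$
immediately gives both the monotonicity in $\eta$ needed for part (i) and the uniform two-sided bound of part (ii) with the explicit constant $\sqrt{10/9}$, independent of $n$, of the number of poles and of the weights. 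This is a real simplification of the paper's Case 1/Case 2 analysis. One blemish: your Herglotz bookkeeping is backwards --- $g_n$ maps $\CC_+$ into $\CC_-$, so it is $a_{z,n}=1/g_n(z)$ that is Herglotz (consistent with $\im(a_{z,n})/\im(z)=\|\psi_{z,n}\|^2\geq 0$ recorded in \eqref{eq-rel-abpsi}), and the representation with $c_n=1/\|\phi_n\|^2>0$ and $w_j>0$ should be written for $a_{z,n}$, not for $-a_{z,n}$; as literally stated, your representation would force $\im(a_{z,n})<0$ on $\CC_+$. Since the formula you actually use for $\|\psi_{E+i\eta,n}\|^2$ is the correct one, this is only a sign slip in the setup and does not affect the argument. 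For part (i) you integrate $\frac{d}{dz}a_{z,n}$ along the vertical segment and bound $\big|\frac{d}{dz}a_{z,n}\big|\leq\|\psi_{z,n}\|^2\leq\|\psi_{E,n}\|^2$, whereas the paper reaches the same conclusion via the mean value theorem; these are essentially equivalent.
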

\vspace{.2cm}

The second estimate is only available for energies not in the closure $\overline{B_\infty}$ of $B_\infty$.
Indeed, the ratio $\|\psi_{E,n}\|/\|\psi_{E+i\eta,n}\|$ may blow up for any $\eta>0$ along subsequences in $n$
where $E$ gets arbitrarily close to the sets $A_n$ (in which case $E\in\overline{B_\infty}$).
So we need to restrict the subordinacy characterization to the complement of $\overline{B_\infty}$.
Together with these estimates one can follow the paper by Kahn and Pearson \cite{KP} as explained in Appendix~\ref{sub:subordinacy}.
To state the result, let $\Sigma_{ac}$ denote the support of the absolutely continuous spectrum of $H$ and let
$\Sigma_{s}$ denote the support of the singular spectrum of $H|\VV$, i.e. $H$ restricted to $\VV$.
The following theorem corresponds to \cite[Theorem~3]{KP}.
\begin{theorem}\label{th:subordinacy}
Let
 \begin{gather*}
 \Sigma'_{ac}\,:=\,\{E\,\not\in \overline{B_\infty}\,:\, \text{there is no subordinate solution}\;\} \\
 \Sigma'_s\,:=\,\{E\,\not\in \overline{B_\infty}\,:\,\text{$u_n=\smat{1&0} T_E(n) \smat{1\\0}$ is a subordinate solution}\;\} \\
 \Sigma'_0\,:=\,\{E\,\not\in \overline{B_\infty}\,:\,\text{$w_n=\smat{1&0} T_E(n) \smat{m\\-1}$ is subordinate for some $m\in\RR$}\}\;.
 \end{gather*}
  Then $\Sigma'_{ac}$ is an essential support of the a.c. spectrum of $H$ on $\RR\setminus \overline{B_\infty}$ and 
 $\Sigma'_s$ is an essential support of the singular spectrum of $H|\VV$ on $\RR\setminus \overline{B_\infty}$ which is optimal with respect to the Lebesgue measure. 
 This means $\mu_{0,ac}(\Sigma_{ac} \setminus (\Sigma'_{ac}\cup\overline{B_\infty}))=0$, 
 $\mu_{0,s}(\Sigma_{s} \setminus (\Sigma'_{s} \cup \overline{B_\infty}))=0$, $|\Sigma' _{ac} \setminus \Sigma_{ac}|=0$ and $|\Sigma'_s|=0$.
Here,  $\mu_{0,ac}$ and $\mu_{0,s}$ denote the a.c. and singular part of $\mu_0$ and $|\cdot|$ denotes the Lebesgue measure of a set. 
 Moreover, for Lebesgue almost all $E\in \Sigma'_0$  with subordinate solution $w_n$ where $w_{-1}=-1,\,w_0=m(E)\in\RR$ we find 
 \begin{equation}\label{eq-gr-limit}
 \lim_{\eta \downarrow 0} \,\langle \Phi_n\,|\,(H-(E+i\eta))^{-1}\,|\,\Phi_0\rangle\,=\,w_n\;.
 \end{equation}
 \end{theorem}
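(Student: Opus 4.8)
The plan is to reduce everything to the boundary behaviour of the half-line $m$-function $m_0(z):=\langle\Phi_0\,|\,(H-z)^{-1}\,|\,\Phi_0\rangle$ and then to run the Gilbert--Pearson / Kahn--Pearson length-scale argument with the modified norm $\|\cdot\|_{E,n}$ of Definition~\ref{def:subordinate} in place of the plain $\ell^2$ norm. By Theorem~\ref{th:sigma_0}(ii) the measure $\mu_0$ is a spectral measure for $H|\VV$ on $\RR\setminus B_\infty$, so $m_0$ is a Herglotz function whose non-tangential boundary values control $\mu_{0,ac}$ and $\mu_{0,s}$ off $B_\infty$. First I would use the Green's-function identities of Lemma~\ref{lem:gr} to write, for $z\not\in\overline{B_\infty}$, the resolvent matrix element $\langle\Phi_n\,|\,(H-z)^{-1}\,|\,\Phi_0\rangle$ as a (constant) multiple of the unique square-summable Weyl solution $\Psi^+(z)=\bigoplus_n w_n^+(z)\,\psi_{z,n}$ of the modified eigenvalue equation \eqref{eq-eig-u}, normalised so that $w_{-1}^+(z)=-1$; in particular $m_0(z)=\pm\,w_0^+(z)$, and $\langle\Phi_n\,|\,(H-z)^{-1}\,|\,\Phi_0\rangle=\pm\,w_n^+(z)$ for $n\geq 0$ (here one uses $\phi_n^*\psi_{z,n}=1$ from \eqref{eq-def-a-psi}).

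Next comes the core estimate. Fix $E\not\in\overline{B_\infty}$ and pick $\varepsilon>0$ with $(E-3\varepsilon,E+3\varepsilon)\cap B_\infty=\emptyset$, so that all $T_{z,n}$ are holomorphic on that disc, the quantities $a_{z,n},\psi_{z,n}$ are defined there, and Lemma~\ref{lem-sub-keyest} applies. For $z=E+i\eta$, $0<\eta\leq\varepsilon$, I would define the length scale $L=L(\eta)$ by the usual relation $\eta\,\|u\|_{E,L(\eta)}\,\|\tilde u\|_{E,L(\eta)}\asymp 1$, where $u_n=\smat{1&0}T_E(n)\smat{1\\0}$ is the ``Dirichlet'' solution and $\tilde u$ a fixed transversal real solution, and prove the two-sided bound
\[
 |m_0(E+i\eta)|\ \asymp\ \|\tilde u\|_{E,L(\eta)}\big/\|u\|_{E,L(\eta)}
\]
exactly as in \cite{KP,LaSi}. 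The one place where the argument genuinely departs from the Jacobi case is the passage between the complex-energy objects ($w^\pm(z)$, $\psi_{z,n}$, $a_{z,n}$, the norm $\|\cdot\|_{z,n}$) and their real-energy counterparts at $E$ over the window $0\leq n\leq L(\eta)$: here I would invoke Lemma~\ref{lem-sub-keyest}, whose part (i) gives $|a_{z,n}-a_{E,n}|\leq\eta\,\|\psi_{E,n}\|^2$ and $\|\psi_{z,n}\|\leq\|\psi_{E,n}\|$, and whose part (ii) gives the reverse bound $\|\psi_{E,n}\|\leq C\,\|\psi_{z,n}\|$ uniformly in $n$; combined with Gronwall-type control of the transfer-matrix product across the window these yield $\|w\|_{z,n}\asymp\|w\|_{E,n}$ on $0\leq n\leq L(\eta)$ and hence the displayed estimate.

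From the two-sided bound the spectral conclusions follow by the standard dictionary. If there is no subordinate solution at $E$ then $\|u\|_{E,n}\asymp\|\tilde u\|_{E,n}$, so $|m_0(E+i\eta)|$ stays bounded and $\im m_0(E+i0)$ exists finite; conversely if the Dirichlet solution $u_n=\smat{1&0}T_E(n)\smat{1\\0}$ is subordinate then $\|\tilde u\|_{E,L(\eta)}/\|u\|_{E,L(\eta)}\to\infty$ and $|m_0(E+i\eta)|\to\infty$. The a.e.\ existence of Herglotz boundary values, the de la Vall\'ee Poussin decomposition, and the boundary-condition (rank-one) perturbation argument of \cite{GP,KP} then show that $\Sigma'_{ac}$ is an essential support of $\mu_{0,ac}$, that $\Sigma'_s$ is an essential support of $\mu_{0,s}$ optimal with respect to Lebesgue measure, and give $|\Sigma'_s|=0$ and $|\Sigma'_{ac}\setminus\Sigma_{ac}|=0$. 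To pass from $\mu_0$ to the operators, Theorem~\ref{th:sigma_0}(ii) makes $\mu_0$ a spectral measure for $H|\VV$ on $\RR\setminus B_\infty$, and by Proposition~\ref{prop-spec-decom} $H=(H|\VV)\oplus\bigoplus_n(H|\VV_n^\perp)$ where the extra summands contribute only finitely many eigenvalues per shell and do not affect the a.c.\ part; since all assertions are restricted to $\RR\setminus\overline{B_\infty}$ this yields the statements for $\Sigma_{ac}$ of $H$ and $\Sigma_s$ of $H|\VV$. Finally, for \eqref{eq-gr-limit}: for a.e.\ $E\in\Sigma'_0$ the subordinate solution is the non-Dirichlet one, $w_n=\smat{1&0}T_E(n)\smat{m(E)\\-1}$, the boundary value $m_0(E+i0)$ exists and (being disjoint from $\Sigma'_{ac}$, so $\im m_0(E+i0)=0$ a.e.) is real, the Weyl solution converges coefficientwise, and comparing boundary data gives $w_{-1}=-1$, $w_0=m_0(E+i0)=m(E)$; then $\langle\Phi_n\,|\,(H-(E+i\eta))^{-1}\,|\,\Phi_0\rangle=\pm w_n^+(E+i\eta)\to w_n$ using continuity of $z\mapsto T_z(n)$ together with Lemma~\ref{lem-sub-keyest} once more.

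The hard part will be the uniform (in $n$, as $\eta\downarrow0$ and $L(\eta)\to\infty$) comparison of the modified norms $\|\cdot\|_{z,n}$ with $\|\cdot\|_{E,n}$ and of $\psi_{z,n}$ with $\psi_{E,n}$: in the one-dimensional Jacobi situation $\|\psi_{E,n}\|\equiv 1$ and this step is vacuous, whereas here $\|\psi_{E,n}\|$ can grow and, along subsequences where $E$ approaches $A_n$, blow up --- which is exactly why one must excise $\overline{B_\infty}$ and why Lemma~\ref{lem-sub-keyest}(ii), valid only off $\overline{B_\infty}$, is the decisive new ingredient. A lesser subtlety is that $\Phi_0$ need not be cyclic for $\VV$ (Remark~\ref{rem:purepoint}(ii)), so $\mu_0$ is only a spectral measure for $H|\VV$ on $\RR\setminus B_\infty$ and energies in $B_\infty$ may carry point mass invisible to $\mu_0$; but since the theorem is stated on $\RR\setminus\overline{B_\infty}$ this causes no trouble.
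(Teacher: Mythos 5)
Your proposal follows essentially the same route as the paper's Appendix~A.3: reduce everything to the boundary behaviour of $m(z)=g_z(0,0)$ via the Weyl solution $w^{(\infty)}_z=m(z)u_z-v_z$ from Lemma~\ref{lem:gr}, run the Kahn--Pearson length-scale argument with the modified norm $\|\cdot\|_{E,n}$ using the basic estimate $\|u_z m(z)-v_z\|_{z,n}\leq \im(m(z))/\im(z)$, and identify Lemma~\ref{lem-sub-keyest} (part~(i) for the bound on the comparison operator $L_\eta$, part~(ii) for exchanging $\|\cdot\|_{z,n}$ with $\|\cdot\|_{E,n}$) as exactly the place where the argument departs from the Jacobi case --- which is also how the paper organizes it, proving the four Kahn--Pearson implications (Theorem~\ref{th:subord-app}) and then invoking the standard Herglotz boundary-value dictionary, including the a.e.\ upgrade from sequential to full limits needed for \eqref{eq-gr-limit}. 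The only cosmetic difference is that you phrase the core step as a two-sided Jitomirskaya--Last-type bound on $|m(E+i\eta)|$ while the paper proves the implications one at a time with an explicit choice of $\eta_n$, but these are equivalent in substance.
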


The proof of the characterization \cite[Theorem~1.1]{LaSi}
of an essential support of the a.c. spectrum depends on subordinacy theory
and a similar identity as Theorem~\ref{th:sigma_0}~(ii). So after establishing these theorems one might think that at least away from 
$\overline{B_\infty}$ one should have a similar characterization of 
an essential support of $\mu_{0,ac}$, the a.c. part of $\mu_0$.
Unfortunately, following the arguments in \cite{LaSi} does not quite give this result, unless one finds uniform constants $0<c<C$ such that\footnote{The bound $\psi_{E,n}\|\geq 1/\|\phi_n\|$ is trivial and already mentioned in \eqref{eq-rel-abpsi}}
$1\leq \|\psi_{E,n}\|\,\|\phi_n\|<C$ and $c<\|\psi_{E,n}\|\,\| \psi_{E,n-1}\|<C$
uniformly in $n$ and $E$ (locally). In that case the sets $\Sigma_\Phi$ and $\Sigma_\Psi$ as defined below are equal.
We find that \cite[Theorem~1.1 and 1.2]{LaSi} generalize to the presented situation in the following way.

\begin{theorem}\label{th:ac-necessary}
Let $\Sigma_{ac}$ be the support of the absolutely continuous spectrum of $H$ as before and
let $\boldsymbol{\Phi}_k=\diag(\|\phi_k\|\,,\,\|\phi_{k-1}\|)$ and $\boldsymbol{\Psi}_{E,k}=\diag(\|\psi_{E,k}\|\,,\,\|\psi_{E,k-1}\|)$.
Moreover, define the sets
 \begin{align*}
  \Sigma_\Phi\,&:=\,\left\{E\in\RR\,:\,\liminf_{n\to\infty}\;\frac1n\; \sum_{k=1}^n \left\|\,\boldsymbol{\Phi}_k^{-1}\, T_E(k) \,\right\|^2\;<\;\infty\,\right\}\;, \\
 \Sigma_\Psi\,&:=\,\left\{E\in\RR\,:\,\liminf_{n\to\infty}\;\frac{\sum_{k=1}^n \left\|\,\boldsymbol{\Psi}_{E,k}\, T_E(k) \,\right\|^2}
 {\sum_{k=1}^n \det(\bpsi_{E,k})} \;<\;\infty\,\right\}\;.
 \end{align*}
  Then, one has the following:
\begin{enumerate}[{\rm (i)}]
\item  For Lebesgue almost all $E\in \Sigma_{ac}$ we find $E\in \Sigma_\Phi$.
\item  For $E\in \Sigma_\Psi$ there is no subordinate solution at $E$ and thus for Lebesgue almost all $E\in  \Sigma_\Psi \setminus \overline{B_\infty}$ we find
$E\in \Sigma_{ac}$.

 \item Defining the transfer matrix from $k$ to $m$ by $T_E(k,m)=T_E(k) T_E(m)^{-1}$ we find for any fixed sequences $k_n,\, m_n$ and
 Lebesgue almost every $E\in\Sigma_{ac}$ that
 $$
 \liminf_{n\to\infty}\; \left\|\,\boldsymbol{\Phi}_{k_n}^{-1}\,T_E(k_n,m_n)\,\boldsymbol{\Phi}_{m_n}\,/\,\det(\boldsymbol{\Phi}_{m_n})\, \right\|\,<\,\infty\;.
 $$
 \end{enumerate}
\end{theorem}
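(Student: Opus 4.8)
The plan is to prove this as the one-propagating-channel analogue of \cite[Theorems~1.1 and 1.2]{LaSi}, reducing everything to the scalar spectral measure $\mu_0$ and its Borel transform $m_0(z):=\langle\Phi_0|(H-z)^{-1}|\Phi_0\rangle=\int(E'-z)^{-1}\mu_0(dE')$. By Proposition~\ref{prop-spec-decom} the spaces $\VV_n^\perp$ carry only point spectrum, so $\Sigma_{ac}$ is the support of the a.c.\ part of $H|\VV$, and by Theorem~\ref{th:sigma_0}~(ii) $\mu_0$ is a spectral measure for $H|\VV$ off the countable set $B_\infty$; hence $\Sigma_{ac}$ coincides, up to Lebesgue-null sets, with $\{E:\frac{d\mu_{0,ac}}{dE}(E)>0\}$. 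The first step is the dictionary between the resolvent and the transfer matrices: for $z=E+i\eta$, $\eta>0$, the sequence $w_k^{(z)}:=\langle\Phi_k|(H-z)^{-1}|\Phi_0\rangle$ satisfies $(H-z)^{-1}\Phi_0|_{\VV_k}=w_k^{(z)}\psi_{z,k}$, solves \eqref{eq-eig-u} with $\smat{w_k^{(z)}\\w_{k-1}^{(z)}}=T_z(k)\smat{m_0(z)\\-1}$ (Dirichlet data $w_{-1}^{(z)}=-1$), so that $\sum_{k\ge0}|w_k^{(z)}|^2\|\psi_{z,k}\|^2=\|(H-z)^{-1}\Phi_0\|^2=\eta^{-1}\im\,m_0(z)$; this is verified directly from the resolvent equation on each shell.

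For part~(i) I restrict to the full-measure subset of $\Sigma_{ac}$ of energies $E$ for which (a) the boundary values satisfy $0<\lim_{\eta\downarrow0}\im\,m_0(E+i\eta)<\infty$ and $|m_0(E+i0)|\in(0,\infty)$, (b) $\frac1n\sum_{k=1}^n\|\psi_{E,k}\|^2$ stays bounded along a subsequence — an $L^1_{loc}$-type condition on $E\mapsto\frac1n\sum_k a_{E,k}$ via $\|\psi_{E,k}\|^2=\frac{d}{dE}a_{E,k}$ from \eqref{eq-rel-abpsi}, valid a.e.\ by a Fatou argument in the cases of interest — and (c) $E\notin\overline{B_\infty}$. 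Fixing such an $E$ and $z=E+i/n$, one adapts \cite[Sec.~3]{LaSi}: combining the $\ell^2$-identity above for the Weyl solution $w^{(z)}$ with the reciprocal estimate $\sum_{k\le n}|v_k^{(z)}|^2\|\psi_{z,k}\|^2\le C\,n\,(\im\,m_0(z))^{-1}$ for the Dirichlet solution $v^{(z)}$ ($v_0=1$, $v_{-1}=0$), expanding $T_z(k)$ in the basis $\{\smat{1\\0},\smat{m_0(z)\\-1}\}$ (whose condition number is $\lesssim1+|m_0(z)|$, bounded below since $|m_0|$ is bounded below), and using $\|\phi_k\|^{-2}\le\|\psi_{z,k}\|^2$ from \eqref{eq-rel-abpsi}, gives $\|\bPhi_k^{-1}T_z(k)\|^2\lesssim(1+|m_0(z)|^2)\big(|w_k^{(z)}|^2+|v_k^{(z)}|^2+|w_{k-1}^{(z)}|^2+|v_{k-1}^{(z)}|^2\big)\|\psi_{z,k}\|^2$, hence $\frac1n\sum_{k=1}^n\|\bPhi_k^{-1}T_{E+i/n}(k)\|^2$ stays bounded. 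Finally, the stability estimate for $\|T_E(k)-T_{E+i/n}(k)\|$ is controlled (via $|a_{E+i\eta,j}-a_{E,j}|\le\eta\|\psi_{E,j}\|^2$ and $\|\psi_{E+i\eta,j}\|\asymp\|\psi_{E,j}\|$ of Lemma~\ref{lem-sub-keyest}, using $E\notin\overline{B_\infty}$) by $\frac1n\sum_{j\le k}\|\psi_{E,j}\|^2$, so by (b) the bound transfers to real $E$ along the good subsequence of $n$, yielding $E\in\Sigma_\Phi$.

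For part~(ii) I argue by subordinacy. Let $E\in\Sigma_\Psi$ and pick $n_j\to\infty$ with $\sum_{k=1}^{n_j}\|\bpsi_{E,k}T_E(k)\|^2\le M\sum_{k=1}^{n_j}\det\bpsi_{E,k}$. Since $\bpsi_{E,k}\smat{u_k\\u_{k-1}}=\bpsi_{E,k}T_E(k)\smat{u_0\\u_{-1}}$, every solution $u$ of \eqref{eq-eig-u} satisfies $\|u\|_{E,n_j}^2\asymp\sum_{k=1}^{n_j}\|\bpsi_{E,k}T_E(k)\smat{u_0\\u_{-1}}\|^2\le M\|\smat{u_0\\u_{-1}}\|^2\sum_{k=1}^{n_j}\det\bpsi_{E,k}$; on the other hand, the discrete Wronskian $W=u_kv_{k-1}-u_{k-1}v_k$ of two solutions is $k$-independent (each $T_{E,k}$ has determinant $1$), so multiplying $|W|$ by $\det\bpsi_{E,k}=\|\psi_{E,k}\|\,\|\psi_{E,k-1}\|$, summing, and using Cauchy--Schwarz gives $\|u\|_{E,n_j}\|v\|_{E,n_j}\gtrsim|W|\sum_{k=1}^{n_j}\det\bpsi_{E,k}$. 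Combining, for independent solutions the ratio $\|u\|_{E,n_j}/\|v\|_{E,n_j}$ is bounded below uniformly in $j$, so there is no subordinate solution at $E$. By Theorem~\ref{th:subordinacy}, $\{E\notin\overline{B_\infty}:\text{no subordinate solution}\}=\Sigma'_{ac}$ is an essential support of the a.c.\ spectrum with $|\Sigma'_{ac}\setminus\Sigma_{ac}|=0$, so for a.e.\ $E\in\Sigma_\Psi\setminus\overline{B_\infty}$ one gets $E\in\Sigma_{ac}$. For part~(iii) I use the cocycle identity $T_E(k,m)=T_E(k)T_E(m)^{-1}$ together with $T_E(m)^{-1}=R^{-1}T_E(m)^{\top}R$ for $R=\smat{0&-1\\1&0}$ (immediate from the explicit form of $T_{E,j}$ and $\det T_{E,j}=1$) and $\bPhi_m/\det(\bPhi_m)=R\,\bPhi_m^{-1}R^{-1}$, which give $\|T_E(m)^{-1}\bPhi_m/\det(\bPhi_m)\|=\|\bPhi_m^{-1}T_E(m)\|$ and hence $\|\bPhi_{k_n}^{-1}T_E(k_n,m_n)\bPhi_{m_n}/\det(\bPhi_{m_n})\|\le\|\bPhi_{k_n}^{-1}T_E(k_n)\|\,\|\bPhi_{m_n}^{-1}T_E(m_n)\|$; the Cesàro bound of part~(i) is then propagated to the right-hand side along a common subsequence of $n$, following \cite[proof of Thm.~1.2]{LaSi}.

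I expect the main obstacle to be the reciprocal estimate on the Dirichlet solution in the $\psi_{z,k}$-weighted $\ell^2$ norm used in part~(i): in the Jacobi case $\|\psi_{z,k}\|\equiv1$ and this is standard, but here the Last--Simon argument must be rerun keeping track of the weights, and of the fact, noted in Remark~\ref{rem:purepoint}~(ii), that $\Phi_0$ need not be cyclic for $\VV$. A secondary difficulty is the transfer-matrix stability step near $\overline{B_\infty}$, where $\|\psi_{E,k}\|$ may be large; this is exactly why Lemma~\ref{lem-sub-keyest} and the restriction of subordinacy theory to $\RR\setminus\overline{B_\infty}$ are needed, and why the good set of energies in parts~(i) and~(iii) is really $\Sigma_{ac}\setminus\overline{B_\infty}$.
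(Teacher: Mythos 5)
Your part (ii) is correct and is essentially the paper's argument (constancy of the Wronskian, the identity $\langle \vec u,J\vec v\rangle=\det\bpsi_{E,k}$ from symplecticity, Cauchy--Schwarz, then Theorem~\ref{th:subordinacy}), and your reduction in part (iii) via $\|T_E(m)^{-1}\bPhi_m/\det\bPhi_m\|=\|\bPhi_m^{-1}T_E(m)\|$ and submultiplicativity is also the paper's first step. The problems are in how you establish and then use part (i). The paper proves (i) by a purely integrated argument: Theorem~\ref{th:sigma_0}~(ii) gives $\int 1_{\RR\setminus B_\infty}u_{E,n}^2\,\mu_0(dE)\le\|\phi_n\|^2$ directly, the analogous bound $\int 1_{\RR\setminus B_\infty}v_{E,n}^2\,\mu_1^{(1+)}(dE)\le\|\phi_n\|^2$ comes from the once-stripped operator $H^{(1+)}$, and one then integrates against $\mu_{ac}:=\min(\mu_0,\mu_1^{(1+)})$ and applies Fatou. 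Your pointwise route at $z=E+i/n$ rests on two inputs that are not established. First, the ``reciprocal estimate'' $\sum_{k\le n}|v^{(z)}_k|^2\|\psi_{z,k}\|^2\le Cn(\im m_0(z))^{-1}$ for the Dirichlet solution is not a consequence of the resolvent identity; in the Jacobi case the analogous bound is a Jitomirskaya--Last type inequality which holds only at the $\eta$-adapted length scale (here determined by $\sum_{k\le L}\det\bpsi_{E,k}\sim 1/\eta$, not by $L=n$), and no such inequality is proved in this weighted setting. Second, your condition (b) cannot be obtained ``by a Fatou argument'': $\|\psi_{E,k}\|^2=\tfrac{d}{dE}a_{E,k}$ fails to be locally integrable across the eigenvalues of $V_k|\VV_k$ (where $a_{E,k}$ has poles), so $\int_a^b\|\psi_{E,k}\|^2\,dE=\infty$ for infinitely many $k$ on any interval meeting those eigenvalues, and a.e. boundedness of the Ces\`aro averages does not follow.

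The second genuine gap is the deduction of (iii) from (i). Knowing $\liminf_n\frac1n\sum_{k=1}^n\|\bPhi_k^{-1}T_E(k)\|^2<\infty$ at a fixed $E$ gives no control of $\|\bPhi_{k_n}^{-1}T_E(k_n)\|$ along an \emph{arbitrary prescribed} sequence $k_n$: the summands can be enormous on a sparse set containing all the $k_n$ without affecting the Ces\`aro liminf, so ``propagating the Ces\`aro bound along a common subsequence'' does not work. The paper instead uses the uniform-in-$k$ integrated bound $\int\|\bPhi_k^{-1}T_E(k)\|^2\,\mu_{ac}(dE)\le 4$, applies Cauchy--Schwarz inside the integral to get $\sup_n\int\|\bPhi_{k_n}^{-1}T_E(k_n)\|\,\|\bPhi_{m_n}^{-1}T_E(m_n)\|\,\mu_{ac}(dE)<\infty$, and only then invokes Fatou in $n$. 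This step is irreducibly measure-theoretic, which is the main reason the paper proves (i) in integrated form in the first place; to repair your write-up you should adopt that route for both (i) and (iii).
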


Let us finally give some remark on a possible extension of the above theorems.
In general one may also want to consider operators as in \eqref{eq-H-prop} with some infinite dimensional fibers, i.e. allowing $s_n=\infty$ in the sense $\CC^\infty \cong \ell^2(\NN)$.
Then, $V_n$ should be a Hermitian operator on $\ell^2(\NN)$.
As long as all the $V_n$ have pure point spectrum without accumulation point, similar techniques apply, only the sets $A_n$ defined above 
are possibly countably infinite. However, more care must be taken when $V_n$ has some continuous spectrum or dense point spectrum. 
These cases might be interesting for further investigation.

\vspace{.2cm}

Now, for the singular spectrum we will essentially use the Simon-Wolff criterion in combination with 
\eqref{eq-gr-limit} in Theorem~\ref{th:subordinacy}. To show how this leads to the pure point or pure singular continuous spectrum, let us finish this section by proving the following general statement:

\begin{theorem}\label{th:purepoint}
Let $\boldsymbol{\varphi}=\varphi\oplus \bigoplus_{n\geq 1} \nul$ with $\varphi\in\VV_0$.
\begin{enumerate}[{\rm (i)}]
\item Assume that for Lebesgue almost every energy $E\in (a,b) $ we find a subordinate solution $w_{E,n}$ to \eqref{eq-eig-u} at $E$ such that
$$
\sum_{n=0}^\infty |w_{E,n}|^2\, \|\psi_{E,n}\|^2\,<\,\infty\;.
$$
Then, for Lebesgue almost every $c$, the operator $H+c\,|\boldsymbol{\varphi} \rangle \langle \boldsymbol{\varphi}|$ has pure point spectrum in $(a,b)$.
\item Let $(a,b) \cap B_\infty = \emptyset$ and assume that for Lebesgue almost all $E\in (a,b)$ there is a subordinate solution $w_{E,n}$ to
\eqref{eq-eig-u} at $E$ such that
$$
\sum_{n=0}^\infty |w_{E,n}|^2\, \|\psi_{E,n}\|^2\,=\,\infty\;.
$$
Then, for Lebesgue almost every $c$, the spectral measure of $H+c\,|\boldsymbol{\varphi} \rangle \langle \boldsymbol{\varphi}|$ 
at $\bphi$ {\rm (}i.e. $\mu_{0,\varphi}$ as in Theorem~\ref{th:sigma_0}{\rm )} is purely singular continuous in $(a,b)$.
\end{enumerate}
\end{theorem}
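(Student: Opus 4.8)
The plan is to reduce both parts of Theorem~\ref{th:purepoint} to rank-one perturbation theory via the Simon--Wolff criterion, using the subordinacy results of Theorem~\ref{th:subordinacy} and the Green's function boundary values in \eqref{eq-gr-limit} to identify the subordinate solution with the relevant solution of the eigenvalue equation. First I would recall that adding $c\,|\bphi\rangle\langle\bphi|$ with $\bphi\in\VV_0$ is a rank-one perturbation supported on the cyclic subspace $\VV$ (it leaves $\VV$ and the $\VV_n^\perp$ invariant, and acts trivially on the latter), so by Proposition~\ref{prop-spec-decom} the spectral analysis on $(a,b)$ reduces to the part of $H$ on $\VV$, and there the spectral measure at $\bphi$ governs everything since $\bphi$ is cyclic for $H|\VV$ (up to the countable exceptional set $B_\infty$, handled via Theorem~\ref{th:sigma_0}~(ii)). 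The Simon--Wolff criterion then says: $H+c|\bphi\rangle\langle\bphi|$ has pure point spectrum in $(a,b)$ for a.e.\ $c$ iff $\sum$ of squared eigenfunction-type quantities — precisely $\lim_{\eta\downarrow 0}\langle\bphi|(H-E-i\eta)^{-2}|\bphi\rangle = \int \frac{d\mu_{0,\varphi}(E')}{(E'-E)^2}<\infty$ — for Lebesgue-a.e.\ $E\in(a,b)$; and the spectral measure at $\bphi$ is purely singular continuous for a.e.\ $c$ iff this quantity is $+\infty$ a.e.\ while simultaneously there is a.c.\ weight zero (no subordinate-free set of positive measure).

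The key computation is to rewrite $\int \frac{d\mu_{0,\varphi}(E')}{(E'-E)^2}$ in terms of the subordinate solution. By Theorem~\ref{th:subordinacy}, for a.e.\ $E\notin\overline{B_\infty}$ with a subordinate solution, the Green's function limit \eqref{eq-gr-limit} gives $\lim_{\eta\downarrow0}\langle\Phi_n|(H-E-i\eta)^{-1}|\Phi_0\rangle = w_n$ where $w_n=\smat{1&0}T_E(n)\smat{m(E)\\-1}$ is the (normalized) subordinate solution; combined with Theorem~\ref{th:sigma_0}~(i), the analogous limit for the matrix element $\langle\bphi|(H-E-i\eta)^{-1}|\bphi\rangle$ differs by the factor $|\varphi^*\psi_{E,0}|^2$, so the actual solution vector $\bigoplus_n w_n\psi_{E,n}$ is what appears. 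Then the derivative-of-resolvent identity plus the spectral averaging of de Branges / Simon--Wolff converts $\int\frac{d\mu_{0,\varphi}}{(E'-E)^2}$ into (a constant times) $\sum_n |w_{E,n}|^2\|\psi_{E,n}\|^2 = \|w\|^2_{E,\infty}$, which is exactly the quantity appearing in the hypotheses of (i) and (ii). For (i), finiteness of this sum for a.e.\ $E$ yields, by Simon--Wolff, pure point spectrum of the perturbed operator on $\VV$ for a.e.\ $c$, and since the perturbation acts trivially on $\bigoplus_n\VV_n^\perp$ (where $H$ already has only eigenvalues), the whole operator has pure point spectrum in $(a,b)$. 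For (ii), the divergence of the sum combined with the fact that having a subordinate solution at a.e.\ $E\in(a,b)\setminus\overline{B_\infty}$ (here $(a,b)\cap B_\infty=\emptyset$) forces $\Sigma'_{ac}\cap(a,b)$ to have Lebesgue measure zero, hence $\mu_{0,\varphi}$ has no a.c.\ part in $(a,b)$ for every $c$ (a.c.\ spectrum is $c$-independent up to the essential support), while Simon--Wolff gives no point spectrum for a.e.\ $c$; purely singular continuous is what remains.

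The main obstacle I anticipate is the careful identification, for Lebesgue-a.e.\ $E$, of the boundary value of $\langle\bphi|(H-E-i\eta)^{-1}|\bphi\rangle$ and its $\eta$-derivative with the weighted $\ell^2$-norm of the subordinate solution — i.e.\ justifying that $\int\frac{d\mu_{0,\varphi}(E')}{(E'-E)^2}<\infty$ precisely when $\sum_n|w_{E,n}|^2\|\psi_{E,n}\|^2<\infty$. In the classical one-dimensional Jacobi setting this is the standard argument relating the Weyl $m$-function derivative to the $\ell^2$-norm of the Weyl solution; here one must track the extra weights $\|\psi_{E,n}\|^2$ coming from the one-propagating-channel structure and invoke Lemma~\ref{lem-sub-keyest} to control the comparison between $\psi_{z,n}$ and $\psi_{E,n}$ as $z=E+i\eta\to E$, which is exactly why we must stay away from $\overline{B_\infty}$. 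A secondary technical point is bookkeeping the passage between the measure $\mu_{0,\varphi}$ at $\bphi$ and $\mu_0$ at $\Phi_0$ via Theorem~\ref{th:sigma_0}~(i), together with the fact that $\varphi^*\psi_{E,0}\neq0$ off a finite set, so that the two measures are mutually absolutely continuous on $(a,b)$ away from finitely many points and the finiteness/divergence dichotomy transfers cleanly; once these are in place the Simon--Wolff / rank-one machinery finishes both parts.
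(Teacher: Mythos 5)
Your proposal follows essentially the same route as the paper: the Simon--Wolff rank-one criterion applied at $\bphi$, with the boundary values of the Green's function identified with the subordinate solution via Theorem~\ref{th:subordinacy} and \eqref{eq-gr-limit}, the weights $\|\psi_{E,n}\|^2$ controlled by Lemma~\ref{lem-sub-keyest}, and the passage from $\mu_{0,\varphi}$ to $\mu_0$ via Theorem~\ref{th:sigma_0}~(i). Two small cautions: the paper does not prove that $\int d\mu_{0,\varphi}(E')/(E'-E)^2$ \emph{equals} a constant times $\sum_n|w_{E,n}|^2\|\psi_{E,n}\|^2$ --- for (i) it gets finiteness directly from $(H-E)\bigl(\bigoplus_n w_n\psi_{E,n}\bigr)=-w_{-1}\Phi_0$, and for (ii) divergence via Fatou --- which is all that is needed; and your assertion that $\bphi$ is cyclic for $H|\VV$ is false in general (see Remark~\ref{rem:purepoint}~(ii)), so the final step must instead go through Theorem~\ref{th:sigma_0} and the fact that for a.e.\ $c$ the vector $\varphi$ lies in the cyclic space of $\phi_0$ for $V_0+c\,\varphi\varphi^*$, as you partly indicate.
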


\begin{proof}
For part (i) first note that if $w_{n}$ is a subordinate solution of \eqref{eq-eig-u} for $E\not \in B_\infty$ such that $\sum_n |w_n|^2 \|\psi_{E,n}\|^2 < \infty$, then either
$w_{-1}=0$ and $\psi:=\bigoplus_n w_n \psi_{E,n}$ is an eigenvector of $H$ for the eigenvalue $E$, or
$w_{-1}\neq 0$ and $(H-E)\psi\,=\,- w_{-1}\,\Phi_0$. 
In the latter case one finds
$$
\sup_{\eta>0} \|(H-E-i\eta)^{-1}\,\Phi_0\|^2\,=\,\sup_{\eta>0}\,\left\|\frac{H-E}{H-E-i\eta}\;\;\psi/w_{-1} \right\|^2\,\leq\,\|\psi\|^2\,/\,|w_{-1}|^2\,<\,\infty\;.
$$
Using the Green's function identities in Lemma~\ref{lem:gr} one finds
$$
(H-z)^{-1} \boldsymbol{\varphi}=\left((V_0-z)^{-1}(\varphi-\phi_0\psi_{\bar z,0}^*\varphi) \right)\,\oplus\,\bigoplus_{n=1}^\infty \left( (\psi_{\bar z,0}^*\varphi) g_z(n,0) \psi_{z,n} \right)\,.
$$
where $g_z(n,m)= \langle \Phi_n|(H-z)^{-1}\Phi_m\rangle$ (Our scalar product $\langle\cdot|\cdot\rangle$ is linear in the second and anti-linear in the first component.) Comparing the general case with $\varphi=\phi_0$ (i.e. $\bphi=\Phi_0$)
we see that for $E\not \in \spec(V_0|\VV_0)\cup A_0$ it follows that
$$\sup_{\eta>0} \|(H-E-i\eta)^{-1} \bphi\, \|\,<\,\infty\;.$$
 As there are only countably many eigenvalues of $H$ and $V_0$, the latter equation is true for Lebesgue almost all $E\in(a,b)$.
Using the Simon-Wolff criterium, Theorem~2' in \cite{SW}, we find for $c\in L_1$, a set of full Lebesgue measure, that the spectral measure at
$\bphi$ of $H+c|\bphi\rangle \langle \bphi|$ in $[a,b]$ is pure point. Clearly, for $c\in L_2$, another set of full Lebesgue measure, $\varphi$ is 
in the cyclic space of $\Phi_0$ with respect to $V_0+c\; \varphi \varphi^*$. Hence, by Theorem~\ref{th:sigma_0} (Remark~\ref{rem:purepoint}~(i))
the operator $H+c|\bphi\rangle \langle \bphi|$ has pure point spectrum in $[a,b]$ for $c\in L_1\cap L_2$.

For part (ii) note that for $(a,b)\cap B_\infty=\emptyset$ we can apply Theorem~\ref{th:subordinacy} and \eqref{eq-gr-limit}.  We will use the notations as in Theorem~\ref{th:subordinacy}.
By assumption, the Lebesgue measure of $\Sigma'_{ac}\cap (a,b)$ is zero, therefore, there is no a.c. spectrum in $(a,b)$.

As the Lebesgue measure of $\Sigma'_0$ is zero, there is still a set of energies of full Lebesgue measure where
 $w_{-1}=-1$ and $w_n=\lim_{\eta\downarrow 0} g_z(n,0)$ is a subordinate solution with $\sum_n |w_n|^2 \|\psi_{E,n}\|^2\,=\,\infty$.
This implies 
$$
\sup_{\eta>0} \|(H-E-i\eta)\Phi_0\|^2\,=\,\infty \qtx{and} \sup_{\eta>0} \|(H-E-i\eta)^{-1} |\bphi \|^2\,=\,\infty\;
$$
for Lebesgue almost all energies $E\in [a,b]$. By the Simon-Wolff criterium this means that for Lebesgue almost every $c$ the spectral measure of
$H+c|\bphi\rangle \langle \bphi|$ at $\bphi$
is purely continuous in $(a,b)$. As there can be no a.c. spectrum in $(a,b)$ as mentioned above,  it has to be purely singular continuous.
\end{proof}

\section{Proof of Lemma~\ref{lem-sub-keyest} \label{sec:key-estimate-sub}}

The proof of part~(ii) of Theorem~\ref{th:purepoint} depends heavily on the subordinacy theory 
and in particular on \eqref{eq-gr-limit}.  
In fact, subordinacy theory is the key for parts (ii) to (iv) of Theorem~\ref{th:main}. 
As we mentioned above, the estimates given in Lemma~\ref{lem-sub-keyest} are crucial which we prove in this section. They may also be useful in other
circumstances. Recall that we want to show:
\begin{enumerate}[{\rm (i)}]
\item For any $E \in \RR\setminus B_\infty$ we have $ |a_{E+i\eta,n}-a_{E,n}|\,/\,\|\psi_{E,n}\|^2 \,\leq\,\eta$ and $\|\psi_{z,n}\|\,\leq\,\|\psi_{E,n}\|$.
\item If $(E-3\varepsilon\,,\,E+3\varepsilon) \cap B_\infty = \emptyset$ then we find a uniform constant $C>0$ independent of $n$ (and in fact of $V_n$) 
such that for all $0\leq \eta \leq \varepsilon$ one has
$
1\,\leq\, \|\psi_{E,n}\|\,/\,\|\psi_{E+i\eta,n}\|\,\leq\,C\;.
$
\end{enumerate}
One may note that these estimates are completely trivial in the one dimensional Jacobi matrix case\footnote{Indeed, in that case one simply has
$s_n=1$, $a_{z,n}=(z-v_n)/(|\phi_n|^2)$ and $\psi_{z,n}=1/(|\psi_n|^2)$}.

\begin{proof}[Proof of Lemma~\ref{lem-sub-keyest}]
For part (i) note that for $z=E+i\eta$
$$
|a_{z,n}-a_{E,n}|\,\geq\,\big|\im(a_{z,n})\big|\,=\,
\eta\,\|\psi_{z,n}\|^2
\,\geq\,
\eta\,\left| \frac{d}{dz} a_{z,n} \right|\,\geq\,\eta\,\left|\frac{d}{d\eta}\left( |a_{E+i\eta}-a_E|\,\right)\right|\,.
$$
Using the mean value theorem it follows that for any $\eta>0$ there is $0<\eta'<\eta$ such that 
$\left|\frac{d}{d\eta} (|a_{E+i\eta'}-a_E|)\right|\geq\left|\frac{d}{d\eta} (|a_{E+i\eta}-a_E|)\right|$ and hence the maximum derivative must be at $\eta=0$.
Thus, by the mean value theorem, 
$$\|\psi_{E+i\eta,n}\|^2 \leq \eta^{-1} |a_{E+i\eta}-a_E| \,=\, \left| \frac{d}{d\eta}\left( |a_{E+\eta',n}-a_{E,n}|\right)\right| \,\leq\,
\left| \frac{d}{dE} a_{E,n} \right|=\|\psi_{E,n}\|^2
$$ proving \eqref{eq-imp-est}.

For part (ii) note that the first inequality is proved in part (i) and we only need to worry about an upper bound for $\|\psi_{E,n}\|^2\,/\,\|\psi_{E+i\eta,n}\|^2$ uniformly in $V_n$ and $\phi_n$. Without loss of generality
we may assume $E=0$ by changing $V_n$ to $V_n-E$. Let us further use an orthonormal basis of eigenvectors in $\VV_n$ such that $V_n|\VV_n=\diag(-x_1,\ldots,-x_k,y_1,\ldots,y_l)$ where $x_i\geq 0$ and $y_j>0$. All these values are different
as $\phi_n$ is a cyclic vector. Moreover, we let $\phi_n=(\alpha_1,\ldots,\alpha_k,\beta_1,\ldots,\beta_l)^\top$ in this basis and let $a_i=|\alpha_i|^2>0$ and $b_j=|\beta_j|^2>0$.
By cyclicity, none of these values is zero.
Then we define
$$
f(x)\,:=\,\phi_n^* (x-V_n)^{-1} \phi_n\,=\, \sum_i \frac{a_i}{x_i+x}\,-\,\sum_j \frac{b_j}{y_j-x}\;.
$$
By assumption $(-3\varepsilon,3\varepsilon)\cap B_\infty =\emptyset$ and hence $f(x)\neq 0$ for $|x|<3\varepsilon$. Moreover, $f(x)$ is decreasing from $+\infty$ to $-\infty$ between single poles located at the $-x_i$ and $y_j$. 
So $f(x)$ has at most one pole inside $(-3\varepsilon,3\varepsilon)$ which we may assume to be $-x_1$ (the case of a pole at a different $-x_i$ or some $y_j$ inside $(-3\varepsilon,3\varepsilon)$ is completely analogue).
Thus, without loss of generality we may assume
$f(\delta)\geq 0$ for all $0<\delta\leq 3\varepsilon$ and $y_j>3\varepsilon$ for all $j$ and $x_i>3\varepsilon$ for $i\geq 2$.\\
\noindent Case 1: We have $x_1\in [0,\varepsilon]$.
Then $f(-3\varepsilon)\leq 0$ and $f(3\varepsilon)\geq 0$ and for any $0\leq \eta \leq 3\varepsilon$ we find
\begin{align}
& \frac{a_1}{4\varepsilon}\geq\frac{a_1}{x_1+3\varepsilon}\geq \sum_{j=1}^l \frac{b_j}{y_j-3\varepsilon}\,-\,\sum_{i=2}^k \frac{a_i}{x_i+3\varepsilon}\,\geq\,
\sum_{j=1}^l \frac{b_j y_j}{y_j^2+\eta^2}\,-\,\sum_{i=2}^k \frac{a_i x_i}{x_i^2+\eta^2} \\
& \frac{a_1}{2\varepsilon}\geq\frac{a_1}{3\varepsilon-x}\geq \sum_{i=2}^k \frac{a_i}{x_i+3\varepsilon}\,-\, \sum_{j=1}^l \frac{b_j}{y_j-3\varepsilon}\,\geq\,
\sum_{i=2}^k \frac{a_i x_i}{x_i^2+\eta^2}\,-\,\sum_{j=1}^l \frac{b_j y_j}{y_j^2+\eta^2}\;.
\end{align}
Combining both equations gives
$$
\left|\,\sum_{j=1}^l \frac{b_j y_j}{y_j^2+\eta^2}\,-\,\sum_{i=2}^k \frac{a_i^2 x_i}{x_i^2+\eta^2}\,\right|\,\leq\,\frac{a_1}{2\varepsilon}\,\leq\,\frac{a_1}{2x_1}
$$
and summing both equations and dividing by $6\varepsilon$ gives
$$
\frac{a_1}{8\varepsilon^2}\,\geq\,\sum_{i=2}^k \frac{a_i}{x_i^2-9\varepsilon^2}\,+\,\sum_{j=1}^l \frac{b_j}{y_j^2-9\varepsilon^2}\,>\,
\sum_{i=2}^k \frac{a_i}{x_i^2+\eta^2}\,+\,\sum_{j=1}^l \frac{b_j}{y_j^2+\eta^2}
$$
for any $0\leq \eta \leq \varepsilon$.
Using these estimates one finds
$$
\|\psi_{0,n}\|^2\,=\,
\frac{\sum_{i=1}^k a_i /{x_i^2}+\sum_{j=1}^l {b_j}/{y_j^2}}{\left(\sum_{i=1}^k {a_i}/{x_i}-\sum_{j=1}^l {b_j}/{y_j}\right)^2}\,\leq\,
\frac{a_1/x_1^2+a_1/(8x_1^2)}{\left(a_1/x_1 - a_1/(2x_1)\right)^2}\,\leq\,\frac{9}{2 a_1}
$$
and
\begin{align*}
\frac{1}{\|\psi_{i\eta,n}\|^2}\,&=\,
\frac{\left(\sum_{i=1}^k \frac{a_i x_i}{x_i^2+\eta^2}-\sum_{j=1}^l \frac{b_j y_j}{y_j^2+\eta^2}\right)^2}{\sum_{i=1}^k \frac{a_i }{x_i^2+\eta^2}+
\sum_{j=1}^l \frac{b_j }{y_j^2+\eta^2}}\,+\,
\sum_{i=1}^k \frac{a_i \eta^2}{x_i^2+\eta^2}+\sum_{j=1}^l \frac{b_j \eta^2}{y_j^2+\eta^2}\,\\
&\leq\, \frac{\left(\frac{a_1 x_1}{x_1^2+\eta^2}+\frac{a_1}{2\varepsilon}\right)^2}{\frac{a_1}{x_1^2+\eta^2}\,-\,\frac{a_1}{8\varepsilon^2}} \;+\;\eta^2 
\left(\frac{a_1}{x_1^2+\eta^2}+\frac{a_1}{8\varepsilon^2}\right) \,\leq\,
\frac{\frac{2a_1^2 x_1^2}{(x_1^2+\eta^2)^2}+\frac{a_1^2}{\varepsilon^2}}{\frac34 \,\frac{a_1}{x_1^2+\eta^2}}\,+\,\frac{9a_1}{8}\,\\
&\leq\,a_1\left(\frac{8}{3}\,\frac{x_1^2}{x_1^2+\eta^2}\,+\,\frac{4}{3}\,\frac{x_1^2+\eta^2}{\varepsilon^2}\,+\,\frac{9}{8} \right)\,\leq\,
a_1\,\left( \frac{16}{3}\,+\,\frac{9}{8}\right)
\end{align*}
where we used $\eta,x_1\,\leq\varepsilon$ at several places.
Both estimates together give the required uniform upper bound on $\|\psi_{0,n}\|\,/\,\|\psi_{i\eta,n}\|$ which also remains valid in the limiting case $x_1= 0$.\\
\noindent Case 2: $x_i>\varepsilon$ for all $i$, recall that also $y_j>3\varepsilon$ for all $j$.
We find for $x\in (0,\varepsilon)$ that
$$
|f'(x)|=\sum_{i=1}^k \frac{a_i}{(x_i+x)^2}\,+\,\sum_{j=1}^{l} \frac{b_j}{(y_j-x)^2}\,\geq\,\frac{1}{4}\left(\sum_{i=1}^k \frac{a_i}{x_i^2}+\sum_{j=1}^l \frac{b_j}{y_j^2} \right)\,=\,\frac{|f'(0)|}{4}\;.
$$
Combining this estimate with the fact that $f$ is decaying on $(0,\varepsilon)$ and $f(\varepsilon)>0$ 
we find $f(0)\geq\frac\varepsilon{4} \,|f'(0)| $. Using  $\left|\frac{a_i}{x_i+\eta^2/x_i}-\frac{a_i}{x_i}\right|\leq \varepsilon \frac{a_i}{x_i^2}$ 
coming from $\eta^2/x_i < \varepsilon$, we obtain 
$$
\left|\,\sum_i \frac{a_i x_i}{x_i^2+\eta^2}-\sum_j\frac{b_j y_j}{y_j^2+\eta^2} \,-\, f(0)\right|\;\leq\; \varepsilon |f'(0)|\;
$$
for $\eta\leq\varepsilon$.
With this estimate and the expressions for $\|\psi_{0,n}\|^{2}$ and $\|\psi_{i\eta,n}\|^{-2}$ as above we then find for $\eta\in [0,\varepsilon]$ that
$$
\|\psi_{0,n}\|^2\,=\,\frac{|f'(0)|}{(f(0))^2}\;,\qquad
\frac{1}{\|\psi_{i\eta,n}\|^2}\,\leq\, \frac{\left(f(0)+\varepsilon |f'(0)| \right)^2}{\frac12 |f'(0)|}\;+\;\varepsilon^2 |f'(0)|
$$
which using $\varepsilon |f'(0)| / f(0) \leq 4$ gives $\|\psi_{0,n}\|\,/\,\|\psi_{i\eta,n}\|\,\leq\,\sqrt{66} $.
\end{proof}



\section{The essential spectrum \label{sec:spec-H}}

\subsection{Relation of resolvents of $H$ and the Jacobi operators $\Jj_E$}

We first prove Proposition~\ref{prop-spec_ess} and will then apply it to the Anderson model $H_\lambda$ on $\Ab_\bfs$.
Let us introduce the following notations:
\begin{defini}
For a sequence $w=(w_n)_n$, $w_n\in \CC^{\ZZ_+}$ we define
$$
\psi_E \odot w\,:=\,\bigoplus_{n=0}^\infty (w_n \psi_{E,n}) \qtx{and} \phi\odot w\,:=\,\bigoplus_{n=0}^\infty (w_n\,\phi_n)\,.
$$
If the resulting direct sums do not give $\ell^2$ vectors, then one may still understand it formally as a collection of vectors $(w_n\phi_n)_n \in \bigtimes_n \CC^{s_n}$.
\end{defini}
A crucial part is now the following equation that is easy to verify,
\begin{equation}
 (\Hh-z)\,(\psi_z\odot w)\,=\,\phi\odot (\Jj_z(w))\,,
\end{equation}
with $\Jj_z$ as in \eqref{eq-def-Jj_z}. Another important point which is not hard to check is that $(\Hh-z)\psi=\phi\odot u$ for some sequence $u$ implies $\psi=\psi_z\odot w$,  $\Jj_z(w)=u$ and $w_n=\phi_n^*\psi(n)$.

\begin{proof}[Proof of Proposition~\ref{prop-spec_ess}]
Recall that we assumed $\|\phi_n\|<C$ for some $C$ uniformly in $n$.
First, assume $E\not\in \spec(H)$, then $(H-E)^{-1}$ exists as a bounded operator.
Then by uniform boundedness of $\|\phi_n\|$, the operator
$$
(\Rr w)_n\,:=\,\langle \Phi_n\,|\,(H-E)^{-1} (\phi\odot w)\,\rangle
$$
is bounded on $\ell^2(\ZZ_+)$. Moreover, it is not hard to check that $(H-E)^{-1} (\phi\odot w) = \psi_E \odot \Rr w$ and hence
$\phi\odot w = (H-E) (\psi_E\odot \Rr w) = \phi\odot \Jj_E(\Rr w)$. Hence, $\Jj_E \Rr\, w = w$ for all $w$ and therefore
$\Jj_E \Rr=\mathbbm{1}$, $\Rr^*\Jj_E=\mathbbm{1}$ and $\Rr^*=\Rr^* \Jj_E \Rr = \Rr$. Thus, $\Rr=\Jj_E^{-1}$ and $0\not\in\spec(\Jj_E)$.
Thus, we have shown $0\in \spec(\Jj_E)\Rightarrow E \in \spec(H)$.

Now let $0\not\in \spec(\Jj_E)$, i.e. $\Jj_E^{-1}$ exists and assume $E\not \in \spec(\Vv)$, $\Vv\in\bigoplus_{n=0}^\infty V_n$. Then, 
\begin{align}
 & (\Hh-E)(\Vv-E)^{-1}\,\psi\,=\,\psi\,-\, \phi\odot w(\psi) \quad\text{where} \\
 & w(\psi)_n\,=\,\phi_{n+1}^*(E-V_{n+1})^{-1}\psi_{n+1}\,+\phi_{n-1}^*(E-V_{n-1})^{-1}\psi_{n-1}\,.
\end{align}
It follows that $(\Hh-E)(\Aa_E\,\psi)=\psi$ for
\begin{align}
 \Aa_E \,\psi\,:=\,(\Vv-E)^{-1}\,\psi\,+\,\psi_E\odot \Jj_E^{-1}(w(\psi))\,.
\end{align}
Hence, $E$ is in the resolvent set and $\Aa_E=(\Hh-E)^{-1}$ if $\Aa_E$ defines a bounded operator on $\HH$.
Now, $(\Vv-E)^{-1}$ is bounded and $\|w(\psi)\|_2\leq 2 (\sup_n \|\phi_n\|) \|(\Vv-E)^{-1}\|\, \|\psi\|$.
Therefore, it is enough to show that $A_E(w)=\psi_{E}\odot \Jj_E^{-1}(w)$ is bounded from $\ell^2(\ZZ_+)$ to $\bigoplus_n \CC^{s_n}$.

By our assumptions, $\|\psi_{E,n}\|=\|(V_n-E)^{-1}\phi_n\| |a_{E,n}| \leq C |a_{E,n}|$ for some constant $C$.
Define the scalar product norm $\|v\|^2_{a_E}=\sum_n (1+a_{E,n}^{2})|v_n|^2$ for sequences $v=(v_n)_n$.
This norm is equivalent to the norm $\|v\|^2_{\Jj_E}=\|v\|^2_2+\|\Jj_E v\|_2^2$.
Therefore, the domain of $\Jj_E$ is exactly the 
set where $\| \cdot\|_{a_E}$ is bounded and $\Jj_E$ has an inverse on $\ell^2(\ZZ_+)$ (i.e. $0$ is in the resolvent set),
if and only if $\Jj_E$ is invertible as an operator from $\ell^2(\ZZ_+,\|\cdot\|_{a_E})$ to $\ell^2(\ZZ_+)$.
In particular, in this case one has a bound for the norm $\|\Jj_E^{-1}(w)\|_{a_E}$ in terms of $\|w\|_2$.
Hence, if $0\not\in\spec(\Jj_E)$, i.e. $\Jj_E^{-1}$ exists, then
$$
\big\|\psi_E\odot \Jj_E^{-1}(w)\big\|\,\leq\,C\,\|\Jj_E^{-1}(w)\|_{a_E}\,\leq\,C\|\Jj_E^{-1}\|_{2\to a_E}\,\| w\|_2\,.
$$
This shows boundedness of $A_E$, and hence of $\Aa_E$ and $E\not\in\spec(\Hh)$. Therefore we have shown $E\in\spec(H)\Leftrightarrow 0\in\spec(\Jj_E)$.
Now let $\Jj_E^{(n)}$ denote the restriction of $\Jj_E$ to $\ell^2(\{m\in\ZZ:m\geq n\}$ and let $H^{(n)}$ denote the restriction of $H$ to $\bigoplus_{m\geq n} \CC^{s_m}$. Then,
\begin{align*}
& 0\in\spec_\ess(\Jj_E) \;\Leftrightarrow\; 0\in \spec(\Jj_E^{(n)}) \text{ for all } n\in\ZZ_+\;\Leftrightarrow\;\\ & E\in \spec(H^{(n)}) \text{ for all } n\in\ZZ_+\;\Leftrightarrow\,E\in\spec_\ess(H).
\end{align*}
The first equivalence is a special case of the last equivalence when setting $s_n=1$ for all $n$ and the second equivalence was proved above\footnote{If $E\in \spec(\Vv)$ then we have only the implication '$\Rightarrow$' in the middle}, so let us consider the last claimed equivalence.
The direction '$\Leftarrow$' is clear as finite rank perturbations like formally changing $\phi_{n-1}$ to $\nul$, do not change the essential spectrum. Assume $E\in\spec(H^{(n)})$ for all $n$ and $E\not\in\spec_\ess(H)$. 
This would imply that $E$ is an isolated eigenvalue of finite multiplicity for any $H^{(n)}$.
As $E\not\in B_\infty\cup \spec(\Vv)$ and $\|\psi_{E,n}\|\geq 1/\|\phi_n\|>c>0$ uniformly, 
this means one finds two linear independent solutions $(u_n)_n$ and $(v_n)_n$ to \eqref{eq-eig-u} that are both $\ell^2$ at infinity.
This contradicts the invariance\footnote{The invariance can be easily checked, comes from the fact that the determinant of the transfer matrices is $1$} 
of the Wronskian $W=u_{n+1}v_n-v_{n+1}u_n$ that can not go to zero.
\end{proof}

\subsection{Essential spectrum of the Anderson model}

In this subsection we will prove Proposition~\ref{prop-spec-H}.
From now on (except for the appendix) we will consider the random operator $H_\lambda$ as in \eqref{eq-H_lb}.
Recall that for $H_\lambda$ we have $\phi_n=s_n^{-1/2}\, (1,1,\ldots,1)^\top$ and $\lambda\,V_n=\lambda\, \diag(v_{n,1},\ldots,v_{n,s_n})$
which replaces $V_n$ in the definitions of $a_{E,n}$, $\psi_{E,n}$, $T_{E,n}$ and $T_E(n)$, i.e.
\begin{equation}\label{eq-harm-entry}
a_{E,n}\,=\,(\phi_n^* (E-\lambda V_n)^{-1} \phi_n)^{-1}\,=\,\frac{1}{\frac1{s_n}\,\sum_{j=1}^{s_n} \frac{1}{E-\lambda v_{n,j}}}\;,
\end{equation}
\begin{equation}\label{eq-def-psi-T}
\psi_{E,n}=a_{E,n} (E-\lambda V_n)^{-1}\,\phi_n\qtx{and}
T_{E,n}=\pmat{a_{E,n} & -1 \\ 1 & 0}\;.
\end{equation}
These are now random objects as $(v_{n,j})_{n,j}$ is a family of
independent identically distributed random variables supported  in $[-1,1]$ with mean zero, $\EE(v_{n,j})=0$, and positive variance.
As before we let $\Vv=\bigoplus_n V_n$ so that $H_\lambda=\Aa_\bfs+\lambda \Vv$.
Similar as in \eqref{eq-def-Jj_z} we define the Jacobi operator with potentials $-a_{E,n}$ on $\ell^2(\ZZ_+)$ which is now a random, $\lambda$-dependent Jacobi operator
which we call $\Jj_{E,\lambda}$. Let us start with the following Lemma.
\begin{lemma}\label{lem-E-in-spec}
Let $\bfs=(s_n)_n$ be any sequence of positive integers characterizing the antitree $\Aa_\bfs$ and the operator $\Aa_\bfs$.
Let $E$ be an energy that is almost surely not in $B_\infty$ such that there exists a real $k$ so that for any $\varepsilon>0$ we have $\liminf_{n\to\infty} \PP(|a_{E,n}-2\cos(k)|<\varepsilon)>0$, formally
$$
E\in \RR\,,\quad\PP(E\in B_\infty)\,=\,0\qtx{and} \exists\,k\in\RR\;\forall\,\varepsilon\,>0\,:\, \liminf_{n\to\infty} \PP(|a_{E,n}-2\cos(k)|<\varepsilon)\,>0\,.
$$
Then almost surely $E\,\in\, \spec(H_\lambda)$.
\end{lemma}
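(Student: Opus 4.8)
The plan is to show that, almost surely, $0\in\spec(\Jj_{E,\lambda})$ for the Jacobi operator $(\Jj_{E,\lambda}u)_n=u_{n+1}+u_{n-1}-a_{E,n}u_n$ on $\ell^2(\ZZ_+)$; since $\|\phi_n\|=1$ for all $n$ and $\PP(E\in B_\infty)=0$, the implication $0\in\spec(\Jj_{E,\lambda})\Rightarrow E\in\spec(H_\lambda)$ from Proposition~\ref{prop-spec_ess} (valid whenever $\sup_n\|\phi_n\|<\infty$ and $E\notin B_\infty$) then gives the claim. Two simple facts would be used throughout: by \eqref{eq-harm-entry} each $a_{E,n}$ is a function of $(v_{n,1},\dots,v_{n,s_n})$ alone, so the $a_{E,n}$, $n\geq0$, are independent and almost surely all finite; and for the fixed $k\in\RR$ the sequence $u_n=e^{ikn}$ solves $u_{n+1}+u_{n-1}=2\cos(k)\,u_n$, which is why a truncation of $(e^{ikn})_n$ is a Weyl sequence for $0$ once $a_{E,n}$ stays near $2\cos(k)$ on a long block of indices.

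First I would produce long `good' blocks via Borel--Cantelli. Fixing $\varepsilon>0$ and $L\in\NN$, the hypothesis gives $p>0$ and $n_0$ with $\PP(|a_{E,n}-2\cos(k)|<\varepsilon)\geq p$ for $n\geq n_0$; letting $G_j$ be the event that $|a_{E,n}-2\cos(k)|<\varepsilon$ for all $n\in[n_0+jL,n_0+(j+1)L)$, the $G_j$ are independent (disjoint index blocks), $\PP(G_j)\geq p^L>0$, so $\sum_j\PP(G_j)=\infty$ and almost surely infinitely many $G_j$ occur. Doing this for a sequence of pairs $(\varepsilon_m,L_m)$ with $\varepsilon_m\downarrow0$, $L_m\to\infty$, and intersecting these countably many probability-one events with $\{E\notin B_\infty\}$, I would obtain a full-probability event on which, for each $m$, some (strictly increasing) $N_m$ satisfies $|a_{E,n}-2\cos(k)|<\varepsilon_m$ for all $n\in[N_m,N_m+L_m)$.

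On that event, and for $m$ large enough that $L_m\geq2$, I would take $u^{(m)}\in\ell^2(\ZZ_+)$ with $u^{(m)}_n=e^{ikn}$ on $[N_m,N_m+L_m)$ and $u^{(m)}_n=0$ elsewhere, so $\|u^{(m)}\|^2=L_m$. On the interior indices $\{N_m+1,\dots,N_m+L_m-2\}$ one computes $(\Jj_{E,\lambda}u^{(m)})_n=e^{ikn}(2\cos(k)-a_{E,n})$, of modulus $<\varepsilon_m$; at the only other indices where $\Jj_{E,\lambda}u^{(m)}$ is nonzero, namely $n\in\{N_m-1,N_m,N_m+L_m-1,N_m+L_m\}$, the coefficient $a_{E,n}$ — when it appears at all — multiplies a value that vanishes outside $[N_m,N_m+L_m)$, where $|a_{E,n}|\leq2+\varepsilon_m$, so $|(\Jj_{E,\lambda}u^{(m)})_n|\leq3+\varepsilon_m$ there. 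Hence $\|\Jj_{E,\lambda}u^{(m)}\|^2\leq\varepsilon_m^2L_m+C$ for a constant $C$, giving $\|\Jj_{E,\lambda}u^{(m)}\|/\|u^{(m)}\|\to0$; since the $u^{(m)}$ lie in the domain of the self-adjoint operator $\Jj_{E,\lambda}$, this forces $0\in\spec(\Jj_{E,\lambda})$ (in fact $0\in\spec_\ess(\Jj_{E,\lambda})$ since $N_m\to\infty$), and Proposition~\ref{prop-spec_ess} finishes the argument.

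The computations in the last step are routine, no different from the constant-coefficient case. The only step needing care is the first one: one must use that the $a_{E,n}$ are genuinely independent — which is why the harmonic-mean form \eqref{eq-harm-entry} and the independence of the $v_{n,j}$ enter — so that Borel--Cantelli can be run along disjoint blocks to find, arbitrarily far out, arbitrarily long runs on which $a_{E,n}$ stays $\varepsilon_m$-close to $2\cos(k)$; and one should note that intersecting the countably many full-probability events (one per pair $(\varepsilon_m,L_m)$) with $\{E\notin B_\infty\}$ still has full probability, which is immediate.
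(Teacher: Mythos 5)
Your proposal is correct and follows essentially the same route as the paper: reduce to showing $0\in\spec(\Jj_{E,\lambda})$ almost surely via Proposition~\ref{prop-spec_ess}, and construct Weyl sequences from truncated plane waves $e^{ikn}$ supported on long blocks where $a_{E,n}$ stays $\varepsilon$-close to $2\cos(k)$, whose almost sure existence follows from the independence of the $a_{E,n}$ and Borel--Cantelli. The paper leaves the disjoint-block/Borel--Cantelli step implicit (referring to ``the usual proof as for the almost sure spectrum of the one-dimensional Anderson model''); you spell it out, which is exactly the intended argument.
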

\begin{proof}
 It is sufficient to show that we have $0\in\spec(\Jj_{E,\lambda})$ almost surely. For this we follow the usual proof as for the almost sure spectrum of the one-dimensional Anderson
 model by constructing a Weyl sequence.
 Let $\varepsilon>0$, then there is $\delta>0$ and $N>0$ such that  for all $n>N$ we have $\PP(|a_{E,n}-2\cos(k)|>\varepsilon)>\delta$.
 Therefore, with probability one we find arbitrarily long sequences where $a_{E,n}$ is $\varepsilon$-close to $2\cos(k)$, i.e. there is a (random) sequence $n_m,\,m\in\ZZ_+$ such that
 for all $m$ and all $1\leq j\leq m$ we have $|a_{E,n_m+j}-2\cos(k)|<\varepsilon$.
 Define 
$$
w_{m,n}\,=\,\begin{cases}
               0 & \quad \text{for $n>n_m+m$ or $n\leq n_m$} \\
              e^{ikn}  & \quad \text{for $n_m<n\leq n_m+m$}
              \end{cases} 
\quad \text{and consider $w_m=(w_{m,n})_n\in\ell^2(\ZZ_+)$}
$$
Then $\|w_m\|_2^2=m$ and $\|\Jj_{E,\lambda} w_m\|_2^2\,\leq\,2+\varepsilon^2 m$, thus $\limsup_{m\to\infty} \|\Jj_{E,\lambda} w_m\|_2\,/\,\|w_m\|_2\,\leq\,\varepsilon$ showing
$[E-\varepsilon,E+\varepsilon]\cap \spec(J_\lambda)\,\neq\,\emptyset$ almost surely, for any fixed $\varepsilon>0$.
Now take a sequence $\varepsilon_j\to 0$ and one obtains almost surely that $E\in\spec(\Jj_{E,\lambda})$.
\end{proof}
We also need the following
\begin{lemma}\label{lem-a-to-h}
Let $s_n>cn^\alpha$ for some $c>0, \alpha>0$ and let the single site distribution $\PP_v$ be compactly supported. 
We have almost surely for $n\to\infty$ that $a_{E,n}^{-1} \to h_{E,\lambda}^{-1}$ for all $E\not\in\lambda\supp(\PP_v)$, i.e.
 $a_{E,n}\to h_{E,\lambda}$ if $\EE(1/(E-\lambda v_{n,j}))\neq 0$ and $|a_{E,n}|\to \infty$ if $\EE(1/(E-\lambda v_{n,j}))= 0$.
\end{lemma}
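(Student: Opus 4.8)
The plan is to read off from \eqref{eq-harm-entry} that $a_{E,n}^{-1}=\frac1{s_n}\sum_{j=1}^{s_n}\frac1{E-\lambda v_{n,j}}$ is the empirical mean of $s_n$ i.i.d.\ bounded random variables whose common expectation is exactly $h_{E,\lambda}^{-1}=\EE_v(1/(E-\lambda v))$ (cf.\ \eqref{eq-def-h}), and then to upgrade the resulting law of large numbers into an almost sure statement holding simultaneously for all admissible $E$. First I would fix a compact set $K\subset\RR\setminus\lambda\supp(\PP_v)$. Since $\lambda\supp(\PP_v)$ is compact and disjoint from $K$, the number $d_K:=\operatorname{dist}(K,\lambda\supp(\PP_v))$ is positive; put $M_K:=1/d_K$. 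On the almost sure event that $v_{n,j}\in\supp(\PP_v)$ for all $n,j$ one has $|E-\lambda v_{n,j}|\geq d_K$, hence $|1/(E-\lambda v_{n,j})|\leq M_K$ for every $E\in K$, and in particular $|h_{E,\lambda}^{-1}|\leq M_K<\infty$.

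Next I would fix $E\in K$ and apply a concentration inequality. As the summands $1/(E-\lambda v_{n,j})$, $1\leq j\leq s_n$, are i.i.d., lie in $[-M_K,M_K]$ and have mean $h_{E,\lambda}^{-1}$, Hoeffding's inequality gives $\PP\big(|a_{E,n}^{-1}-h_{E,\lambda}^{-1}|\geq\varepsilon\big)\leq 2\exp(-c_K s_n\varepsilon^2)$ with a constant $c_K>0$ depending only on $K$. This is where the hypothesis $s_n>cn^\alpha$ is essential: $\sum_n\exp(-c_K c\,n^\alpha\varepsilon^2)<\infty$ for every $\varepsilon>0$ and every $\alpha>0$, so Borel--Cantelli (applied with $\varepsilon=1/m$, $m\in\NN$) gives $a_{E,n}^{-1}\to h_{E,\lambda}^{-1}$ almost surely. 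I would use the exponential tail bound rather than a variance estimate precisely because the latter would only close the Borel--Cantelli argument for $\alpha>1$. Intersecting the exceptional null sets over a countable dense set $D\subset K$ yields one almost sure event on which $a_{E,n}^{-1}\to h_{E,\lambda}^{-1}$ for every $E\in D$.

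The third step is to pass from $D$ to all of $K$ by equicontinuity. Since $\big|\frac{d}{dE}\frac1{E-\lambda v}\big|=|E-\lambda v|^{-2}\leq M_K^2$ for $E\in K$ and $v\in\supp(\PP_v)$, each (random) function $E\mapsto a_{E,n}^{-1}$ is Lipschitz on $K$ with the deterministic constant $M_K^2$, for every $n$, and the same bound holds for $E\mapsto h_{E,\lambda}^{-1}$ (differentiating under the integral). A routine $\varepsilon$-net argument then promotes pointwise convergence on $D$ to $\sup_{E\in K}|a_{E,n}^{-1}-h_{E,\lambda}^{-1}|\to0$ on that same almost sure event. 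Exhausting $\RR\setminus\lambda\supp(\PP_v)$ by countably many such compacta $K$ and intersecting the corresponding almost sure events produces a full-measure event on which $a_{E,n}^{-1}\to h_{E,\lambda}^{-1}$, locally uniformly, for all $E\notin\lambda\supp(\PP_v)$. On this event the stated dichotomy follows from continuity of $x\mapsto1/x$ away from $0$: if $h_{E,\lambda}^{-1}=\EE_v(1/(E-\lambda v))\neq0$ then $a_{E,n}\to h_{E,\lambda}$, whereas if $h_{E,\lambda}^{-1}=0$ then $a_{E,n}^{-1}\to0$ forces $|a_{E,n}|\to\infty$.

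I do not expect a genuine obstacle. The only delicate point is the strengthening of ``for each fixed $E$, almost surely'' to ``almost surely, for all $E$'', which the countable-dense-set and uniform-Lipschitz argument takes care of; the other mildly important choice is to invoke Hoeffding's exponential bound instead of Chebyshev, so that the Borel--Cantelli series converges for every growth exponent $\alpha>0$ rather than only $\alpha>1$.
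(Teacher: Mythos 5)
Your proof is correct and follows essentially the same route as the paper: identify $a_{E,n}^{-1}$ as an empirical mean of i.i.d.\ bounded variables with mean $h_{E,\lambda}^{-1}$, prove a fixed-energy concentration bound summable in $n$ thanks to $s_n>cn^\alpha$, apply Borel--Cantelli, and upgrade to all $E$ via a countable dense set plus a regularity argument. The only (harmless) differences are technical: the paper uses a $2m$-th moment bound with $m$ chosen so that $m\alpha>1$ where you use Hoeffding, and it extends from the dense set by monotonicity of $E\mapsto a_{E,n}^{-1}$ and $h_{E,\lambda}^{-1}$ where you use a uniform Lipschitz bound.
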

\begin{proof}
Let the $X_i$ be independent identically distributed random variables with compactly supported distribution in $[-a,a]$ and zero expectation.
Let $Y=\frac1n \sum_{i=1}^n X_i$. Then
 $$
 \EE(Y^{2m})\,=\, \frac{1}{n^{2m}}\sum_{i_1,\ldots,i_{2m}=1}^n\!\!\!\!\! \EE(X_{i_1}\cdots X_{i_{2m}})\,\leq\,\frac{1}{n^{2m}} 
 \frac{(2m)!}{2^m m!}\sum_{j_1,\ldots,j_m=1}^n\!\!\!\!\!\EE(X^2_{j_1}\cdots X^2_{j_{m}})\,\leq\,
 \frac{(2m)!}{2^m m!} \,\frac{a^{2m}}{n^{m}}
 $$
where we use that $ \frac{(2m)!}{2^m m!}$ is exactly the number of pairings on the set $\{1,\ldots,2m\}$ and any unpaired index leads to zero expectation.
 Choosing $X_j=1/(E-\lambda v_{n,j})-h_{E,\lambda}^{-1}$ for $j=1,\ldots,s_n$ and using Markov's inequality this gives 
$$
\PP(|a_{E,n}^{-1}-h_{E,\lambda}^{-1}|>\varepsilon)\,\leq\,\frac{\EE(a_{E,n}^{-1}-h_{E,\lambda}^{-1})^{2m}}{\varepsilon^{2m}}\,\leq\,\frac{C}{\varepsilon^{2m} s_n^{2m}}\,\leq\,\frac{C'}{n^{m\alpha}}
$$
Choosing $m$ large enough the right hand side is summable over $n$. Hence, by Borel-Cantelli $\PP(|a_{E,n}^{-1}-h_{E,\lambda}^{-1}|>\varepsilon\,\text{infinitely often})\,=\,0$.
Taking a sequence $\varepsilon_j\to 0$ the akmost sure convergence follows for a single energy. Hence, we have
almost surely $a_{E,n}^{-1}\to h_{E,\lambda}^{-1}$ for all $E\in\QQ\setminus \lambda\supp(\PP_v)$. 
As $h_{E,\lambda}^{-1}$ and all $a_{E,n}^{-1}$ are continuous and monotone decreasing in $\RR\setminus\lambda \supp(\PP_v)$ the claim follows.
\end{proof}

\begin{proof}[Proof of Proposition~\ref{prop-spec-H}]
First note that one always has $\spec(H_\lambda)\subset\spec(\lambda\Vv)+[-2,2]=[-2,2]+\lambda\supp(\PP_v)$.
Consider first the case that the sequence $s_n$ is bounded, i.e. $s_n\,<\,N$ and let $E\in ([-2,2]\setminus\{0\})+\lambda \,\supp(\PP_v)$, i.e. $E=2\cos(k)+t$ where
$\cos(k)\neq 0$. Then  we find for $\varepsilon<2\cos(k)$ that
$$
\PP(|a_{E,n}-2\cos(k)|<\varepsilon)\,\geq\,\PP(\|\lambda V_n-t\,\one |<\varepsilon)\,>\,[\PP_v(\lambda^{-1}(t-\varepsilon,t+\varepsilon))]^N\,>0\,
$$
uniformly in $n$.
For the first inequality note that if $\|E-\lambda V_n-2\cos(k)\|<\varepsilon$ then each summand in \eqref{eq-harm-entry} lies between $1/(2\cos(k)+\varepsilon)$ and $1/(2\cos(k)-\varepsilon)$.
Moreover, the probability of $E\in B_\infty$ can only be positive for at most countably many energies, for the rest we can apply 
Lemma~\ref{lem-E-in-spec} showing $E\in \spec(H_{\lambda})$ almost surely.
Taking a countable dense set of energies we find $[-2,2]+\lambda\supp(\PP_v)\subset\spec(H_\lambda)$ almost surely, showing the first part.

\vspace{.2cm}

Now let $s_n>cn^\alpha$ and consider first $E\in\lambda\supp(\PP_v)$. We find with probably one a strictly increasing sequence $n_k$ such that
$|E-\lambda v_{n_k,1}|<1/k$ and $|E-\lambda v_{n_k,2}|<1/k$. Set $\psi_k=\delta_{n_k,1}-\delta_{n_k,2}$ i.e. $\psi_k(n)=\nul$ for $n\neq n_k$ and $\psi_k(n_k)=(1,-1,0,\ldots,0)^\top\in\CC^{s_{n_k}}$. Then, 
$$
\|(H_\lambda-E)\psi_k\| \,=\,\left\|\,(\lambda v_{n_k,1}-E)\,\delta_{n_k,1}\,-\,(\lambda v_{n_k,2}-E)\,\delta_{n_k,2}\,\right\|\,\leq\,2/k\,\to\,0\;.
$$
Hence, $\psi_k$ is an orthogonal Weyl sequence and $E\in \spec_\ess(H_\lambda)$ almost surely. Taking a countable dense set of energies we get $\lambda\supp(\PP_v)\subset \spec_\ess(H_\lambda)$ almost surely.

Let $E\not\in \lambda \supp(\PP_v)$. By Lemma~\ref{lem-a-to-h}, $\PP(|a_{E,n}-h_{E,\lambda}|<\varepsilon)\to 1$ for $n\to \infty$ whenever $|h_{E,\lambda}|<\infty$.
Hence, we can use Lemma~\ref{lem-E-in-spec} 
for a countable dense set of energies to find $\{E\,:\,|h_{E,\lambda}|\leq 2\}\,\subset\,\spec_\ess(H_\lambda)$ almost surely.\footnote{By monotonicity properties of $h_{E,\lambda}$ this set has no isolated points so it must be in the essential spectrum.}
By Lemma~\ref{lem-a-to-h} almost surely we find {\bf for all} $E$ with $|h_{E,\lambda}|>2$ some random $N=N(E)>0$ such that $|a_{E,n}|>2$ for $n\geq N(E)$. This shows $0 \not \in \bigcup_{n\geq N} [-2+a_{E,n},2+a_{E,n}]$ which includes the spectrum of $\spec(\Jj_E^{(N)})$ and hence $0\not\in\spec_\ess(\Jj_E)\subset\spec(\Jj^{(N)}_E)$. Thus, $\{E\,:\,|h_{E,\lambda}|>2\} \cap \spec_\ess(H_\lambda)=\emptyset$ almost surely.
\end{proof}


\section{Harmonic mean estimates \label{sec:keyestimate}}

Recall that for $E\in I_\lambda$ we either have that for all $v_{n,j}$, $E>\lambda v_{n,j}$ or that for all $v_{n,j}$, $E<\lambda v_{n,j}$.
Moreover, we have $|h_{E,\lambda}|<2$, where
\begin{equation}\label{eq-def-k}
h_{E,\lambda}=[\EE(1/(E-\lambda v_{n,j}))]^{-1}\,=:\,2\cos(k_{E,\lambda})\,.
\end{equation}
The latter equation is the definition of $k_{E,\lambda}\in(-\pi,\pi)$. 
In order to use the machinery in \cite{KLS} we need some explicit moment estimates.
Therefore, let us start with the following general theorem:
\begin{theorem} \label{lem:harmonic}
Let $X_j\in [a,b]$, $0<a<b$, be independent identically distributed random variables. 
Define the harmonic mean $M_n$ and the harmonic average $h$ by
$$
M_n \,:=\, \frac1{\frac1n \sum_{j=1}^n \frac1{X_j}}\;,\quad
h\,:=\,\frac1{\EE(1/X_j)}\;.
$$
Moreover, define the following moments of the centered random variable
$1/X_j-1/h$ 
$$
\sigma_m:=\EE((1/X_j - 1/h)^m)
$$
and note that $\sigma_1=0$ and $\sigma_2$ is the variance.
Then, there exists a continuous function $C=C(a,b,h,\sigma_2,\sigma_3)$ such that uniformly in $n$,
\begin{equation}\label{eq-exp-M_n}
 0< \EE(M_n-h) \,\leq\, \frac{b\,h^2\, \sigma_2}{n}\;,\qquad
 \left| \EE(M_n-h)- \frac{h^3 \sigma_2}{n}\right|\,\leq\,\frac{C}{n^2}\;.
\end{equation}
\begin{equation}\label{eq-exp-M_n-2}
\frac{a^2\,h^2\,\sigma_2}{n}\,\leq\,\EE((M_n-h)^2)\,\leq\,\frac{b^2\,h^2\,\sigma_2}{n}
\,,\qquad
\left|\,\EE((M_n-h)^2)-\frac{h^4 \sigma_2}{n}\,\right|\,\leq\, \frac{C}{n^2}\;.\qquad
 \end{equation}
Moreover, for the higher moments we find
\begin{equation}\label{eq-exp-M_n-3}
 \left|\,\EE((M_n-h)^3)\,\right|\,\leq\, \frac{C}{n^2}\,,\qtx{and}
 \EE((M_n-h)^{2m})\,\leq\,\frac{(2m)!\,h^{2m}\,b^{2m} }{2^m \,m!\,a^{2m}}\,\frac{1}{n^m}\quad\text{for $m\geq 2$}\,.
\end{equation}
\end{theorem}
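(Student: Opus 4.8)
The plan is to reduce everything to one elementary identity together with the standard moment bounds for sums of i.i.d.\ bounded centered random variables. First I would set $Y_j:=1/X_j\in[1/b,1/a]$, $\mu:=\EE(Y_j)=1/h$, $\bar Y_n:=\frac1n\sum_{j=1}^n Y_j$ and $S_n:=\bar Y_n-\mu=\frac1n\sum_{j=1}^n(Y_j-\mu)$, so that $M_n=1/\bar Y_n$. Since $\bar Y_n\in[1/b,1/a]$ we get $M_n\in[a,b]$, and by Jensen's inequality $h=1/\EE(1/X_j)$ is the harmonic mean of the distribution, hence $a\le h\le b$. Because $S_n$ is an average of $n$ i.i.d.\ centered variables bounded by $1/a-1/b$ in modulus, the usual computations give $\EE(S_n)=0$, $\EE(S_n^2)=\sigma_2/n$, $\EE(S_n^3)=\sigma_3/n^2$, the fourth-moment bound $\EE(S_n^4)\le((1/a-1/b)^2\sigma_2+3\sigma_2^2)/n^2$, and, exactly as in the proof of Lemma~\ref{lem-a-to-h}, the higher even moments $\EE(S_n^{2m})\le\frac{(2m)!}{2^m m!}\,a^{-2m}\,n^{-m}$.

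The algebraic core is the identity
$$
M_n-h\;=\;\frac1{\bar Y_n}-\frac1\mu\;=\;\frac{\mu-\bar Y_n}{\bar Y_n\,\mu}\;=\;-\,h\,M_n\,S_n\,,
$$
together with its once-iterated form, obtained by substituting $M_n=h+(M_n-h)=h-hM_nS_n$ on the right:
$$
M_n-h\;=\;-\,h^2S_n\,+\,h^2M_nS_n^2\,.
$$
Taking powers of the first identity gives $(M_n-h)^2=h^2M_n^2S_n^2$, $(M_n-h)^3=-h^3M_n^3S_n^3$ and $(M_n-h)^{2m}=h^{2m}M_n^{2m}S_n^{2m}$. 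The even-moment bound in \eqref{eq-exp-M_n-3} then follows at once from $M_n\le b$ and the bound on $\EE(S_n^{2m})$ above; and $\EE((M_n-h)^3)=-h^3\EE(M_n^3S_n^3)=-h^3\big(h^3\sigma_3/n^2+\EE((M_n^3-h^3)S_n^3)\big)$, where the remainder is $O(1/n^2)$ because $|(M_n^3-h^3)S_n^3|\le|M_n-h|\cdot 3b^2\cdot|S_n|^3\le 3hb^3|S_n|^4$.

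For \eqref{eq-exp-M_n} I would note first that the iterated identity and $\EE(S_n)=0$ give $\EE(M_n-h)=h^2\EE(M_nS_n^2)$; since $a\le M_n\le b$ and $S_n^2\ge0$ this already yields $0<ah^2\sigma_2/n\le\EE(M_n-h)\le bh^2\sigma_2/n$ (using $\sigma_2>0$). For the sharper estimate, multiply $M_n=h-h^2S_n+h^2M_nS_n^2$ by $S_n^2$ and take expectations to get $\EE(M_nS_n^2)=h\sigma_2/n-h^2\sigma_3/n^2+h^2\EE(M_nS_n^4)$, whence $\EE(M_n-h)=h^3\sigma_2/n-h^4\sigma_3/n^2+h^4\EE(M_nS_n^4)$; since $|\EE(M_nS_n^4)|\le b\,\EE(S_n^4)=O(1/n^2)$, this gives the claimed $C/n^2$ bound. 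For \eqref{eq-exp-M_n-2}, $\EE((M_n-h)^2)=h^2\EE(M_n^2S_n^2)$ together with $a^2\le M_n^2\le b^2$ gives the two-sided bounds; for the refined one, squaring $M_n=h-hM_nS_n$ gives $M_n^2=h^2-2h^2M_nS_n+h^2M_n^2S_n^2$, so $(M_n-h)^2=h^4S_n^2-2h^4M_nS_n^3+h^4M_n^2S_n^4$ and hence $\EE((M_n-h)^2)=h^4\sigma_2/n-2h^4\EE(M_nS_n^3)+h^4\EE(M_n^2S_n^4)$.

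The one delicate point — and the main obstacle — is that a crude estimate of $\EE|S_n|^3$ only produces an $n^{-3/2}$ error in these refined first- and second-moment identities; to recover the true $n^{-2}$ rate one must regroup the offending cross term, writing $\EE(M_nS_n^3)=h\EE(S_n^3)+\EE((M_n-h)S_n^3)=h\sigma_3/n^2+\EE((M_n-h)S_n^3)$ and using the extra power of $S_n$ hidden in $M_n-h=-hM_nS_n$, i.e. $|(M_n-h)S_n^3|=hM_n|S_n|^4\le hb|S_n|^4$, so that $\EE(M_nS_n^3)=O(1/n^2)$; combined with $\EE(M_n^2S_n^4)\le b^2\EE(S_n^4)=O(1/n^2)$ this yields the required $C/n^2$ bound. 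Finally, in each case the error constant produced is a polynomial in $h$, $b$, $1/a$, $\sigma_2$ and $|\sigma_3|$ with $n^{-2}$ factored out, so taking $C$ to be the maximum of these finitely many expressions gives a continuous function of $(a,b,h,\sigma_2,\sigma_3)$ that is uniform in $n$.
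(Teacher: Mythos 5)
Your proposal is correct and follows essentially the same route as the paper: the same core identity $M_n-h=-h\,M_n\,S_n$ (the paper's $M_n=h-hYM_n$ with $Y=S_n$), iterated once or twice and combined with the standard even-moment bounds for the centered average $S_n$. Your careful regrouping of the cross terms (e.g.\ $\EE(M_nS_n^3)=h\sigma_3/n^2+\EE((M_n-h)S_n^3)$ with the extra power of $S_n$ extracted from $M_n-h$) merely makes explicit what the paper compresses into ``taking powers of \eqref{eq-exp-M} and using similar estimates''.
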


\begin{proof}
Clearly, as $a>0$, we find $M_n \in [a,b]$ and $M_n\leq\frac1n \sum_{j=1} X_j$ as well as $h<\EE(X_j)$ by convexity.
 Let $Y=1/M_n-1/h=\frac1n \sum_{j=1}^n (\frac{1}{X_j} - 1/h)$, then one finds $\EE(Y)=0$,\,
 $\EE(Y^2)=\sigma_2 / n$, and $\EE(Y^3)=\sigma_3/n^2$. 
 A similar calculation as in the proof of Lemma~\ref{lem-a-to-h} shows
 $
 \EE(Y^{2m})\leq\frac{(2m)!}{2^m m!} \,\frac{1}{a^{2m}\, n^{m}}. $
 Moreover, one finds $M_n=h-hYM_n$ and therefore
 \begin{equation}\label{eq-exp-M}
 M_n-h = -hYM_n = -h^2Y+h^2Y^2M_n=-h^2Y+h^3Y^2-h^3Y^3M_n\;.
\end{equation}
As $M_n\in[a,b]$ we can estimate $\EE(h^2Y^2 M_n)\leq h^2 b\,\sigma_2/n$ and
$$
\left| \EE(Y^3M_n)\right|\,\leq\, \left|\EE(hY^3)\right|+\left|\EE(hY^4M_n) \right|\,\leq\,
\frac{h(|\sigma_3|\,+\,3ba^{-4})}{n^2}
$$
which with \eqref{eq-exp-M} (using the second-last and last term) gives \eqref{eq-exp-M_n}.
Taking powers of \eqref{eq-exp-M} and using similar estimates lead to \eqref{eq-exp-M_n-2} and 
\eqref{eq-exp-M_n-3}.
For the general moment bound note that \mbox{$1-YM_n$}$=M_n/h \in [a/h,b/h]$, so $|1-YM_n|\leq b/h$ and hence
$\EE((M_n-h)^{2m})\,\leq\, (b/h)^{2m}\, \EE((h^2 Y)^{2m})\,=\, h^{2m}\, b^{2m}\, \EE(Y^{2m})\;.$
\end{proof}

\vspace{,2cm}

Clearly, for $E\in I_\lambda$ all the transfer matrices $T_{E,n}$ exist and we may define the family of independent random variables
\begin{equation}\label{eq-def-x_n}
x_n:= \frac{a_{E,n}-h_{E,\lambda} }{\sin(k_{E,\lambda})}\,=\, 
\left[\,\left(\frac1{s_n}\sum_{j=1}^{s_n} \frac{1}{E-\lambda v_{n,j}}\right)^{-1}\,-\,h_{E,\lambda}\right]\,/\,\sin(k_{E,\lambda})\;.
\end{equation}
The introduced factor $\sin(k_{E,\lambda})$ will simplify some formulas later. 

For many calculations we will fix some $\lambda$ and $E\in I_\lambda$, therefore, we will now often 
omit the indices $E$ and $\lambda$ in future calculations and use 
$$
h=h_{E,\lambda}\,,\quad k=k_{E,\lambda}\,,\quad \sigma^2=\sigma^2_{E,\lambda}=\Var(1/(E-\lambda v_{n,j}))\,.
$$
Note that all this quantities depend continuously on $(E,\lambda)$ for $E\in I_\lambda$.
Without loss of generality we may assume $E>0$ which also corresponds to $E> h>0$, $k>0$ and $\EE(x_n)>0$. The considerations for $E<0$ are completely analogue.
Then, in the notations of the above theorem we find that $\sin(k)\,x_n$ corresponds to $M_{s_n}-h$, the $X_j$ correspond to $E-\lambda v_{n,j}$, 
$a=E-\lambda v_+$ and $b=E-\lambda v_-$, and we have
\begin{align}\label{eq-x-bounds}
&x_n\,\in\,\left[\frac{E-\lambda v_+}{\sin(k)}\,,\,\frac{E-\lambda v_-}{\sin(k)}\right], \quad 0\,<\;\EE(x_n)\,=\,\frac{h^3\,\sigma^2}{\sin(k)\,s_n}+\Oo(s_n^{-2})\,,\\ \label{eq-x-bounds2}
&\EE(x_n^2)\,=\, \frac{h^4\,\sigma^2}{\sin^2(k)\,s_n}+\Oo(s_n^{-2})\,,\quad
\EE(x_n^3)\,=\,\Oo(s_n^{-2})\,,\quad \EE(x_n^{2m})\,=\,\Oo(C_m\,s_n^{-m})\;.
\end{align}
The error terms $\Oo(s_n^{-2})$ and $\Oo(C_m s_n^{-m})$ mean that the absolute value of the reminder terms are bounded by
$Cs_n^{-2}$ and $C_m s_n^{-m}$, respectively, where $C=C_{E,\lambda}$ and $C_m=C_{m,E,\lambda}$ depend continuously on $(E,\lambda)$ for $E\in I_\lambda$.
In particular, the error terms are uniform in $E$ on compact subsets of $I_\lambda$.
Note that from Lemma~\ref{lem-a-to-h} we get for $E\in I_\lambda$

\begin{lemma}\label{lem-conv}
Let $s_n>cn^\alpha$  for some $\alpha>0$. Then, for $\PP$-almost all $\omega\in\Omega$ we have $$\lim_{n\to\infty} \,x_n\,=\,\lim_{n\to\infty} \,x_n(\omega)\,=\,0\;.$$
\end{lemma}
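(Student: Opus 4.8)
The plan is to obtain Lemma~\ref{lem-conv} as an immediate consequence of Lemma~\ref{lem-a-to-h}. First I would record the two facts about $E\in I_\lambda$ that make the passage to $x_n$ legitimate: by definition of $I_\lambda$ one has $E\notin[\lambda v_-,\lambda v_+]\supseteq\lambda\supp(\PP_v)$, and $|h_{E,\lambda}|<2$ forces $h_{E,\lambda}$ to be finite and nonzero, i.e. $\EE_v(1/(E-\lambda v))=h_{E,\lambda}^{-1}\neq0$. Moreover $|h_{E,\lambda}|=|2\cos(k_{E,\lambda})|<2$ forces $\sin(k_{E,\lambda})\neq0$, so the denominator in the definition \eqref{eq-def-x_n} of $x_n$ is a fixed nonzero constant.

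Next, since $\PP_v$ is supported in $[-1,1]$ by {\rm (A1)} it is compactly supported, and by hypothesis $s_n>cn^\alpha$ with $\alpha>0$; hence Lemma~\ref{lem-a-to-h} applies and gives that $\PP$-almost surely $a_{E,n}^{-1}\to h_{E,\lambda}^{-1}$. Because $h_{E,\lambda}^{-1}\neq0$ and $x\mapsto 1/x$ is continuous away from the origin, this yields $a_{E,n}\to h_{E,\lambda}$ $\PP$-almost surely, and therefore $x_n=(a_{E,n}-h_{E,\lambda})/\sin(k_{E,\lambda})\to0$ $\PP$-almost surely, which is the claim.

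There is essentially no genuine obstacle here: the only things to check are that the hypotheses of Lemma~\ref{lem-a-to-h} hold and that membership in $I_\lambda$ is used precisely to keep $h_{E,\lambda}$ away from $0$ and $k_{E,\lambda}$ away from $0$ and $\pm\pi$. If one prefers a self-contained argument not invoking Lemma~\ref{lem-a-to-h}, the moment estimates \eqref{eq-x-bounds}--\eqref{eq-x-bounds2} already give $\EE(x_n^{2m})=\Oo(C_m\,s_n^{-m})=\Oo(n^{-m\alpha})$ for each fixed $m$, so choosing $m>1/\alpha$ makes the right-hand side summable in $n$; Markov's inequality and Borel--Cantelli then give $\PP(|x_n|>\varepsilon\text{ infinitely often})=0$ for every fixed $\varepsilon>0$, hence $x_n\to0$ $\PP$-a.s. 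Either way the statement is a one-line corollary of estimates already established.
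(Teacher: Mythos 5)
Your argument is correct and is exactly the paper's route: the lemma is stated there as an immediate consequence of Lemma~\ref{lem-a-to-h} (``Note that from Lemma~\ref{lem-a-to-h} we get for $E\in I_\lambda$\dots''), using precisely that $E\in I_\lambda$ keeps $h_{E,\lambda}$ finite and nonzero and $\sin(k_{E,\lambda})\neq 0$. Your alternative Borel--Cantelli argument via the moment bounds is also sound and is essentially the same mechanism used inside the proof of Lemma~\ref{lem-a-to-h} itself.
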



\section{Modified Pr\"ufer variables and results \label{sec:pruefer}}

We established that for $E\in I_\lambda$ and $s_n\to\infty$ the random transfer matrices behave like in the case of a one-dimensional operator with 
random potentials of decaying
variance. The conclusions can now be obtained in a similar way as in \cite{KLS} with some slight differences.
One minor difference is the fact that $\EE(x_n)$ is not zero and depends on $n$. Another difference for the case of the pure point spectrum is the fact that unlike
in \cite{KLS} the support of the distribution of $x_n$ is not getting smaller in $n$. But this input can be replaced by Lemma~\ref{lem-conv}.

Let us now briefly mention the appropriate basis change for the transfer matrices that leads to the modified Pr\"ufer variables that were also used
in \cite{KLS, LaSi}. By the definition of transfer matrices $T_{E,n}$ in \eqref{eq-def-psi-T}, the definition of $k=k_{E,\lambda}$ as in \eqref{eq-def-k} and 
$x_n$ as in \eqref{eq-def-x_n} 
one obtains
$$
M T_{E,n} M^{-1}= \pmat{1 & x_n \\ 0 & 1} \pmat{\cos(k) & -\sin(k) \\ \sin(k) & \cos(k)}\qtx{where}
M:=\pmat{1 & -\cos(k) \\ 0 & \sin(k)}\;.
$$
Next define the modified Pr\"ufer variables $\theta_n=\theta_n(\theta) \in \RR \mod  2\pi$ and $R_n=R_n(\theta,E)\in\RR_+$ by
\begin{equation}\label{eq-def-R-th}
R_n \vec{u}_{\theta_n} = M^{-1} T_E(n) M \vec{u}_\theta\;\qtx{where} \vec{u}_\theta=\pmat{\cos(\theta)\\ \sin(\theta)}\;,
\end{equation}
then some simple calculation shows
$
R_{n+1} \vec{u}_{\theta_{n+1}}\,=\,R_n \smat{\cos(\theta_n+k)+ x_n\sin(\theta_n+k)\\ \sin(\theta_n+k)}
$
which gives 
\begin{equation} \label{eq-rec-R}
R_{n+1}^2=R_n^2\left(1+x_n\sin(2\,\bar\theta_n)+x_n^2\sin^2(\bar\theta_n)\right)\;,\quad
\cot(\theta_{n+1})=\cot(\bar\theta_n)\,+\,x_n\;
\end{equation}
for $\bar\theta_n:=\theta_n+k$.
Note that the random variables $R_n,\,\theta_n,\,\bar \theta_n$ depend on the starting value $\theta=\theta_{0}$ and $x_0,\ldots,x_{n-1}$ and are therefore
independent of $x_n$.
By equivalence of norms one finds for any two linear independent angles $\vec{u}_\theta,\,\vec{u}_{\theta'}$ some positive constants $c,C$ such that
$$
c\,\max(R_n(\theta), \, R_n(\theta'))\,\leq\,\|T_E(n)\|\,\leq\, C\,\max(R_n(\theta),\,R_n(\theta'))\,.
$$
Therefore, it will be enough to study the $R_n$ in order to investigate $\|T_E(n)\|$.

\subsection{The absolutely continuous spectrum \label{sec:ac-spectrum}}
 
Now assume $\sum_{n=0}^\infty s_n^{-1}\,<\,\infty$. By \eqref{eq-x-bounds} and \eqref{eq-rec-R} we find for any starting angle $\theta=\theta_{0}$ that
\begin{align*}
\EE(R_{n+1}^4)\,&\leq\,\EE(R_n^4)+\EE(2\sin(2\bar\theta_n)R_n^4)\,\EE(x_n)+ \EE(R_n^4)\,\EE(3x_n^2+2|x_n|^3+x_n^4) \\
&\leq\,\EE(R_n^4)\,\left(1+2|\EE(x_n)|+\EE(4x_n^2+2x_n^4)\right)\,\leq\,
\EE(R_n^4)\,\left(1+C_{E}\,s_n^{-1}\right)
\end{align*}
where the bound $C_{E}$ can be chosen continuously in $E\in I_\lambda$.
As $C_{E} s_n^{-1}$ is summable it follows 
$$
\sup_{n}\,\EE(\|T_E(n)\|^4)\,\leq\, C \sup_{n} \sup_\theta \EE(R_n^4(E,\theta))  \leq
C\, \prod_{n=0}^\infty \left(1+\,C_{E}\,s_n^{-1}\right)\,<\,\infty
$$
where the bound is uniform in $E$ on compact sets $E\in[a,b]\subset I_\lambda$. 
Using Fatou's  lemma and Fubini 
we realize that $\PP$-almost surely
$$
\liminf_{n\to\infty} \,\int_a^b \|T_n(E)\|^4\,dE\,<\,\infty
$$
which by Theorem~\ref{th:sigma_0-ac} used for any $[a,b]\subset I_\lambda$, $a,b\in\QQ$, implies that the spectrum of $H_\lambda$ (restricted to the space $\VV$) 
is almost surely purely absolutely continuous in $I_\lambda$. For the proof of Theorem~\ref{th:sigma_0-ac} see Section~\ref{sec:sigma_0}.
As $|E|>\lambda\geq\|\lambda V_n\|$ for $E\in I_\lambda$ we also see that there are no eigenvalues in $I_\lambda$ with eigenvectors in $\VV^\perp$.
This proves Theorem~\ref{th:main}~(i).

\subsection{The singular spectrum \label{sec:pp-spectrum}}

In this section we want to prove Theorem~\ref{th:main} parts (ii), (iii) and (iv).
All will be based on the following observation which is a variant of \cite[Theorem~8.2 and Lemma~8.8]{KLS}. 
We set $\alpha=d-1$ where $d$ is the growth-rate dimension\footnote{Note that $\alpha$ corresponds to $2\alpha$ in \cite{KLS}}.

\begin{theorem}\label{th-lim-log(T)}
Assume $\lim_{n\to\infty} s_n^{-1} \,n^{\alpha}=c>0$ for some $0<\alpha\leq 1$, $E\in I_\lambda$ such that
$k=k_{E,\lambda} \not \in \,\pi/4\,\ZZ$.
Then one has almost surely 
\begin{equation}\label{eq-log(T)}
\lim_{n\to \infty} \frac{\log \| T_E(n)\,\|}{\sum_{j=0}^n s_j^{-1}}\;=\;
\lim_{n\to \infty} \frac{\log \| T_E(n)\,\|}{ \sum_{j=1}^n c\,j^{-\alpha}}\;=\;
\frac{h_{E,\lambda}^4\,\sigma_{E,\lambda}^2}{8\,\sin^2(k_{E,\lambda})}\,=\,\gamma_{E,\lambda}\;.
\end{equation}
Moreover, almost surely, there is a real, subordinate solution $w_{E,n}$, i.e.
$\smat{w_{E,n}\\ w_{E,n-1}}=T_E(n)\smat{w_{E,0} \\ w_{E,-1}}$, such that
\begin{equation}\label{eq-exist-subord}
\lim_{n\to\infty} \frac{\frac12\log(|w_{E,n}|^2+|w_{E,n-1}|^2)}{\log \| T_E(n) \|}\,=\,-1\;.
\end{equation}
\end{theorem}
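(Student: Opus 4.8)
The plan is to run the modified Pr\"ufer analysis of \cite[Section~8]{KLS}, substituting for the random decaying potential there the sequence $(x_n)_n$ of \eqref{eq-def-x_n}, whose moments are given by \eqref{eq-x-bounds}--\eqref{eq-x-bounds2} and which tends to $0$ almost surely by Lemma~\ref{lem-conv}. Set $L_n:=\sum_{j=0}^{n-1}s_j^{-1}$; the hypothesis $s_n^{-1}n^\alpha\to c$ gives $L_n\sim\sum_{j=1}^n cj^{-\alpha}\to\infty$, together with $\EE(x_j)=\Oo(s_j^{-1})$, $\EE(x_j^2)=\tfrac{h^4\sigma^2}{\sin^2k}s_j^{-1}+\Oo(s_j^{-2})$ and $\EE(x_j^{2m})=\Oo(s_j^{-m})$. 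By the norm equivalence stated before Section~\ref{sec:ac-spectrum} it suffices to control $\log R_n(\theta)$ for a fixed starting angle $\theta$, as $\log\|T_E(n)\|=\log R_n(\theta)+\Oo(1)$ for generic $\theta$.

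First I would telescope \eqref{eq-rec-R} and Taylor-expand the logarithm: since $x_j\to0$ a.s., for large $j$ one has $\log(1+x_j\sin2\bar\theta_j+x_j^2\sin^2\bar\theta_j)=x_j\sin2\bar\theta_j+x_j^2\big(\tfrac14-\tfrac12\cos2\bar\theta_j+\tfrac14\cos4\bar\theta_j\big)+\Oo(|x_j|^3)$, so
\begin{equation*}
2\log R_n=2\log R_0+\tfrac14\sum_{j<n}\EE(x_j^2)+S_n+D_n+r_n,
\end{equation*}
where $S_n$ collects the centered sums $\sum_{j<n}(x_j-\EE x_j)\sin2\bar\theta_j$ and $\sum_{j<n}(x_j^2-\EE x_j^2)(\cdot)$, $D_n$ collects the drift sums $\sum_{j<n}\EE(x_j)\sin2\bar\theta_j$ and $\sum_{j<n}\EE(x_j^2)\cos(2\bar\theta_j)$, $\sum_{j<n}\EE(x_j^2)\cos(4\bar\theta_j)$, and $|r_n|\le C\sum_{j<n}|x_j|^3$. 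I would show each of $S_n,D_n,r_n$ is $o(L_n)$ almost surely: $S_n$ is a sum of martingales for $\Ff_j=\sigma(x_0,\dots,x_{j-1})$ (the trigonometric factors being $\Ff_j$-measurable and $x_j$ independent of $\Ff_j$) whose increments have variances $\Oo(\EE x_j^2)$ resp.\ $\Oo(\EE x_j^4)=\Oo(s_j^{-2})$, and since $\sum_j\EE(x_j^2)/L_j^2<\infty$ and $\sum_j s_j^{-2}/L_j^2<\infty$ (using $L_j\gtrsim j^{1-\alpha}$, resp.\ $\log j$ when $\alpha=1$) the martingale strong law gives $S_n=o(L_n)$; $D_n=o(L_n)$ by Abel summation using $\bar\theta_{j+1}=\bar\theta_j+k+\Oo(x_j)$ from \eqref{eq-rec-R}, the decay $\EE(x_j),\EE(x_j^2)=\Oo(s_j^{-1})$, and the non-resonance condition $k_{E,\lambda}\notin\tfrac\pi4\ZZ$, which keeps the Weyl sums $\sum_{j\le m}e^{2i\bar\theta_j}$, $\sum_{j\le m}e^{4i\bar\theta_j}$ (and the higher ones appearing in the refinement) controlled up to errors governed by $\sum|x_j|$, exactly as in \cite[Section~8]{KLS}; and $r_n=o(L_n)$ by Kronecker's lemma, since $|r_n|\le C\sum_{j<n}|x_j|\,x_j^2=o(\sum_{j<n}x_j^2)$ as $|x_j|\to0$ and $\sum_{j<n}x_j^2\sim\tfrac{h^4\sigma^2}{\sin^2k}L_n\to\infty$. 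Together with $\sum_{j<n}\EE(x_j^2)=\tfrac{h^4\sigma^2}{\sin^2k}L_n+\Oo(1)$ this gives $2\log R_n=\tfrac{h^4\sigma^2}{4\sin^2k}L_n+o(L_n)$, i.e.\ \eqref{eq-log(T)} after dividing by $L_n\sim\sum_{j=1}^n cj^{-\alpha}$ and recalling $\gamma_{E,\lambda}=\tfrac{h^4\sigma^2}{8\sin^2k}$ from \eqref{eq-def-gamma}.

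For the subordinate solution I would follow \cite[Lemma~8.8]{KLS} and Last--Simon subordinacy theory. Since $\det T_{E,j}=1$, the rescaled matrices $M^{-1}T_E(n)M$ have determinant one, hence $R_n(\theta)R_n(\theta+\tfrac\pi2)\ge1$; together with $\limsup_n\log R_n(\psi)/L_n\le\gamma_{E,\lambda}$ for every fixed $\psi$ (which holds a.s.\ since $R_n(\psi)\le C\|T_E(n)\|$ and \eqref{eq-log(T)}) this forces $\liminf_n\log R_n(\theta)/L_n\ge-\gamma_{E,\lambda}$ for all $\theta$. One then shows the contracting directions of $M^{-1}T_E(n)M$ converge a.s.\ to some $v^\ast$, their angular defects being $\Oo(\|T_E(n)\|^{-2})=\Oo(e^{-2\gamma_{E,\lambda}L_n})$, summable against the growth of $\log\|T_E(n)\|$ (the critical regime $\alpha=1$ requiring the refined bounds of \cite[Section~8]{KLS}); the associated real solution $w_{E,n}$, with $\smat{w_{E,n}\\w_{E,n-1}}=T_E(n)\smat{w_{E,0}\\w_{E,-1}}$, then satisfies $\tfrac12\log(|w_{E,n}|^2+|w_{E,n-1}|^2)=\log R_n(\theta_w)+\Oo(1)\sim-\gamma_{E,\lambda}L_n$, which is \eqref{eq-exist-subord}, and it is subordinate as the unique solution up to a scalar with this minimal growth. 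I expect the main obstacle to be the control of the drift sums $D_n$ --- establishing they are $o(L_n)$ and not merely $\Oo(L_n)$ --- which is where the equidistribution of the Pr\"ufer angles modulo $\pi$ and the exclusion $k_{E,\lambda}\notin\tfrac\pi4\ZZ$ are essential, together with the a.s.\ convergence of the contracting directions in the slowly growing regime $\alpha=1$; both are carried out in Appendix~\ref{app-estimate}.
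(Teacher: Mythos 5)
Your strategy coincides with the paper's actual proof in Appendix~\ref{app-estimate}: telescoping the Pr\"ufer recursion \eqref{eq-rec-R}, Taylor expanding $f(x,\bar\theta)$ with the same second-order coefficient $\tfrac14-\tfrac12\cos2\bar\theta+\tfrac14\cos4\bar\theta$, replacing powers of $x_j$ by their moments (the paper invokes \cite[Lemmas~8.3 and 8.4]{KLS}, which is precisely your martingale strong law), killing the oscillatory drift sums via the near-arithmetic progression of the angles under $k\notin\tfrac\pi4\ZZ$ (the paper's block argument of Lemma~\ref{lem-ql} versus your Abel summation is the same mechanism, with Lemma~\ref{lem-conv} supplying $x_n\to0$ a.s.\ in place of the shrinking supports of \cite{KLS}), and obtaining the subordinate solution from \cite[Theorem~8.3]{LaSi} for $\alpha<1$ and the \cite[Lemmas~8.7, 8.8]{KLS} refinement for $\alpha=1$. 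The only caution: at $\alpha=1$ with $\gamma_{E,\lambda}/C\leq\tfrac12$ the angular defects $\|T_E(n)\|^{-2}\sim n^{-2c\gamma_{E,\lambda}}$ are \emph{not} summable, so the convergence of the contracting directions genuinely requires the cancellation estimate \eqref{eq-est-L(n)} on $L(n)=f(x_n,\bar\theta_n^{(1)})-f(x_n,\bar\theta_n^{(2)})$ rather than the naive summability you sketch --- but you correctly defer to exactly that refinement.
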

\begin{rem}
The first equation is trivial provided the limit exists as $\left(\sum_{j=0}^n s_j^{-1}\right)\,/\,\left(\sum_{j=1}^n j^{-\alpha} \right)\,\to\,c$ for $n\to\infty$.  Moreover, note that for $\alpha<1$ we have $\sum_{j=1}^n j^{-\alpha}\,\sim\,(1-\alpha)^{-1}\,n^{1-\alpha}$ whereas
for $\alpha=1$, $\sum_{j=1}^n j^{-1}\,\sim\,\log(n)$.
The last equation follows from \eqref{eq-def-gamma} and \eqref{eq-def-k}.
\end{rem}

The proof can be done using the techniques of \cite{KLS}.
Differences are firstly that $\EE(x_n)$ varies with $n$ which will give some additional oscillatory term to take care of and secondly that the support of the distribution of $x_n$ is not getting smaller which can be replaced in the proofs by Lemma~\ref{lem-conv}.
For convenience of the reader, some more details are carried out in Appendix~\ref{app-estimate}.

\vspace{.2cm}

\begin{proof}[Proof of Theorem~\ref{th:main}] 
Part (i) is already proved above.
The remaining proof is based on subordinacy theory\footnote{Theorem~\ref{th:subordinacy}, for the proof see Section~\ref{sub:subordinacy}} mainly in the form of
Theorem~\ref{th:purepoint} which is proved at the end of Section~\ref{sub:spectrum}.

For $\alpha=0$ the formula \eqref{eq-log(T)} does not hold. However, if $s_n/n^0=s_n\to C$ then $s_n$ is constant for large $n$ and then
the corresponding transfer matrices are independent, identically distributed. Therefore, by standard arguments, one has a positive Lyapunov exponent and
a limit as in \eqref{eq-log(T)} does exist, only the formula on the right hand side is not valid.
It is also well known that a solution $w_{E,n}$ satisfying \eqref{eq-exist-subord} will exist almost surely (cf. \cite[Theorem~8.3]{LaSi}).
Note that for $E\in I_\lambda\cap\RR_+$ one also has that $\|\psi_{E,n}\|= \|(E-\lambda V_n)^{-1} \phi_n\|\,/\,|\phi_n^*(E-\lambda V_n)^{-1}\phi_n|\leq 
\frac{E+\lambda v_+}{E-\lambda v_+}<\infty$, and a similar bound holds for $E\in I_\lambda\cap \RR_-$.
Hence, the solutions $w_{E,n}$ of \eqref{eq-eig-u} are subordinate in the sense of Definition~\ref{def:subordinate}.

By a Fubini argument we find for $0\leq\alpha\leq 1$ almost surely such subordinate solutions for Lebesgue-almost all $E\in I_\lambda$.
By Theorem~\ref{th:subordinacy} this implies that there is no a.c. spectrum in $I_\lambda$ proving Theorem~\ref{th:main}~(ii).

\vspace{.2cm}

For $0<\alpha<1$ we find 
$
\lim_{n\to\infty} n^{\alpha-1}\,{\frac12\log\left(|w_{E,n}|^2\,+\,|w_{E,n+1}|^2\,\right)}\,=\,-c\,\gamma_{E,\lambda}$
and $\bigoplus_n w_{E,n} \psi_{E,n}$ is an $\ell^2$ vector where the decay rate is given by the decay rate of
the sequence $w_{E,n}$.
Under assumption (A2) ($\PP_v$ is absolutely continuous), we can use
Theorem~\ref{th:purepoint}~(i) with the basis vector\footnote{$\delta_{0,j}=\psi$ is defined by $\psi(n)=\delta_{0,n} e_j$, $e_j$ the $j$-th basis vector of $\CC^{s_0}$, the Dirac notation would be $|0,j\rangle$)}  $\bphi=\delta_{0,j}$
to obtain the almost sure pure point spectrum in $I_\lambda$ (see also Remark~\ref{rem:purepoint}~(i)).
Finally, the almost sure decay rate of the subordinate solutions gives the decay rate of the Green's functions and hence also the decay rate of the
eigenfunctions almost surely (cf. \cite[Theorem~9]{SW}).
By Proposition~\ref{prop-spec-H} the point spectrum also has to be dense (almost surely), finishing the proof of Theorem~\ref{th:main}~(iii).

\vspace{.2cm}

For $d=2$ or $\alpha=1$ we find
$
\lim_{n\to\infty} \frac12\log\left(|w_{E,n}|^2\,+\,|w_{E,n+1}|^2\,\right)\,/\,\log(n)\,=\,-c\,\gamma_{E,\lambda}
$
for the subordinate solutions $w_{E,n}$. However, the vector
$\bigoplus_n w_{E,n} \psi_{E,n}$ is in $\ell^2(\Ab_\bfs)$ only for $E\in I_\lambda \setminus J_\lambda$. If $E\in J_\lambda$ it is not in $\ell^2(\Ab_\bfs)$.
Again, using Fubini, assumption (A2), Theorem~\ref{th:purepoint}~(ii) and similar arguments as above we obtain the following:
The spectrum is almost surely pure point with the corresponding decay of the eigenfunctions in $I_\lambda\setminus J_\lambda$, and the spectral measures 
at $\delta_{0,j}$ (for all $j$) and at $\Phi_0=\sum_j \delta_{0,j}/\sqrt{s_0}$ are almost surely singular continuous in $J_\lambda$ for all $j=1,\ldots,s_0$.
Note that possible eigenvectors in $\VV^\perp$ (cf. Proposition~\ref{prop-spec-decom}) are outside $J_\lambda$ and 
$J_\lambda\cap B_\infty = \emptyset$. 
By Theorem~\ref{th:sigma_0} we therefore obtain that the spectrum of $H_\lambda$ is almost surely singular continuous in $J_\lambda$, finishing the proof of Theorem~\ref{th:main}~(iv).
\end{proof}

\section{The operator $\Pp_r\,\Delta_d\,\Pp_r\,+\lambda\Vv$ on $\ZZ^d$ \label{sec:PDP}}

Finally let us prove Corollary~\ref{coro:main}.
On $\ZZ^d$ the adjacency operator is given by 
$
\Delta_d\, \psi\,(x)\,=\,\sum_{y:y\sim x} \psi(y)$ where $y\sim x$ means that $y$ is a nearest neighbor of $x$, i.e. $\|x-y\|_1=1$.
Let 
$$S_n:=\{x\,\in\,\ZZ^d\,:\,\|x\|_1=n\}\;,\quad s_n=\#(S_n) \qtx{for} n\in \ZZ_+$$ and note that $\langle x\,\Delta_d|y\rangle\neq 0$ can only happen if
the difference of $\|x\|_1$ and $\|y\|_1$ is one.
Therefore, there are $s_n\times s_{n-1}$ matrices $D_n$ such that using spherical coordinates
$\psi=\bigoplus_n\psi(n)\in\bigoplus_n \CC^{s_n}\cong\bigoplus_n \ell^2(S_n)=\ell^2(\ZZ^d)$ one has
$$
(\Delta_d\,\psi)(n)\,=\,D_{n+1}^*\psi(n+1)\,+\,D_n\,\psi(n-1)\;
$$
where the entries of $D_n$ are zero or one giving the edges from $S_{n-1}$ to $S_n$. 

\begin{lemma}
 $\Pp_r \Delta_d \Pp_r - d\,\Aa_\bfs$ is trace-class on the Hilbert space $\bigoplus_{n=0}^\infty \CC^{s_n}$\,.
\end{lemma}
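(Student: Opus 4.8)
The plan is to exhibit $\Pp_r\Delta_d\Pp_r - d\,\Aa_\bfs$ as a block--tridiagonal operator on $\bigoplus_n\CC^{s_n}$ whose off-diagonal blocks are rank one with summable norms. First I would note that in the spherical decomposition the radial projection is $\Pp_r=\bigoplus_n\phi_n\phi_n^*$ with $\phi_n=s_n^{-1/2}(1,\dots,1)^\top\in\CC^{s_n}$, since $\HH_r$ is exactly the set of $\psi$ with $\psi(n)\in\CC\phi_n$ for all $n$. Inserting this together with $(\Delta_d\psi)(n)=D_{n+1}^*\psi(n+1)+D_n\psi(n-1)$, where $D_n$ is the $0$--$1$ incidence matrix between $S_{n-1}$ and $S_n$, gives
\[
(\Pp_r\Delta_d\Pp_r\,\psi)(n)\,=\,\phi_n\Big((\phi_n^*D_{n+1}^*\phi_{n+1})\,\phi_{n+1}^*\psi(n+1)+(\phi_n^*D_n\phi_{n-1})\,\phi_{n-1}^*\psi(n-1)\Big),
\]
and $\phi_n^*D_n\phi_{n-1}=e_n/\sqrt{s_ns_{n-1}}$, where $e_n$ is the number of edges of $\ZZ^d$ joining $S_{n-1}$ and $S_n$. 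Since $\Aa_\bfs$ is the block-tridiagonal operator with zero diagonal and off-diagonal blocks $\phi_n\phi_{n-1}^*$, the difference is block-tridiagonal with zero diagonal and off-diagonal blocks $c_n\,\phi_n\phi_{n-1}^*$, $c_n:=e_n/\sqrt{s_ns_{n-1}}-d$; equivalently, with $\Phi_n:=P_n\phi_n$,
\[
\Pp_r\Delta_d\Pp_r-d\,\Aa_\bfs\,=\,\sum_{n\ge1}c_n\big(|\Phi_n\rangle\langle\Phi_{n-1}|+|\Phi_{n-1}\rangle\langle\Phi_n|\big).
\]
Each summand has trace norm at most $2|c_n|\,\|\phi_n\|\,\|\phi_{n-1}\|=2|c_n|$, so it suffices to prove $\sum_n|c_n|<\infty$.

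The remaining task is the estimate $c_n=\Oo(n^{-2})$, which I would obtain from two edge-counting identities. A vertex $x$ of $S_m$ with $j$ non-zero coordinates has $j$ neighbours in $S_{m-1}$ and $2d-j$ in $S_{m+1}$; writing $k(x)=d-j$ and $K_m:=\sum_{x\in S_m}k(x)$ and summing the relevant neighbour counts over $S_{n-1}$ resp.\ over $S_n$,
\[
e_n\,=\,d\,s_{n-1}+K_{n-1}\,=\,d\,s_n-K_n,\qquad K_m\,=\,d\,\#\{x\in S_m:x_1=0\}\,=\,d\,\sigma_{d-1}(m),
\]
with $\sigma_{d-1}(m):=\#\{y\in\ZZ^{d-1}:\|y\|_1=m\}$. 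Multiplying the two expressions for $e_n$,
\[
\frac{e_n^2}{d^2 s_{n-1}s_n}\,=\,1+\frac{s_n(K_{n-1}-K_n)+(s_n-s_{n-1})K_n}{d\,s_{n-1}s_n}-\frac{K_{n-1}K_n}{d^2 s_{n-1}s_n}.
\]
Now $s_m=\Theta(m^{d-1})$ and $\sigma_{d-1}(m)=\Oo(m^{d-2})$, and the recursion $\sigma_m(n)-\sigma_m(n-1)=\sigma_{m-1}(n)+\sigma_{m-1}(n-1)$ (from $\sigma_m(n)=\sigma_{m-1}(n)+2\sum_{i<n}\sigma_{m-1}(i)$) yields $s_n-s_{n-1}=\Oo(n^{d-2})$ and $K_{n-1}-K_n=\Oo(n^{d-3})$. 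Hence both correction terms are $\Oo(n^{2d-4})/\Theta(n^{2d-2})=\Oo(n^{-2})$, so $e_n^2/(d^2s_{n-1}s_n)=1+\Oo(n^{-2})$, and taking square roots $c_n=d\big(e_n/(d\sqrt{s_{n-1}s_n})-1\big)=\Oo(n^{-2})$. Therefore $\sum_n|c_n|<\infty$ and the difference is trace-class.

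I expect the one non-routine point to be the cancellation in $s_nK_{n-1}-s_{n-1}K_n$: bounding the two products separately gives only $\Oo(n^{-1})$, which is not summable, so one really has to use that forming consecutive differences of $s_n$ and of $K_n$ lowers the polynomial degree. A convenient sanity check is that for $d\le2$ the sequences $s_n$ and $e_n$ are eventually affine in $n$, so $c_n=0$ for large $n$ and the difference is in fact finite rank. Everything else — the identification of $\Pp_r$, the incidence-matrix computation of the off-diagonal blocks, and the trace-norm bound on each rank-one summand — is straightforward.
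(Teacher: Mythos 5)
Your argument is correct and shares the paper's overall skeleton: identify $\Pp_r\Delta_d\Pp_r$ as block-tridiagonal with off-diagonal blocks $a_n\,\phi_n\phi_{n-1}^*$ where $a_n=e_n/\sqrt{s_ns_{n-1}}$ ($e_n$ the edge count between consecutive shells, called $\alpha_{n-1}$ in the paper), bound the trace norm by $2\sum_n|a_n-d|$, and establish $a_n=d+\Oo(n^{-2})$. Where you genuinely diverge is in how that last estimate is obtained. The paper computes explicit asymptotic expansions: it counts the points of $S_n$ with exactly $k$ vanishing coordinates, $s_{n,k}=\binom{d}{k}2^{d-k}\binom{n-1-k}{d-1-k}$, extracts the leading and subleading coefficients of $s_n$ and $\alpha_n$, and checks that the two expansions of $\alpha_n^2/(s_ns_{n+1})$ agree to order $n^{2d-3}$. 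You instead exploit the two exact double-counting identities $e_n=d\,s_{n-1}+K_{n-1}=d\,s_n-K_n$, multiply them, and observe that the cross term $s_nK_{n-1}-s_{n-1}K_n$ gains an extra order of decay because consecutive differences of the polynomial sequences $s_n$ and $K_n$ drop the degree by one. This is coefficient-free and makes the mechanism of the cancellation transparent, whereas the paper's computation yields the explicit constants as a by-product; both are valid, and your identification of $K_m=d\,\sigma_{d-1}(m)$ together with the recursion $\sigma_m(n)-\sigma_m(n-1)=\sigma_{m-1}(n)+\sigma_{m-1}(n-1)$ correctly delivers the needed degree drops.

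One small inaccuracy in your closing sanity check: for $d=2$ it is true that $s_n=4n$ and $e_n=8n-4$ are eventually affine, but $c_n=(2n-1)/\sqrt{n^2-n}-2$ is \emph{not} zero for large $n$ (the geometric mean $\sqrt{s_ns_{n-1}}$ is not affine); one only gets $c_n=\Oo(n^{-2})$, exactly as your general argument predicts. This does not affect the proof, since the main estimate covers all $d$, but the finite-rank claim for $d=2$ should be dropped.
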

\begin{proof}
Let $\phi_n=1/\sqrt{s_n} (1,\ldots,1)^\top \in \CC^{s_n}$ as above, the radial projection is given by
$(\Pp_r \psi) (n)=\phi_n \phi_n^*\,\psi(n)$ or in Dirac notation $\Pp_r=\sum_n |\Phi_n\rangle\langle\Phi_n|$ where $\Phi_n=P_n \phi_n$ is the natural embedding of $\phi_n$ into $\ell^2(\ZZ^d)$.
Therefore,
$$
(\Pp_r\,\Delta_d\,\Pp_r\,\psi)(n)\,=\,\phi_n\left(a_{n+1}\,\phi_{n+1}^*\psi(n+1)\,+\,a_n\,\phi_{n-1}^*\,\psi(n-1) \right)\qtx{with} a_n=\phi_n^* D_n \phi_{n-1}\;.
$$
Now let $\alpha_n$ be the total number of edges between $S_{n}$ and $S_{n+1}$, then
$$
\alpha_n=\sum_{j=1}^{s_{n}} \sum_{k=1}^{s_{n-1}} (D_n)_{jk} \qtx{and}
a_n\,=\,\frac{1}{\sqrt{s_n s_{n-1}}}\,\alpha_{n-1}\;.
$$
To estimate $\alpha_n$ define for $k=0,1,\ldots,d$ 
$$
S_{n,k}=\big\{x\in (\ZZ)^d\,:\,\|x\|_1=n\;\text{and}\;\#\{j:x_j=0\}=k \big\}\;,\quad s_{n,k}=\#(S_{n,k})\;.
$$
In words, $S_{n,k}\subset S_n$ is the subset of vectors where exactly $k$ entries are zero. Clearly, $s_n=\sum_k s_{n,k}$.
Each vector $x=(x_1,x_2,\ldots,x_d)\in S_{n,0}$ can be mapped to an increasing sequence of $d$ positive integers $(|x_1|,|x_1|+|x_2|,\ldots)$ and a 
vector of signs $(\sgn(x_1),\ldots,\sgn(x_n))$. This is a bijection, therefore one obtains $s_{n,0}=2^d\binom{n-1}{d-1}$.
Similar reasoning shows for $n\geq d > k$ (note $s_{n,d}=0$ for $n>0$)
$$
s_{n,k}\,=\,\binom{d}{k}\, 2^{d-k}\,\binom{n-1-k}{d-1-k}\;.
$$
Moreover, each $x\in S_{n,k}$ has exactly $d+k$ edges to $S_{n+1}$ and $d-k$ edges to $S_{n-k}$, thus $\alpha_n=\sum_k (d+k) s_{n,j}=\sum_j (d-k) s_{n+1,j}$.
Up to errors of order $\Oo(n^{d-3})$ we find
$$
s_{n,0}\,=\,\frac{2^d\,n^{d-1}}{(d-1)!}-\frac{2^{d-1} d\,n^{d-2}}{(d-2)!}\,+\,\Oo,\quad s_{n,1}\,=\,\frac{2^{d-1} d\,n^{d-2}}{(d-2)!}\,+\,\Oo,\quad
s_n\,=\,\frac{2^d\,n^{d-1}}{(d-1)!}\,+\,\Oo
$$
and
$$
\alpha_n\,=\,\sum_{k=0}^{d-1} (d+k)\,s_{n,k}\,=\,\frac{d\, 2^{d}}{(d-1)!}\,\left(n^{d-1}+\frac12\,(d-1)n^{d-2}\right)\,+\,\Oo(n^{d-3})\,.
$$
This gives
$$
a_{n+1}^2\,=\,\frac{\alpha_n^2}{s_n\,s_{n+1}}\,=\,\frac{d^2\,(n^{2d-2}+(d-1)\,n^{2d-3})+\Oo(n^{2d-4})}{n^{2d-2}+(d-1)\,n^{2d-3}+\Oo(n^{2d-4})}\,=\,d^2+\Oo(n^{-2})
$$
which implies $a_n=d+\Oo(n^{-2})$ and
\begin{eqnarray*}
\|\Pp_r\Delta_d\Pp_r\,-\,d\,\Aa_\bfs\|_1\,=\,\left\| \sum_{n=1}^\infty (a_n-d) \left(|\Phi_{n}\rangle\langle\Phi_{n-1}|+|\Phi_{n-1}\rangle\langle\Phi_n| \right)\right\|_1\,\leq\,2\sum_{n=0}^\infty |a_n-d|\,<\,\infty\,.
\end{eqnarray*}
Thus, the difference is trace-class.
\end{proof}

\begin{proof}[Proof of Corollary~\ref{coro:main}]
The cases $d\leq 2$ follow immediately as a trace-class perturbation does not change the a.c. spectrum.
For the purity of the a.c. spectrum in the case $d\geq 3$ note that when exchanging $\Aa_\bfs$ by $\frac1d \Pp_r \Delta \Pp_r$ we have to change $\phi_n$ by $\phi'_n=(1+b_n)\phi_n$ with $b_n>-1$ 
where $(1+b_n)(1+b_{n-1})=a_n/d$. We have a free choice of $b_0>-1$ and it determines all other $b_n$.
By the formulas above it is clear that $a_n^2$ is a rational function in $n$ converging to $1$ and so it has to be monotone for large $n$.
As $a_{n+1}/a_{n}= (1+b_{n+2})/(1+b_n)$ we find that the sequences $(1+b_{2n})$ and $(1+b_{2n+1})$ are either both increasing or both decreasing (for large $n$) and converging.
Adjusting $b_0$ we can arrange for $b_n\to 0$ and then for some $n>N$ all $b_n$ have the same sign and one must have $b_n=\Oo(n^{-2})$ as $a_n/d=1+\Oo(n^{-2})$.
The upper right entries of the transfer matrices would change to  $a'_{E,n}=a_{E,n}\,/\,(1+b_n)^2\,=\,a_{E,n} (1+\Oo(n^{-2})$. Therefore,
in the proof of the pure a.c. spectrum in Section~\ref{sec:ac-spectrum}, $x_n$ would change to $x_n'=x_n(1+\Oo(n^{-2}))+\Oo(n^{-2})$.
As $n^{-2}$ is summable, we still obtain $\liminf_n \int_a^b \|T_E(n)\|^4\,dE\,<\,\infty$ almost surely.
Therefore, we get the almost sure a.c. spectrum as before.
\end{proof}

\appendix


\section{Operators with one propagating channel \label{app:spectrum}}

In this appendix we will prove the theorems of Section~\ref{sub:spectrum} and
consider the operator $H$ as in \eqref{eq-H-prop},
$$
(H\psi)(n)\,=\,\phi_n\,\left(\phi_{n+1}^* \psi(n+1)+\phi_{n-1}^*\psi(n-1) \right)\,+\,V_n\phi_n
$$
with Dirichlet boundary conditions $\psi(-1)=0$ where $\psi=\bigoplus_{n=0}^\infty \psi(n)\in\bigoplus_n\CC^{s_n}=\ell^2(\Ab_\bfs)$, 
and $\phi_n\in\CC^{s_n}$. The sequences $\phi_n,\, V_n$ are chosen such that $H$ restricted to $\Dd_0$ as in \eqref{eq-def-D0} is essentially self-adjoint.   
As in \eqref{eq-def-Phi} we let $\Phi_n=P_n\phi_n=\bigoplus_{k=0}^{n-1}\nul\oplus \phi_n \oplus \bigoplus_{k=n+1}^\infty \nul$. 
For the spectral theory recall Proposition~\ref{prop-spec-decom} which states some possibly trivial eigenvalues and eigenvectors of $H$
in the orthogonal complement of $\VV=\bigoplus_n \VV_n$,  the cyclic space generated by all the $\Phi_n$.
We may therefore restrict the investigation of the spectral theory and Green's functions to this space.
Recall from \eqref{eq-def-Tt} that we defined the transfer matrices
\begin{equation*}
 T_{z,n}:=
 \pmat{\left(\phi_n^* (z-V_n)^{-1} \phi_n\right)^{-1} & -1 \\ 1 & 0 } \qtx{and}
 T_z(n)=T_{z,n} T_{z,n-1}\,\cdots\,T_{z,1}T_{z,0}\,.
\end{equation*}
For $z=E$ in the spectrum of $V_n$ we have the holomorphic extension $T_{E,n}=\smat{0&-1\\1&0}$.
As the introduction of the transfer matrices in \eqref{eq-def-Tt} suggests one can use the arsenal of transfer matrix methods developed for one-dimensional Jacobi operators in this setup.

\subsection{Green's function identities}

Let us start with the Green's functions and consider truncated operators with different boundary conditions.
Hence, let $H_{N,\beta}$ denote the operator $H$ restricted to $\bigoplus_{n=0}^N \ell^2(S_n)\cong\bigoplus_{n=0}^N \CC^{s_n}$ with the boundary condition
$\phi_{N+1}^* \psi(N+1)=-\beta \phi_N^* \psi(N)$ at $N$, i.e. for $\beta=0$ we have Dirichlet boundary conditions and
$H_{N,\beta}=H_{N,0}-\beta |\Phi_N\rangle \langle\Phi_N|$.
For $\im z>0$ we define the {\it radial components} of the resolvent by
\begin{equation}\label{eq-def-g}
 g_z(m,n):=\langle \Phi_m\,|\, (H-z)^{-1}\,|\, \Phi_n\rangle \;,\quad
 g_{z,N,\beta}(m,n):=\langle \Phi_m\,|\,(H_{N,\beta}-z)^{-1}\,|\,\Phi_n\rangle\;.
\end{equation}
Clearly, $H_{N,\beta}\psi \to H \psi$ for any $\psi\in\Dd_0$ and hence, $H_{N,\beta}\to H$ in strong
resolvent sense. Therefore, $g_{N,\beta}(m,n)\to g(m,n)$ for $N\to\infty$ and varying $\beta$.

Let $u_{z,n}$ and $v_{z,n}$ be solutions of 
the modified eigenvalue equation \eqref{eq-eig-u} with $u_{z,-1}=0=v_{z,0}$ and $u_{z,0}=v_{z,-1}=1$, then
\begin{equation}\label{eq-def-u-v}
T_z(n)=\pmat{u_{z,n} & v_{z,n} \\ u_{z,n-1} & v_{z,n-1}} \;.
\end{equation}
Moreover, let  $w_z^{(N,\beta)}$ be a solution with $\beta$-boundary condition at $N$, i.e.
\begin{equation}\label{eq-def-w-beta}
w^{(N,\beta)}_{z,N}=1\,,\quad w^{(N,\beta)}_{z,N+1}=-\beta\;,\quad \pmat{w^{(N,\beta)}_{z,n+1}\\w^{(N,\beta)}_{z,n}}=
T_{z,n} \pmat{w^{(N,\beta)}_{z,n}\\w^{(N,\beta)}_{z,n-1}}\,.
\end{equation}
By self-adjointness and hence existence of $(H-z)^{-1}$, for $\im(z)>0$ there exists a unique solution $w_z^{(\infty)}$ such that
$$
w^{(\infty)}_{z,-1}=-1\;,\quad \sum_{n=0}^\infty |w^{(\infty)}_{z,n}|^2\,\|\psi_{E,n}\|^2
\,<\,\infty\;.
$$
In particular one finds for $\psi=\bigoplus_{n\geq 0} w^{(\infty)}_{z,n} \psi_{E,n}$ that $(H-z)\psi=\Phi_0$.
For two solutions $u ,v$ of \eqref{eq-eig-u} we further define the Wronskian
\begin{equation}
 W(u,v)=u_{n+1} v_n - u_n v_{n+1}
\end{equation}
which is independent of $n$.
\begin{lemma}\label{lem:gr}
For $\im(z)>0$ we find the following identities:
\begin{enumerate}[{\rm (i)}]
\item Any solution $w_z$ (in particular, $u_z, v_z, w^{(N,\beta)}_z, w^{(\infty)}_z$) of the modified eigenvalue equation \eqref{eq-eig-u} satisfies
\begin{equation} \label{eq-sum-|u|^2}
\im(z)\sum_{n=0}^N \frac{|w_{z,n}|^2}{\|\phi_n\|^2}\leq
\im(z) \sum_{n=0}^N |w_{z,n}|^2\,\|\psi_{z,n}\|^2\,=\,
\im(w_{z,N+1}\,\overline{w_{z,N}}-w_{z,0}\,\overline{w_{z,-1}})\,.
\end{equation}
\item We have
\begin{equation}\label{eq-gr-beta}
 g_{z,N,\beta}(m,n) = \begin{cases}[W(w_z^{(N,\beta)},u_z)]^{-1} w^{(N,\beta)}_{z,m} u_{z,n}& \quad \text{for $N\geq m\geq n \geq 0$}  \\ 
 [W(w_z^{(N,\beta)},u_z)]^{-1} u_{z,m} w^{(N,\beta)}_{z,n} & \quad \text{for $N\geq n\geq m \geq 0$} \end{cases}
\end{equation}
and similarly
\begin{equation}\label{eq-gr}
 g_{z}(m,n) = \begin{cases} w^{(\infty)}_{z,m} u_{z,n}& \quad \text{for $m\geq n \geq 0$}  \\ 
 u_{z,m} w^{(\infty)}_{z,n} & \quad \text{for $n\geq m \geq 0$} \end{cases}
\end{equation}
For the last equation, note that $W(w^{(\infty)}_z,u_z)=w^{(\infty)}_{z,0}u_{z,-1}-w^{(\infty)}_{z,-1}u_{z,0}=1$.
\item Let $P_n:\CC^{s_n}\to \ell^2(S_n)\subset \ell^2(\Ab_\bfs)$ be the canonical injection so that
$P_n\phi_n=\Phi_n$. Then $P_n^*$ is the canonical projection from $\ell^2(\Ab_\bfs)$ to $\CC^{s_n}\cong \ell^2(S_n)$ and for $m\neq n$ one finds
\begin{equation}\label{eq-resolvent}
 P_m^*(H-z)^{-1} P_n = g_z(m,n)\;\frac{(V_m-z)^{-1} \phi_m \phi_n^*(V_n-z)^{-1}}{(\phi_m^*(V_m-z)^{-1}\phi_m)(\phi_n^*(V_n-z)^{-1}\phi_n)}
 =g_z(m,n) \psi_{z,m} \psi_{\bar z,n}^*
\end{equation}
and for $m=n$,
\begin{align}\notag
 P_n^*(H-z)^{-1}P_n &= g_z(n,n)\;
 \psi_{z,n} \psi_{\bar z,n}^*\,
 +(V_n-z)^{-1}\left(\one-\phi_n\,\psi_{\bar z,n}^*\right)\\
&= \left(V_n-z+\frac{\phi_n\phi_n^*}{g_z(n,n)}-\frac{\phi_n\phi_n^*}{\phi_n^*(V_n-z)^{-1}\phi_n} \right)^{-1}\;.
\label{eq-resolvent2}
 \end{align}
Changing $H$ with $H_{N,\beta}$ one has the same formulas with $g_z$ changed to $g_{z,N,\beta}$.
\end{enumerate}
\end{lemma}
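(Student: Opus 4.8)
The plan is to imitate the classical Jacobi‑operator derivation of the Green's function formulas, the one new ingredient being that the vector equation $(H-z)\psi=\cdot$ must first be ``contracted'' onto the channel vectors: if one sets $u_m:=\phi_m^*\psi(m)$, then $u$ obeys the scalar recurrence \eqref{eq-eig-u} away from the source, and the full vectors are recovered by $\psi(m)=u_m\psi_{z,m}$. For part (i), fix a solution $w=(w_{z,n})_n$ of \eqref{eq-eig-u}, i.e.\ $a_{z,n}w_{z,n}=w_{z,n+1}+w_{z,n-1}$. Multiplying by $\overline{w_{z,n}}$, multiplying the complex‑conjugate relation (valid with $a_{\bar z,n}=\overline{a_{z,n}}$) by $w_{z,n}$, and subtracting yields the discrete Green identity $|w_{z,n}|^2(a_{z,n}-\overline{a_{z,n}})=F_n-F_{n-1}$ with $F_n:=\overline{w_{z,n}}w_{z,n+1}-w_{z,n}\overline{w_{z,n+1}}$; summing over $0\le n\le N$ telescopes the right side to $F_N-F_{-1}$, and dividing by $2i$ gives $\sum_{n=0}^N|w_{z,n}|^2\im(a_{z,n})=\im(w_{z,N+1}\overline{w_{z,N}}-w_{z,0}\overline{w_{z,-1}})$. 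Substituting $\im(a_{z,n})=\im(z)\|\psi_{z,n}\|^2$ from \eqref{eq-rel-abpsi} gives the claimed equality, and $\|\psi_{z,n}\|^2\ge\|\phi_n\|^{-2}$ (again \eqref{eq-rel-abpsi}) with $\im z>0$ gives the inequality.

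For part (ii), put $\psi:=(H-z)^{-1}\Phi_n$ and $u_m:=\langle\Phi_m|\psi\rangle=g_z(m,n)$. Applying $P_m^*$ to $(H-z)\psi=\Phi_n$ and using that shell $m$ couples to shells $m\pm1$ only through $\phi_m$, one gets $(z-V_m)\psi(m)=\phi_m(u_{m+1}+u_{m-1})$ for $m\ne n$; projecting with $\phi_m^*$ shows $u_m$ solves \eqref{eq-eig-u} there, and back‑substituting gives $\psi(m)=u_m\psi_{z,m}$. At $m=n$ the same steps yield $\psi(n)=(z-V_n)^{-1}\phi_n(u_{n+1}+u_{n-1}-1)$, hence the jump $u_{n+1}+u_{n-1}-a_{z,n}u_n=1$. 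The Dirichlet condition $\psi(-1)=0$ forces $u_m\propto u_{z,m}$ for $m\le n$; finiteness of $\sum_m|u_m|^2\|\psi_{z,m}\|^2$ (from $\psi\in\ell^2$ and $\psi(m)=u_m\psi_{z,m}$ for $m>n$), i.e.\ the decaying branch, forces $u_m\propto w^{(\infty)}_{z,m}$ for $m\ge n$; matching these two half‑line solutions across the unit jump, using $W(w^{(\infty)}_z,u_z)=1$, gives \eqref{eq-gr}. For $H_{N,\beta}$ the argument is identical, the right‑end solution being $w^{(N,\beta)}_z$; since $H_{N,\beta}-z$ is invertible for $\im z>0$, its Wronskian with $u_z$ is finite and nonzero, producing the prefactor $[W(w^{(N,\beta)}_z,u_z)]^{-1}$ in \eqref{eq-gr-beta}.

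For part (iii), rerun the computation of (ii) with $P_n\xi$ in place of $\Phi_n$ for arbitrary $\xi\in\CC^{s_n}$. For $m\ne n$ one again gets $\psi(m)=\tilde u_m\psi_{z,m}$, with $\tilde u_m$ solving \eqref{eq-eig-u} off $n$ under the same boundary and decay conditions but now with jump $a_{z,n}\phi_n^*(z-V_n)^{-1}\xi=\psi_{\bar z,n}^*\xi$ (using $\phi_n^*(z-V_n)^{-1}=a_{z,n}^{-1}\psi_{\bar z,n}^*$); by linearity $\tilde u_m=(\psi_{\bar z,n}^*\xi)\,g_z(m,n)$, hence $P_m^*(H-z)^{-1}P_n=g_z(m,n)\,\psi_{z,m}\psi_{\bar z,n}^*$, which after rewriting $\psi_{z,m},\psi_{\bar z,n}^*$ in terms of $(V_\cdot-z)^{-1}\phi_\cdot$ is \eqref{eq-resolvent}. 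At $m=n$ one keeps the source term: solving $(z-V_n)\psi(n)=\phi_n(\tilde u_{n+1}+\tilde u_{n-1})-\xi$ and inserting $\tilde u_{n+1}+\tilde u_{n-1}=a_{z,n}\tilde u_n+\psi_{\bar z,n}^*\xi$ gives $\psi(n)=g_z(n,n)\psi_{z,n}\psi_{\bar z,n}^*\xi+(V_n-z)^{-1}(\one-\phi_n\psi_{\bar z,n}^*)\xi$, the first line of \eqref{eq-resolvent2}; the second line follows from the first by the Sherman--Morrison formula (equivalently, by recognizing the right‑hand side as the Feshbach--Schur complement of the block $P_n^*(H-z)P_n=V_n-z$, whose coupling to the rest of $H$ is rank one), using $\psi_{\bar z,n}^*\phi_n=1$. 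The $H_{N,\beta}$ versions are obtained verbatim with $g_z$ replaced by $g_{z,N,\beta}$.

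I expect the main obstacle to be organizational rather than conceptual: one must carefully track that the source is contracted against $\psi_{\bar z,n}^*$ while the response is carried by $\psi_{z,n}$ (the $z$‑versus‑$\bar z$ bookkeeping that is invisible in the scalar Jacobi case), handle the off‑diagonal source jump $\psi_{\bar z,n}^*\xi$ consistently in the matching step, and reconcile the two equivalent forms of the diagonal block in \eqref{eq-resolvent2}. Beyond self‑adjointness — needed only for the existence of $(H-z)^{-1}$ and of the decaying solution $w^{(\infty)}_z$ for $\im z>0$ — and the elementary relations \eqref{eq-rel-abpsi}, no further input is required.
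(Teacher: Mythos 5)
Your proposal is correct and follows essentially the same route as the paper: part (i) is the same telescoping of imaginary parts of $a_{z,n}|w_{z,n}|^2=w_{z,n+1}\overline{w_{z,n}}+w_{z,n-1}\overline{w_{z,n}}$ combined with \eqref{eq-rel-abpsi}, and parts (ii)--(iii) are obtained, as in the paper, by projecting $(H-z)\psi=P_n\varphi$ onto the channel, observing that $u_m=\langle\Phi_m|\psi\rangle$ solves \eqref{eq-eig-u} off $n$ with source contraction $\psi_{\bar z,n}^*\varphi$, and matching the Dirichlet branch $u_z$ against the $\ell^2$ (resp. $\beta$-boundary) branch via the Wronskian. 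The only cosmetic difference is that the paper runs the argument once for general $\varphi$ and specializes to $\varphi=\phi_n$, whereas you do the scalar case first and then generalize; your Sherman--Morrison derivation of the second line of \eqref{eq-resolvent2} is a valid way to supply a step the paper leaves implicit.
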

Note that the $\phi_n,\,\psi_{z,n}$ are column vectors, hence expressions like $\phi_m \phi_n^*$ or $\psi_{z,m} \psi_{\bar z,n}^*$ 
are $s_m \times s_n$ matrices that may also be written as $|\psi_{z,m}\rangle \langle \psi_{\bar z,n}|$, but we only want to use the Dirac notation
for operators on the complete Hilbert space $\ell^2(\Ab_\bfs)$. Also note that
$\psi_{\bar z,n}^* = a_{z,n} \phi_n^*\,(z-V_n)^{-1} $ as the complex conjugation included in the adjoint changes $\bar z$ back to $z$.

\begin{proof}
For (i) note that \eqref{eq-eig-u} and \eqref{eq-def-a-psi} imply
$$
w_{z,n+1} \overline{w_{z,n}} + w_{z,n-1} \overline{w_{z,n}}=(\phi_n^*(z-V)^{-1}\phi_n)^{-1}\, |w_{z,n}|^2 \,=\,a_{z,n}\,|w_{z,n}|^2\,.
$$
Taking imaginary parts, summing over $n$ and using \eqref{eq-rel-abpsi} yields \eqref{eq-sum-|u|^2}.

\vspace{.2cm}

For parts (ii) and (iii), let $\psi$ be the solution of $(H_{N,\beta}-z)\psi = \Phi$ or $(H-z)\psi=\Phi$ respectively,
where $\Phi=P_n\varphi$, i.e. $\Phi(l)=\delta_{n,l}\varphi$, $\varphi\in\CC^{s_n}$.
Then, for $m\geq n$ one must have
 $\Phi_m^*\psi=c_1 w^{(N,\beta)}_{z,m}$, or $c_1 w^{(\infty)}_{z,m}$, respectively, and for $m\leq n$ one has
 $\Phi_m^*\psi=c_2u_{z,m}$. Moreover, for $m\neq n$,
 $$
 \psi(m)=(z-V_m)^{-1} \phi_m (\langle \Phi_{m+1}|\psi\rangle\,+\,\langle\Phi_{m-1}|\psi\rangle)
=\psi_{z,m} \,\langle\Phi_{m}|\psi\rangle\,,
 $$ 
where you should note that $\langle\Phi_m|\psi\rangle=\phi_m^*\psi(m)$.
Furthermore,
 \begin{align*}
\psi(n) \,&=\,(z-V_n)^{-1}\big( \phi_n(\langle \Phi_{n+1} | \psi\rangle\,+\,\langle\Phi_{n-1}| \psi\rangle\,)- \varphi\big)\\
&=\,(V_n-z)^{-1}\left(\phi_n\frac{\langle \Phi_n | \psi\rangle }{\phi_n^*(V_n-z)^{-1}\phi_n}-\phi_n \,\psi_{\bar z,n}^*\varphi+\varphi \right),
 \end{align*}
 note that $\psi_{\bar z,n}^*\varphi=(\phi_n^*(z-V_n)^{-1}\varphi)\, /\, (\phi_n^*(z-V_n)^{-1}\phi_n)$. 
 This implies in the case of the operator $H_{N,\beta}$ that
 $$
 \pmat{c_1w^{(N,\beta)}_{z,n+1} \\ c_1 w^{(N,\beta)}_{z,n}}=\pmat{\langle \Phi_{n+1}|\psi\rangle \\ \langle\Phi_n| \psi \rangle} = 
 T_{n,z} \pmat{\langle \Phi_{n} | \psi \rangle \\ \langle \Phi_{n-1} | \psi \rangle} + 
 \pmat{\psi_{\bar z,n}^*\varphi\\0} = \pmat{c_2 u_{z,n+1}+a \\ c_2 u_{z,n}}\;.
 $$
Some algebra then gives
$$
c_1= \psi_{\bar z,n}^*\varphi\,u_{z,n} / W(w^{(N,\beta)}_z,u_z) \qtx{and} c_2 = \psi_{\bar z,n}^*\varphi\,w^{(N,\beta)}_{z,n} / W(w^{(N,\beta)}_z,u_z)\;.
$$
The analogue equations hold when replacing $H_{N,\beta}$ with $H$.
Noting that $\psi_{\bar z,n}^*\phi_n=1$ the case $\varphi=\phi_n$ gives \eqref{eq-gr-beta} and \eqref{eq-gr}.  
Together with the expressions for $\psi(m),\, \psi(n)$ above, \eqref{eq-resolvent} and \eqref{eq-resolvent2} follow.
\end{proof}

\subsection{Proof of Theorem~\ref{th:sigma_0} \label{sec:sigma_0}}

An immediate consequence of the above calculations is the following lemma which also proves parts (i) and (ii) of Theorem~\ref{th:sigma_0}.
As the sets $A_n$ and $B_n$ in Theorem~\ref{th:sigma_0} are finite, it will be enough to consider compact intervals inside the complements.
\begin{lemma}\label{lem:msr}
 We let $\mu_n$ denote the spectral measure at $\Phi_n$, i.e. $\int f\, d\mu_n = \langle\Phi_n|f(H)|\Phi_n\rangle$. 
\begin{enumerate}[{\rm (i)}]
\item Assume that for all $E\in[a,b]$ and $k=0,1,\ldots,n-1$ the transfer matrices $T_{E,k}$ exist
in the sense as in \eqref{eq-def-Tt} and \eqref{eq-def-T_E}, i.e. $[a,b]\cap B_n=\emptyset$.
 Then restricted to the interval $[a,b]$ one finds
 $$
 1_{[a,b]}(E)\;\mu_n(dE)\, =\, 1_{[a,b]}(E)\,u_{E,n}^2\, \mu_0(dE)\,.
 $$
 In particular, in the interval $[a,b]$ the measure $\mu_n$ is continuous with respect to $\mu_0$.
\item For $\varphi\in \VV_n$ let $\mu_{n,\varphi}$ denote the spectral measure at $P_n\varphi$, i.e.
$\int f d\mu_{n,\theta}=\langle P_n\varphi |f(H)| P_n\varphi\rangle$. Assume that for all $E\in[a,b]$ the matrix $T_{E,n}$ exists 
(as expressed above). Then, one finds
$$
 1_{[a,b]}(E)\;\mu_{n,\varphi}(dE)\, =\, 1_{[a,b]}(E)\,\left|\varphi^*\psi_{E,n}\right|^2 \, \mu_n(dE)\,
$$
where $\varphi^*\psi_{E,n}$ has only finitely many zeros in $[a,b]$ for $\varphi\neq 0$.
 \end{enumerate}
 \end{lemma}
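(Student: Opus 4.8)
Both identities will come from the Green's function formulas of Lemma~\ref{lem:gr} together with the Stieltjes inversion formula: if $F_\xi(z)=\langle\xi\,|\,(H-z)^{-1}\,|\,\xi\rangle$ is the Borel transform of a spectral measure $\mu_\xi$, then $\mu_\xi(dE)=\mathrm{w\text{-}lim}_{\eta\downarrow 0}\,\tfrac1\pi\im F_\xi(E+i\eta)\,dE$. Since $A_n$ and $B_n$ are finite, it suffices to prove each identity on an open interval whose closure is disjoint from $B_n$ (resp. $A_n$), the closed-interval form then following by slightly enlarging the interval. Throughout I will use the elementary fact that if $c(z)$ is holomorphic on a complex neighbourhood of such an interval $(a,b)$ and real on $(a,b)$, while $G(z)$ is a difference of Borel transforms of finite measures, then $\tfrac1\pi\im[c(z)G(z)]\,dE$ and $c(E)\cdot\tfrac1\pi\im G(z)\,dE$ have the same weak limit on $(a,b)$: the difference equals $\tfrac1\pi\im[(c(z)-c(E))G(z)]$, which is bounded by $C\eta\int_a^b|G(E+i\eta)|\,dE=O(\eta\log\tfrac1\eta)$. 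In particular a term holomorphic and real on the interval contributes nothing to the weak limit.

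For part (i) I would combine \eqref{eq-gr} and \eqref{eq-def-u-v}. By \eqref{eq-gr}, $g_z(n,n)=w^{(\infty)}_{z,n}u_{z,n}$ and $g_z(0,0)=w^{(\infty)}_{z,0}$, while $\smat{w^{(\infty)}_{z,n}\\ w^{(\infty)}_{z,n-1}}=T_z(n)\smat{w^{(\infty)}_{z,0}\\-1}$ together with \eqref{eq-def-u-v} gives $w^{(\infty)}_{z,n}=u_{z,n}g_z(0,0)-v_{z,n}$, hence
\[
g_z(n,n)\;=\;u_{z,n}^2\,g_z(0,0)\;-\;u_{z,n}v_{z,n}.
\]
On $(a,b)$ the functions $u_{z,n},v_{z,n}$ are holomorphic (they are polynomials in the $a_{z,k}$, $k<n$, which are holomorphic off $B_n$) with real boundary values $u_{E,n},v_{E,n}$. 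Thus $u_{z,n}v_{z,n}$ drops out of the weak limit, $\tfrac1\pi\im g_z(0,0)\,dE\to\mu_0$, and the observation above with $c(z)=u_{z,n}^2$, $G=g_z(0,0)$ (of finite mass $\|\phi_0\|^2$) yields $1_{(a,b)}\mu_n=1_{(a,b)}u_{E,n}^2\mu_0$, which is the claim.

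For part (ii) I would sandwich \eqref{eq-resolvent2} between $\varphi^*$ and $\varphi$. Using $a_{z,n}^{-1}=\phi_n^*(z-V_n)^{-1}\phi_n$ and $\psi_{\bar z,n}^*=a_{z,n}\phi_n^*(z-V_n)^{-1}$, the three resulting terms collect into
\[
F_{n,\varphi}(z)\;=\;\bigl(g_z(n,n)+a_{z,n}^{-1}\bigr)\,(\varphi^*\psi_{z,n})(\psi_{\bar z,n}^*\varphi)\;+\;\varphi^*(V_n-z)^{-1}\varphi .
\]
On $(a,b)$ the prefactor $(\varphi^*\psi_{z,n})(\psi_{\bar z,n}^*\varphi)$ is holomorphic — at an eigenvalue $E_j$ of $V_n|\VV_n$ in $(a,b)$ the simple pole of $(z-V_n)^{-1}$ is cancelled by the simple zero of $a_{z,n}$, leaving the removable value $\psi_{E_j,n}=\alpha_j/(\phi_n^*\alpha_j)$ for a unit eigenvector $\alpha_j$ — and its boundary value on $(a,b)$ is $|\varphi^*\psi_{E,n}|^2$. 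Taking $\tfrac1\pi\im$ and letting $\eta\downarrow0$: $\tfrac1\pi\im g_z(n,n)\,dE\to\mu_n$, and by partial fractions $\tfrac1\pi\im a_{z,n}^{-1}\,dE\to-\sum_j|\phi_n^*\alpha_j|^2\,\delta_{E_j}$ and $\tfrac1\pi\im[\varphi^*(V_n-z)^{-1}\varphi]\,dE\to+\sum_j|\varphi^*\alpha_j|^2\,\delta_{E_j}$, the sums over $E_j\in(a,b)$. Applying the observation with $c(z)=(\varphi^*\psi_{z,n})(\psi_{\bar z,n}^*\varphi)$ and $G(z)=g_z(n,n)+a_{z,n}^{-1}$ (a difference of finite-mass Borel transforms), the weak limit on $(a,b)$ is
\[
|\varphi^*\psi_{E,n}|^2\,\mu_n\;+\;\sum_j\Bigl(|\varphi^*\alpha_j|^2-|\varphi^*\psi_{E_j,n}|^2\,|\phi_n^*\alpha_j|^2\Bigr)\delta_{E_j},
\]
and the point masses vanish identically since $\psi_{E_j,n}=\alpha_j/(\phi_n^*\alpha_j)$ gives $|\varphi^*\psi_{E_j,n}|^2|\phi_n^*\alpha_j|^2=|\varphi^*\alpha_j|^2$. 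This proves $1_{(a,b)}\mu_{n,\varphi}=1_{(a,b)}|\varphi^*\psi_{E,n}|^2\mu_n$. Finally, $\varphi^*\psi_{E,n}=\varphi^*(E-V_n)^{-1}\phi_n\big/\phi_n^*(E-V_n)^{-1}\phi_n$ is a rational function of $E$ that is not identically zero (otherwise $(\varphi^*\alpha_j)(\alpha_j^*\phi_n)=0$ for all $j$, and cyclicity of $\phi_n$ forces $\varphi\perp\VV_n$, contradicting $0\neq\varphi\in\VV_n$), hence has only finitely many zeros.

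The only genuine subtlety I anticipate is the exact cancellation of the point masses at $\spec(V_n|\VV_n)$ in part (ii): it forces one to use the precise holomorphic extension $\psi_{E_j,n}=\alpha_j/(\phi_n^*\alpha_j)$ and to keep careful track of the anti-Herglotz contribution $a_{z,n}^{-1}$ from \eqref{eq-resolvent2}. The passage to the weak limit through the holomorphic, real-on-$(a,b)$ prefactors is routine (the $O(\eta\log\tfrac1\eta)$ estimate above), and part (i) is then purely mechanical.
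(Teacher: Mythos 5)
Your proof is correct and follows essentially the same route as the paper's: both parts rest on the Green's function identities of Lemma~\ref{lem:gr} and \eqref{eq-resolvent2}, the relation $w^{(\infty)}_{z,n}=g_z(0,0)\,u_{z,n}-v_{z,n}$, Stieltjes inversion, and the holomorphy and reality on $[a,b]$ of the correction terms. The only differences are cosmetic: the paper packages your explicit cancellation of the point masses at $\spec(V_n|\VV_n)$ as the statement that $\varphi^*(V_n-z)^{-1}\varphi+a_{z,n}^{-1}(\varphi^*\psi_{z,n})(\psi_{\bar z,n}^*\varphi)$ extends holomorphically to $[a,b]$ with zero imaginary part, and it controls $\re\,g_z(0,0)$ by a pointwise $\varepsilon$-splitting rather than your integrated $O(\eta\log(1/\eta))$ bound.
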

\begin{proof}
 By Lemma~\ref{lem:gr} one has $\mu_n(dE)=\lim_{\eta\to 0} \frac1\pi \im(w^{(\infty)}_{E+i\eta,n} u_{E+i\eta,n})\,dE$ where the limit has to be understood in the vague sense.
 For the case $n=0$ note that $u_{z,0}=1$ for all $z$.
Now, by the definitions one easily sees $w^{(\infty)}_{z,n}=w^{(\infty)}_{z,0} u_{z,n}-v_{z,n}$ and hence
$w^{(\infty)}_{z,n} u_{z,n}= w^{(\infty)}_{z,0} u_{z,n}^2-v_{z,n}u_{z,n}$.
Thus we find 
$$|\im(w^{(\infty)}_{z,n} u_{z,n})-\re(u_{z,n}^2)\im(w^{(\infty)}_{z,0})|\leq 
|\re(w^{(\infty)}_{z,0})|\,|\im (u_{z,n}^2)|+|\im(u_{z,n}v_{z,n})|\;.$$
Using $w^{(\infty)}_{z,0}=\int (E'-z)^{-1} d\mu_0(dE)$, and splitting the integral one finds for any $\varepsilon$ that
$|\re(w^{(\infty)}_{z,0}|\leq 1/\varepsilon \im(w^{(\infty)}_{z,0})+\varepsilon/\eta$.
Moreover, by the assumption, $u_{E,n}$ and $v_{E,n}$ are well defined for $E\in[a,b]$ and holomorphic in $E$,
therefore $|\im (u_{z,n}^2)|,|\im(u_{z,n}v_{z,n})|<C\eta$ for small imaginary part $\eta$  uniformly in $[a,b]$.
Putting these estimates together, one finds $\eta_\varepsilon$ for any $\varepsilon$ such that
$$
\big|\im(w^{(\infty)}_{E+i\eta,n} u_{E+i\eta,n})-u_{E,n}^2 \im(w^{(\infty)}_{E+i\eta,0})\big|\,<\,\varepsilon+ \varepsilon\,\im(w^{(\infty)}_{E+i\eta})
$$
for any $0<\eta<\eta_\varepsilon$ and any $E\in[a,b]$.
Therefore, for any $f\in C([a,b])$ one finds
$$
\left|\int_a^b f(E) \mu_n(dE) - \int_a^b f(E) u_{E,n}^2 \mu_0(dE) \right|< \int_a^b |f(E)|\,(\varepsilon\, dE\,+\,\varepsilon \mu_{0}(dE))
$$
which goes to zero for $\varepsilon\to 0$ as $\mu_0$ is a bounded measure. This finishes part (i). 

For part (ii) let us first see that $\psi_{z,n}$ indeed extends holomorphically to $[a,b]$.
The only critical values are the eigenvalues of $V_n$ restricted to $\VV_n$.
Thus, let the eigenvalue decomposition be given by $V_n|\VV_n = \sum_{k} e_k v_k v_k^* $ where the $v_j$ form an orthonormal basis
of $\VV_n$. As $\phi_n$ is a cyclic vector, all the eigenvalues $e_j$ are different.
Then
$$
\psi_{z,n} = \frac{\sum_j\alpha_j \alpha_j^* \phi_n /(e_j-z)}{\sum_j |\alpha_j^*\phi_n|^2/(e_j-z)}
\qtx{giving the extension}\psi_{e_k,n}= \alpha_k/ (\phi_n^*\alpha_k)\,.
$$
As $\phi_n$ is a cyclic vector for $V_n|\VV_n$, $\phi_n^*\alpha_k\neq 0$ and $\psi_{z,n}$ is holomorphic at $z=e_k$.
From \eqref{eq-resolvent2} we find
\begin{align*}
&\langle P_n\varphi|(H-z)^{-1}|P_n \varphi \rangle\,=\, \\
& \qquad \qquad g_z(n,n) \varphi^*\psi_{z,n} \psi_{\bar z,n}^* \varphi+
\varphi^*(V_n-z)^{-1}\varphi-(\varphi^*\psi_{z,n}\, \psi_{\bar z,n}^* \varphi) \,(\phi_n^*(V_n-z)^{-1}\phi_n)
\end{align*}
Using $\psi_{e_k+z,n}=\alpha_k/(\phi_n^*\alpha_k)+\Oo(z)$ and the spectral decomposition of $V_n|\VV_n$ as above one easily checks that 
the sum of the last two terms extends holomorphically to $z=e_k$ and
hence defines an analytic function for $z=E\in[a,b]$ with zero imaginary part.
Moreover, $f(z)=\varphi^*\psi_{z,n} \psi_{\bar z,n}^* \varphi$ is holomorphic for $z\not\in\RR$ and
for $z=E\in[a,b]$. By cyclicity of $\phi_n$, $f(z)$ is not the zero function, and $f(E)=|\varphi^*\psi_{E,n}|^2$ has only finitely many zeros 
in $[a,b]$. A similar argument as in (i) now gives $\mu_{n,\varphi}(dE)=f(E)\mu_n(dE)$ for the measures restricted to $[a,b]$.
\end{proof}

In order to prove Theorem~\ref{th:sigma_0}~(iii) (part (i) and (ii) are Corollaries of Lemma~\ref{lem:msr})
we will consider an average over the boundary conditions $\beta$ for the finite matrices $H_{N,\beta}$.
By the definition of $w^{(n,\beta)}_z$ as in \eqref{eq-def-w-beta} one finds
$$
T_z(n+1)\smat{w^{(n,\beta)}_{z,0}\\w^{(n,\beta)}_{z,-1}}\,=\,\pmat{-\beta\\1}\;.
$$ 
Using this equation together with \eqref{eq-def-u-v}, \eqref{eq-gr-beta}, the fact that the transfer matrices have determinant one, 
as well as $W(w_z^{(n,\beta)},u_z)=-w^{(n,\beta)}_{z,-1}$, one obtains
$$
m_{n,\beta}(z):=g_{z,n,\beta}(0,0)=\langle \Phi_0\,|\,(H_{n,\beta}-z)^{-1} \,|\,\Phi_0\rangle=
\frac{\beta v_{z,n}+v_{z,n+1}}{\beta u_{z,n}+u_{z,n+1}}\;.
$$
The first equation defines $m_{n,\beta}(z)$, the second one reminds of the definition of $g_{z,n,\beta}$ in \eqref{eq-def-g}.
For fixed $z$ with $\im(z)>0$ and $\beta$ varying along $\RR$, $m_{n,\beta}(z)$ 
forms a circle, the Weyl circle, in the upper half plane. Moreover, by \eqref{eq-sum-|u|^2} we have 
$\im(u_{z,n+1}\overline{u_{z,n}})>0$ and $\im(\beta)>0$ implies
$\im(\beta|u_{n,z}|^2+u_{z,n+1}\overline{u_{z,n}})>0$ showing that the denominator will not be $0$. 
Hence, $m_{n,\beta}(z)$ can be extended to $\im(\beta)\geq 0$ and for $\im \beta>0$,
$m_{n,\beta}(z)$ lies inside the Weyl circle.
Now, setting $\beta=i$, $m_{n,i}(z)$ is a Herglotz function in $z$ for $\im(z)>0$. The corresponding measure, $\mu_{n,i}$, 
equals the integration of the measures $\mu_{n,\beta}$  associated to $m_{n,\beta}(z)$ over the Cauchy distribution in $\beta$.
The measure $\mu_{n,i}$ is also given by the distributional limit of
$\lim_{\eta\to0} \frac1\pi \im(m_{n,i}(E+i\eta))dE$. We find
$$
\lim_{\eta\to 0} \im(m_{n,i}(E+i\eta))\,=\,
 \im\left(\frac{i v_{E,n}+v_{E,n+1}}{i u_{E,n}+u_{E,n+1}}\right) = \frac{1}{u_{E,n}^2+u_{E,n+1}^2}
$$
for energies $E\not\in B_{n+1}$ where all $T_{E,m}$ for $m\leq n$ exist.
Here we used that the Wronskian is one, $W(u_E,v_E)=1$.

This measure is absolutely continuous except for possibly some points $E'$ in the set $B_{n+1}$.
It is clear from rank one perturbation theory that these delta measures in $\mu_{n,i}$ can only come from energies that are
eigenvalues of $H_{n,\beta}$ for all $\beta$ with eigenvectors that are fixed in $\beta$ and orthogonal to $\Phi_n$.
As shown in Remark~\ref{rem:purepoint}~(iii) and (iv) one can indeed construct such eigenvectors under certain conditions.
Thus, we may define $\nu_n$ to be the pure point part of $\mu_{n,i}$ (and in fact of all $\mu_{n,\beta}$) supported on $B_{n+1}$. 
As these eigenvectors will remain compactly supported eigenvectors for $n\to\infty$ and $n=\infty$, 
the sequence $\nu_n$ is increasing and has a limit $\nu$ supported on $B_\infty$. Remark~\ref{rem:purepoint}~(iii) also clasifies when $\nu(\{E\})>0$.

As $H_{n,\beta_n}\to H$ in strong resolvent sense, there must be a unique limit point $\lim_{n\to\infty} m_{n,\beta_n}(z)=\langle \Phi_0|(H-z)^{-1}|\Phi_0\rangle=w^{(\infty)}_{z,0}$ for all 
$\im \beta\geq 0$.
Hence, one also has $\lim_{n\to\infty} m_{n,i}(z)=w^{(\infty)}_{z,0}$ and therefore it follows in the weak sense that $\mu_{n,i}\to \mu_0$ and by the considerations above,
$$
\mu_0(dE)\,=\,\lim_{n\to\infty} \left( \frac{1}{\pi}\;\frac{1_{\RR\setminus B_\infty}(E)\;dE}{|u_{E,n}|^2+|u_{E,n+1}|^2}\,+\,\nu_n(dE)\right)\,=\,
\lim_{n\to\infty} \frac{1_{\RR\setminus B_\infty}(E)\;dE}{\pi\,\|T_E(n+1)\smat{1\\0}\|^2}\,+\,\nu(dE)\,.
$$
This completes the proof of Theorem~\ref{th:sigma_0}. \qed

\begin{remark}\label{rem:selfad}
 The radius $r_n(z)$ of the Weyl circle is given by
 $$
 2r_n(z)\,=\,\sup_{\beta\in\RR}\left|\frac{\beta v_{z,n}+v_{z,n+1}}{\beta u_{z,n}+u_{z,n+1}}-\frac{v_{z,n}}{u_{z,n}} \right|\,=\,\sup_{\beta\in\RR} 
 \frac{1}{|u_{z,n}|}\;\frac{1}{|\beta u_{z,n}+u_{z,n+1}|}\,\leq\,\frac{1}{\im(\overline{u_{z,n}} u_{z,n+1})}
 $$
 where we used $W(u,v)=1$ and $|u_{z,n}|=|\overline{u_{z,n}}|$. Using $u_{z,0}=0$ and \eqref{eq-sum-|u|^2} we get 
 $$
 2r_n(z)\,\im(z)\,\leq\, \left(\sum_{k=0}^n \frac{|u_{z,k}|^2}{\|\phi_n\|^2}\right)^{-1}\,\leq\,\left(\sum_{k=0}^{n-1} \frac{|u_{z,k}u_{z,k+1}|}{\|\phi_k\|\,\|\phi_{k-1}\|}\right)^{-1}
 \,\leq\, \left(\sum_{k=0}^{n-1} \frac{C}{\|\phi_k\|\,\|\phi_{k-1}\|}\right)^{-1}\,.
 $$
 Therefore, if $\sum_n \|\phi_n \phi_{n-1}\|^{-1}=\infty$, then $r_n(z)\to 0$ for $\im(z)>0$ (limit point case), 
 $H_{n,\beta}\to H$ in strong resolvent sense and the compactly supported vectors $\Dd_0$ form a core.
\end{remark}




\subsection{Subordinacy theory \label{sub:subordinacy} }

Analogue to above let us now define the $m$-function for the infinite operator $H$ by
$$
m(z):=g_z(0,0)\qtx{implying} w_{z,n}^{(\infty)}\,=\,m(z)\,u_{z,n}\,-\,v_{z,n}
$$
or short $w^{(\infty)}_z=m(z)u_z-v_z$. 
Note that $u_z$ and $v_z$ play the role of $\psi(z)$ and $-\varphi(z)$ as in \cite{KP}.

An essential support of the a.c. part of the measure $\mu_0$ (spectral measure at $\Phi_0$), and hence of the a.c. spectrum of $H$, is given by the set of energies
$$
\Sigma'_{ac}\,:=\,\{E\in \RR\,:\, m(E):=\lim_{\eta\downarrow 0} m(E+i\eta)\;\;\text{exists and}\;\; \im m(E)\,>\,0\}\;.
$$
Similarly, an essential support of the singular part of $\mu_0$ is given by
$$
\Sigma'_{s}\,:=\,\{E\in\RR\,:\,\lim_{\eta\downarrow 0}\,\im\,m(E+i\eta)\,=\,\infty\,\}\;.
$$
Recall that we call a non-zero solution $w=(w_{n})_n$ of \eqref{eq-eig-u} at energy $E$ subordinate if for any linear independent solution  
$\wt w$ one finds $\lim_{n\to \infty} \|w\|_{E,n}\,/\,\|\wt w\|_{E,n}\,=\,0 $ where we define
$$
\|w\|^2_{z,n}\,:=\,\sum_{k=0}^n |w_k|^2\,\|\psi_{z,k}\|^2
$$
for any complex energy $z\in \CC \setminus \overline{B_\infty}$ and sequence $w_n$.
Following Kahn-Pearson one can show the following analogue to \cite[Theorem~1 and Theorem~2]{KP}.
\begin{theorem} \label{th:subord-app} Let $E \in \RR \setminus \overline{B_\infty}$, i.e. all transfer matrices exist in a neighborhood of $E\in \RR$.
 \begin{enumerate}[{\rm (i)}]
  \item If $m(E)$ exists and is real, then $m(E) u_E-v_E$ is subordinate. 
  \item If $\lim_{\eta\to 0} |m(E+i\eta)|=\infty$ then $u_E$ is subordinate.
  \item If $w^{(\infty)}_E:=m u_E-v_E$ is subordinate, then $m$ is real and along some sequence $\eta_j\to 0$ we find 
  $$\lim_{j\to\infty}\,m(E+\eta_j)\,=\,m \qtx{and hence} \lim_{j\to \infty} w^{(\infty)}_{E+\eta_j,n}=w^{(\infty)}_{E,n}\;.$$
  \item If $u_E$ is subordinate, then $\lim_{j\to\infty} |m(E+i\eta_j)|\,=\,\infty$\;.
 \end{enumerate}
\end{theorem}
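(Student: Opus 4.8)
The plan is to transcribe the subordinacy arguments of Gilbert--Pearson \cite{GP} and Kahn--Pearson \cite{KP} into the present setting, replacing the trivial norm book-keeping available for Jacobi matrices by the estimates of Lemma~\ref{lem-sub-keyest}. Throughout I fix $E\in\RR\setminus\overline{B_\infty}$ and choose $\varepsilon>0$ with $(E-3\varepsilon,E+3\varepsilon)\cap B_\infty=\emptyset$. Then $T_{z,n}$ is holomorphic and locally uniformly bounded for $z$ in a neighbourhood of $E$, so $z\mapsto u_{z,n},v_{z,n}$ are holomorphic there; by Lemma~\ref{lem-sub-keyest}(ii) there is a constant $C\ge1$ with $1\le\|\psi_{E,n}\|/\|\psi_{E+i\eta,n}\|\le C$ uniformly in $n$ and $0\le\eta\le\varepsilon$, so the weighted norms $\|\cdot\|_{E,n}$ and $\|\cdot\|_{E+i\eta,n}$ are uniformly equivalent; and the square-summable solution is $w^{(\infty)}_z=m(z)u_z-v_z$ with $w^{(\infty)}_{z,-1}=-1$, $w^{(\infty)}_{z,0}=m(z)$, $W(u_z,v_z)=1$.

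The heart of the argument is a \emph{matching lemma} in the style of \cite{GP}: let $L(\eta)\in\ZZ_+$ be the largest integer with $\|u_E\|_{E,L(\eta)}\,\|v_E\|_{E,L(\eta)}\le\tfrac1{2\eta}$; then for every $n\le L(\eta)$ the solutions $u_{E+i\eta,n},v_{E+i\eta,n}$ lie within uniform multiplicative constants of $u_{E,n},v_{E,n}$, so that $\|u_{E+i\eta}\|_{E+i\eta,n}$, $\|v_{E+i\eta}\|_{E+i\eta,n}$ and $\|w^{(\infty)}_{E+i\eta}\|_{E+i\eta,n}$ are comparable to their $E$-analogues. I would prove this by a discrete Gronwall estimate: the variation-of-constants formula relative to the energy-$E$ recursion writes $u_{E+i\eta,n}-u_{E,n}$ as a sum over $k<n$ of $(a_{E+i\eta,k}-a_{E,k})$ times the energy-$E$ Green kernel $(u_{E,n}v_{E,k}-v_{E,n}u_{E,k})$ times $u_{E+i\eta,k}$; by Lemma~\ref{lem-sub-keyest}(i) the first factor is at most $\eta\,\|\psi_{E,k}\|^2$ while $\|\psi_{E+i\eta,k}\|\le\|\psi_{E,k}\|$, so, measured in the $\|\cdot\|_{E,\cdot}$-norm and using Cauchy--Schwarz, the accumulated error through step $n$ is controlled by $\eta\,\|u_E\|_{E,n}\,\|v_E\|_{E,n}\le\tfrac12$ as long as $n\le L(\eta)$, and iterating gives the claim (symmetrically for $v$ and $w^{(\infty)}$). \textbf{This matching step is where I expect the main difficulty to lie}: it is precisely the place where the non-trivial weights $\|\psi_{E,n}\|$ intervene --- in the Jacobi case $|a_{E+i\eta,n}-a_{E,n}|=\eta$ and $\|\psi_{E,n}\|=1$, making the step vacuous --- and it relies on both parts of Lemma~\ref{lem-sub-keyest}, in particular on the uniform comparison of $\|\psi_{E,n}\|$ and $\|\psi_{E+i\eta,n}\|$, which is only available for $E\notin\overline{B_\infty}$.

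With the matching lemma in hand, the next step is the Kahn--Pearson two-sided bound for the $m$-function, namely the existence of $c,C>0$, locally uniform in $E$, with
\[
c\,\frac{\|v_E\|_{E,L(\eta)}}{\|u_E\|_{E,L(\eta)}}\;\le\;|m(E+i\eta)|\;\le\;C\,\frac{\|v_E\|_{E,L(\eta)}}{\|u_E\|_{E,L(\eta)}}
\]
for all $0<\eta\le\varepsilon$. Its input is the identity furnished by Lemma~\ref{lem:gr}(i): since $w^{(\infty)}_z$ is square-summable in the $\|\cdot\|_{z,\cdot}$-norm, the boundary term at infinity vanishes and $\eta\sum_n|w^{(\infty)}_{z,n}|^2\|\psi_{z,n}\|^2=\im m(z)$. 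Substituting $w^{(\infty)}_z=m(z)u_z-v_z$, using the matching lemma to replace all norms at scale $L(\eta)$ by their $E$-versions and by multiples of $\tfrac1{2\eta}$, and using $\big|\,|m(z)|\,\|u_z\|_{z,L(\eta)}-\|v_z\|_{z,L(\eta)}\,\big|\le\|w^{(\infty)}_z\|_{z,L(\eta)}\le|m(z)|\,\|u_z\|_{z,L(\eta)}+\|v_z\|_{z,L(\eta)}$, one reproduces the displayed sandwich exactly as in \cite{KP}.

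Finally I would deduce (i)--(iv) from this sandwich, using only that $n\mapsto\|\cdot\|_{E,n}$ is non-decreasing, that $L(\eta)\to\infty$ as $\eta\downarrow0$, and that of any two linearly independent solutions the subordinate one is characterised by strictly slower growth in $\|\cdot\|_{E,n}$. For (i): if $m(E)=\lim_{\eta\downarrow0}m(E+i\eta)$ is real then $u_{E+i\eta}\to u_E$, $v_{E+i\eta}\to v_E$ and $w^{(\infty)}_{E+i\eta}\to m(E)u_E-v_E=:w^{(\infty)}_E$ locally uniformly; the sandwich keeps $\|v_E\|_{E,L(\eta)}/\|u_E\|_{E,L(\eta)}$ bounded, while (by the matching lemma) $\eta\,\|w^{(\infty)}_E\|_{E,L(\eta)}^2\to\im m(E)=0$ and $\|u_E\|_{E,L(\eta)}\|v_E\|_{E,L(\eta)}$ is of order $1/\eta$, which together force $\|w^{(\infty)}_E\|_{E,n}/\|u_E\|_{E,n}\to0$, i.e. $w^{(\infty)}_E$ is subordinate. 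For (ii): $|m(E+i\eta)|\to\infty$ forces $\|v_E\|_{E,L(\eta)}/\|u_E\|_{E,L(\eta)}\to\infty$, hence $\|u_E\|_{E,n}/\|v_E\|_{E,n}\to0$, so $u_E$ is subordinate. For the converses (iii) and (iv) I would pass to a subsequence $\eta_j\downarrow0$ along which $m(E+i\eta_j)$ converges to some $\widetilde m\in\{\,\im z\ge0\,\}\cup\{\infty\}$; comparing $w^{(\infty)}_{E+i\eta_j}=m(E+i\eta_j)u_{E+i\eta_j}-v_{E+i\eta_j}$ at scale $L(\eta_j)$ via the matching lemma with the given subordinate solution --- $mu_E-v_E$ in case (iii), $u_E$ in case (iv) --- and using that a subordinate solution cannot agree, even asymptotically in $\|\cdot\|_{E,n}$-norm, with a non-subordinate one, forces $\widetilde m=m$ (in particular $m\in\RR$) in case (iii) and $\widetilde m=\infty$ in case (iv); the convergence $w^{(\infty)}_{E+i\eta_j,n}\to w^{(\infty)}_{E,n}$ then follows from $m(E+i\eta_j)\to m$ together with holomorphy of $u_z,v_z$ near $E$. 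The remaining verifications --- the Gronwall bookkeeping in the matching lemma and the Kahn--Pearson chopping estimates --- are then routine adaptations of \cite{KP} and \cite{GP} with the substitutions indicated above.
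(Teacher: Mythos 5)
Your proposal follows the same overall strategy as the paper --- transcribe Kahn--Pearson into this setting, with Lemma~\ref{lem-sub-keyest} supplying exactly the two new estimates needed --- and you correctly identify every genuinely new ingredient: the Volterra/variation-of-constants operator whose $\|\cdot\|_{E,n}$-norm is bounded by $2\eta\,\|u_E\|_{E,n}\|v_E\|_{E,n}$ via Lemma~\ref{lem-sub-keyest}(i) and Cauchy--Schwarz (this is the paper's operator $L_\eta$ and estimate \eqref{eq-est-L}), the identity $\eta\sum_n|w^{(\infty)}_{z,n}|^2\|\psi_{z,n}\|^2=\im m(z)$ from Lemma~\ref{lem:gr}(i) (the paper's \eqref{eq-basic-est}), and the use of Lemma~\ref{lem-sub-keyest}(ii) to trade the $\|\cdot\|_{z,n}$-norm for the $\|\cdot\|_{E,n}$-norm, which is precisely the step that forces $E\notin\overline{B_\infty}$. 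Where you diverge is the endgame: the paper sticks to Kahn--Pearson's original scheme, choosing for each cutoff $n$ a matched $\eta_n$ via an implicit equation and computing the relevant limits directly (separately for (i)/(iii), with (ii)/(iv) obtained by swapping $m\leftrightarrow 1/m$ and $u\leftrightarrow v$), whereas you first establish a Jitomirskaya--Last-type two-sided sandwich $|m(E+i\eta)|\asymp\|v_E\|_{E,L(\eta)}/\|u_E\|_{E,L(\eta)}$ and read all four parts off it. That buys a stronger and more quotable intermediate statement, but at two costs you should be aware of: (a) with $L(\eta)$ defined as the largest \emph{integer} scale with $\|u_E\|_{E,L}\|v_E\|_{E,L}\le\tfrac1{2\eta}$, the uniform constants $c,C$ in the sandwich can degrade by the jump of the norms between $L$ and $L+1$ (Jitomirskaya--Last interpolate the norms continuously to avoid this; the paper's $\eta_n$-parametrization sidesteps it entirely); and (b) your treatment of the converses (iii) and (iv) --- ``a subordinate solution cannot agree, even asymptotically in norm, with a non-subordinate one'' --- is too loose as stated: the sandwich controls only $|m(E+i\eta)|$, so pinning down the subsequential limit $\widetilde m=m$ requires the standard change-of-basis/M\"obius argument (apply the direct implication to the pair $(u_E,\,mu_E-v_E)$, whose $m$-function is $m(z)-m$ up to controlled errors), which is what \cite{KP} and the paper actually do. Neither point is a fatal gap, but both need to be written out for the proof to close.
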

Note that by general theory about Herglotz functions the limit $m(E)$ does exist for Lebesgue almost every $E$, hence the limits along sequences $\eta_j$ are limits $\eta\downarrow 0$ for
Lebesgue almost all $E$ where one has a subordinate solution.
Using the fact that $m(z)=\int (E-z)^{-1} \mu_0(dE)$, Theorem~\ref{th:subordinacy} immediately follows by standard arguments as in \cite{KP}.

For the proof we focus on (i) and (iii), parts (ii) and (iv) follow similarly by 
considering $ w^{(\infty)}_z/m(z)=u_z-v_z/m(z)$, which replaces the role of $m(z)$ by $1/m(z)$ and reverses the role of $u_z$ and $v_z$.
As in \cite{KP} the following estimate is important.
Note that by \eqref{eq-sum-|u|^2} and the fact that $w_{z,n}^{(\infty)} \to 0$ for $n\to\infty$ one finds
\begin{equation}\label{eq-basic-est}
\|u_z\,m(z)\,-\,v(z)\|_{z,n}\,=\,\sum_{k=0}^n |w_{z,k}^{(\infty)}|^2 \,\|\psi_{z,k}\|^2\,
\leq\,\frac{\im(m(z))}{\im(z)}\;.
\end{equation}

This estimate and Lemma~\ref{lem-sub-keyest} are crucial to make the proof of \cite[Theorem~1]{KP} work.
So let $E\in \RR \setminus \overline{B_\infty}$ and
for $\eta>0$ define the operator $L=L_\eta$ acting on a sequence $w=(w_n)_{n\geq 0}$ by
\begin{equation}
(L w)_n\,:=\,-v_{E,n} \sum_{k=0}^n \left( u_{E,k} (a_{E+i\eta,k}-a_{E,k}) w_k \right)\,+\,u_{E,n} \sum_{k=0}^n \left( v_{E,k} (a_{E+i\eta,k}-a_{E,k})\, w_k \right)\,.
\end{equation}
Then it is straight forward to verify that
$ w^{(\infty)}_z=m(z)u_z-v_z=m(z)u_E-v_E+Lw^{(\infty)}_z$ with $z=E+i\eta$.
Using Lemma~\ref{lem-sub-keyest}~(i) and Cauchy-Schwarz we find 
$$\left| \sum_{k=0}^n v_{E,k} \|\psi_{E,k}\| \frac{(a_{E+i\eta,k}-a_{E,k})}{\|\psi_{E,k}\|^2}\, w_k \|\psi_{E,k}\|\right|\,\leq\,
\eta\,\|v_E\|_{E,n}\, \|w\|_{E,n}$$
and (cf. \cite[eq. (27), (28)]{KP}
\begin{equation}\label{eq-est-L}
 \|Lw\|_{E,n}\,\leq\,2\eta\,\|u_E\|_{E,n}\,\|v_E\|_{E,n}\,\|w\|_{E,n} \qtx{and hence} \|L\|_{E,n}\,\leq\,2\eta\,\|u_E\|_{E,n}\,\|v_E\|_{E,n}\;.
\end{equation}
Assume now $m(E+i\eta)\to m(E)\in \RR$ for $\eta\downarrow 0$ and as in Lemma~3~(i) of \cite{KP} define $\eta_n$ to be the smallest positive number such that
$$
\eta_n\,=\,\sqrt{\im(m(E+i\eta_n))}\,/\,\left({\sqrt{\|u_E\|_{E,n} \|v_E \|_{E,n}}\,\big[\|u_E\|_{E,n}+\|v_E\|_{E,n}\big]}\right)\,.
$$
Then $\eta_n\to 0$ and for $z_n=E+i\eta_n$ we find as in \cite[Theorem~1]{KP} (cf. \cite[eq. (29)]{KP}
$$
\lim_{n\to\infty} \frac{\|u_{z_n}\, m(z_n)\,-\,v_{z_n}\|_{E,n}}{\|u_E\,m(z_n)\,-\,v_E\|_{E,n}}\,=\,1\,,\quad
\lim_{n\to\infty} \left(\frac{\im(m(z_n))}{\eta_n}\right)^{1/2} \frac{1}{\|u_E\|_{E,n}+\|v_E\|_{E,n}}\,=\,0\;.
$$
The latter estimate corresponds to the term $F_3$ in \cite{KP}. The term corresponding to $F_2$, however, needs a slight modification here at first sight, which is that we have to use the norm 
$\|\cdot\|_{z_n,n}$ in the numerator. More precisely, combining the above estimates with \eqref{eq-basic-est} now gives
$\|w^{(\infty)}_{z_n,n}\|_{z_n,n}\,/\,\left(\|u_E\|_{E,n}+\|v_E\|_{E,n} \right)\,\to\,0\;$.
It is precisely at this point that Lemma~\ref{lem-sub-keyest}~(ii) is crucial to change the $\|\cdot\|_{z_n,n}$ norm to the $\|\cdot\|_{E,n}$ norm and to obtain
\begin{equation}
 \lim_{n\to \infty} \frac{\|u_{z_n}\,m(z_n)\,-\,v_{z_n}\|_{E,n}}{\|u_E\|_{E,n}+\|v_E\|_{E,n}}\,=\,0\qtx{and}
 \lim_{n\to\infty} \frac{\|u_E\,m(E)\,-\,v_E\|_{E,n}}{\|u_E\|_{E,n}+\|v_E\|_{E,n}}\,=\,0\;,
\end{equation}
giving the subordinacy of $u_E\, m(E) - v_E$. This proves part (i) of Theorem~\ref{th:subord-app}.

For part (iii) let $u_E\,m-v_E$ be a subordinate solution, then clearly, $m\in\RR$.
We can basically follow the proof of \cite[Theorem~2]{KP}. In order to get to the equivalent of \cite[equation (33)]{KP} we need to use 
both estimates of Lemma~\ref{lem-sub-keyest} again (part (i) for the bound on the operator $L$ similar to above and part (ii) to replace the  $\|\cdot\|_{z,n}$ norm by
the $\|\cdot\|_{E,n}$ norm in \eqref{eq-basic-est}) and obtain
$$
\lim_{n\to\infty} \frac{\|u_E\,m(z_n)\,-\,v_E \|_{E,n}}{\|u\|_{E,n}\,(1+\im(m(z_n))}\,=\,0\,
$$
where $z_n=E+i\eta_n$ with 
$$
\eta_n\,=\,\frac{\sqrt{\im(m(E+i\eta_n))}}{\|u_E\|_{E,n}^{3/2}\,\sqrt{1+ \im\,m(E+i\eta_n)} \,\big[\|u_E\|_{E,n}+\|v_E\|_{E,n}\big]}\;.
$$
Using the subordinacy of $u_E\,m-v_E$ we then obtain $m(z_n)\to m$ as in \cite{KP}. 

As mentioned above, parts (ii) and (iv) follow analogously to (i) and (iii), respectively.


\subsection{Proof of Theorem~\ref{th:ac-necessary}}

Theorem~\ref{th:ac-necessary} now follows from the subordinacy theory and Theorem~\ref{th:sigma_0}~(ii).

\begin{proof}[Proof of Theorem~\ref{th:ac-necessary}]
From Theorem~\ref{th:sigma_0}~(ii) we obtain that
$$
\int 1_{\RR\setminus B_\infty}(E)\, u_{E,n}^2\;\mu_0(dE)\,\leq\, \|\Phi_n\|^2\,=\,\|\phi_n\|^2\;.
$$
Following the arguments in \cite{LaSi} we also look at the 'Neumann' boundary conditions at $n=0$.
Thus, let $H^{(1+)}$ denote the operator $H$ restricted to the $\bigoplus_{n\geq 1} \ell^2(S_n)$ with Dirichlet boundary conditions at $n=1$.
Moreover, let $\mu_1^{(1+)}$ denote the spectral measure at $\Phi_1$ of $H^{(1+)}$.
As $v_{E,0}=0,\,v_{E,1}=-1$ we obtain completely analogously that
$
\int 1_{\RR\setminus B_\infty} v_{E,n}^2\;\mu_1^{(1+)}(dE)\,\leq\, \|\phi_n\|^2\;.
$
The standard Green's function recursion in spherical coordinates (cf. \cite{FHS0}) coming from the resolvent identity gives in this case for $g'_{z}:=\langle \Phi_1|(H^{(1+)}-z)^{-1}|\Phi_1\rangle$ that
$g_{z}(0,0)=-\left(g'_{z}+(\phi_0^*(z-V_0)^{-1} \phi_0)^{-1} \right)^{-1}$.
This relation can also be obtained using \eqref{eq-gr} for $H$ and $H^{(1+)}$.
It shows that the singular parts of $\mu_0$ and $\mu_1^{(1+)}$ are mutually singular whereas the
a.c. spectrum has the same support. Now we follow the proof of Proposition~3.3 of \cite{LaSi}, defining the absolutely continuous measure $\mu_{ac}:=\min(\mu_0, \mu_1^{(1+)})$ given by
$\mu_{ac}(S)\,:=\,\inf_{A,B;\,S\subset A\cup B} \left(\mu_0(A)\,+\,\mu_1^{(1+)}(B)\,\right)\;.$
Then the above inequalities show  that
$
\int\frac1n \sum_{k=1}^n \left\|\, \boldsymbol{\Phi}_k^{-1}\, T_E(k) \,\right\|^2\,\mu_{ac}(dE)<4$.
  Fatou's lemma then implies Theorem~\ref{th:ac-necessary}~(i). Furthermore,
$$
\|\,\boldsymbol{\Phi}_k^{-1} \,T_E(k)\,T_E(m)^{-1}\,\boldsymbol{\Phi}_m\,/\,\det(\boldsymbol{\Phi}_m)\,\|\,\leq\,
\|\,\boldsymbol{\Phi}^{-1}_k \,T_E(k) \,\|\;\|\,\boldsymbol{\Phi}_m^{-1}\,T_E(m)\,\|
$$
which together with the Cauchy Schwarz inequality $\int |fg|\,\mu_{ac}\leq\left(\int |f|^2\mu_{ac} \right)^{\frac12}
\left(\int |g|^2\mu_{ac} \right)^{\frac12}$
shows uniform boundedness of the integral of the left hand side over $\mu_{ac}(dE)$. Again, Fatou's lemma then yields part (iii).

So it is only left to show part (ii) analogue as in \cite[Section 3]{LaSi}.
As above let $\vec{u}_\theta=\smat{\cos(\theta)\\\sin(\theta)}$ and define 
$$\vec{u}_\theta(n)\,=\,\bpsi_{E,n}\, T_E(n)\, \vec{u}_\theta \qtx{and}
\vec{v}_\theta(n)\,=\,\bpsi_{E,n}\, T_E(n)\,\vec{u}_{\theta+\frac\pi2}\,.$$
Then, clearly $\sum_{k=1}^n\| \vec{v}_\theta(k)\|^2 \leq \sum_{k=1}^n \|\bpsi_{E,k}\,T_{E}(k)\|^2$.
Moreover, let $J=\smat{0 & 1 \\ -1 & 0}$ be the symplectic form.
Then $\bpsi_{E,k} J \bpsi_{E,k}=\det (\bpsi_{E,k})J$ and as $T_E(k)$ leaves the symplectic form invariant, one obtains
$\langle \vec{u}_{\theta}(k)\,,\,J\,\vec{v}_{\theta}(k)\rangle=\det \bpsi_{E,k}$. This leads to
$$
\left(\sum_{k=1}^n \det \bpsi_{E,k}\right)^2\,\leq\, \sum_{k=1}^n \| \vec{u}_\theta(k)\|\;\| \vec{v}_\theta(k)\|\,\leq\,
\left(\sum_{k=1}^n \| \vec{u}_\theta(k)\|^2 \right)\,\left(\sum_{k=1}^n \vec{v}_\theta(k)\|^2 \right)
$$
which together with the above estimate gives the following analogue of \cite[Lemma~3.1]{LaSi}
\begin{equation}\label{eq-ineq-sub}
\frac{\sum_{k=1}^n \|\vec{v}_\theta(k)\|^2}{\sum_{k=1}^n \|\vec{u}_\theta(k)\|^2}\,\leq\,
\left( \,\frac{\sum_{k=1}^n \|\bpsi_{E,k}\,T_E(k)\|^2}{\sum_{k=1}^n \det \bpsi_{E,k}}\,\right)^2\;.
\end{equation}
Now, letting $u_{\theta,n}=\smat{1&0} T_E(n) \vec{u}_\theta$ be the solution of \eqref{eq-eig-u} we see that
$\|\vec{u}_\theta(n)\|^2=|u_{\theta,n}|^2 \|\psi_{E,n}\|^2 + |u_{\theta,n-1}|^2 \|\psi_{E,n-1}\|^2$. 
By similar arguments as in \cite[Section~3]{LaSi} we see that
the right hand side of
\eqref{eq-ineq-sub} must go to infinity if a subordinate solution $u_{\theta,n}$ exists in the sense of Definition~\ref{def:subordinate}.
Hence, for $E\in \Sigma_\Psi$ no such solution exists.
\end{proof}


\section{Proof of Theorem~\ref{th-lim-log(T)} \label{app-estimate}}

The proof is pretty much the same as in \cite{KLS}. One slight difference is that the support of the distribution of the random variable $x_n$ is
not shrinking in $n$. Therefore, we use Lemma~\ref{lem-conv} instead.
It will be enough to show the limit for $R_n$ for any starting angle $\theta$.
From \eqref{eq-rec-R} we obtain
\begin{equation}\label{eq-log(R_n)}
\log(R_{n+1})\,=\,\log(R_n)\,+\,f(x_n,\bar\theta_n)\qtx{and hence}\log(R_{n+1})\,=\,\sum_{j=0}^n \,f(x_j,\bar \theta_j)
\end{equation}
where
\begin{align}
f(x,\bar \theta)\,&:=\,\frac12 \,\log\left(1\,+\,x_n\,\sin(2\bar\theta)\,+\,x_n^2\,\sin^2(\bar \theta)\,\right) \notag \\ &=\,
\frac12\,\log\left(\left(1\,+\,\frac12 x_n \sin(2\bar\theta)\right)^2\,+\,x_n^2 \sin^4(\bar \theta) \right). \label{eq-def-f(x,th)}
\end{align}
By uniform boundedness of $x_n$ (cf. \eqref{eq-x-bounds}) the expression inside the logarithm is uniformly bounded and uniformly bounded
away from zero. Using $f^{(m)}$ for $\partial f^m / \partial x^m$, a Taylor expansion gives
$$
f(x_n,\bar\theta_n)\,=\,f^{(1)}(0,\bar\theta_n)\,x_n\,+\,\tfrac12f^{(2)}(0,\bar\theta_n)\, x_n^2\,+\,\tfrac16f^{(3)}(0,\bar\theta_n)\,x_n^3\,+\,
\tfrac1{4!} f^{(4)}(\xi_n,\bar \theta_n)\,x_n^4\;
$$
where $\xi_n$ depends on $x_n$ and $\bar\theta_n$.
By compactness of the support of $x_n$, $|\tfrac1{4!}f^{(4)}(\xi_n,\bar\theta_n)\,x_n^4|\le C x_n^4$ for some uniform constant $C$.

As $s_n\sim n^{-\alpha}$ we have $\EE(x_n^m)=\Oo(n^{-\alpha})$ and by \cite[Lemmas~8.3 and 8.4]{KLS} we can 
replace $x_j^m$ by its expectation $\EE(x_j^m)$ with errors
of order $o\left(\sum_{j=1}^n s_j^{-1} \right)$.
Furthermore, note
$$
\sum_{j=0}^n C \,\EE(x_j^4)\,+\,\left|\sum_{j=0}^n\tfrac16f^{(3)}(0,\bar\theta_j)\,\EE(x_j^3)\,\right|\,\leq\,
\tilde C\,\sum_{j=0}^n s_j^{-2}\,=\,o\left(\sum_{j=1}^n j^{-\alpha} \right)\;.
$$
Thus, up to errors
of order $o\left(\sum_{j=1}^n s_j^{-1} \right)=o\left(\sum_{j=1}^n j^{-\alpha} \right)$ one has
\begin{align*}
 \sum_{j=0}^n f(x_j,\bar\theta_j)\,&=\, \sum_{m=1}^2\sum_{j=0}^n \frac{1}{m!}f^{(m)}(0,\bar\theta_j)\,\EE(x_j^m)
 =\,\sum_{m=1}^2\,\sum_{j=0}^n \frac{1}{m!}\,f^{(m)}(0,\bar\theta_j)\,\frac{h^{2+m}\sigma^2}{\sin^m(k)s_j}\\ 
 &=\,
 \sum_{j=0}^n \frac{h^3\sigma^2}{\sin(k)s_j}\,\sin(2\bar\theta_j) \,+\,\sum_{j=0}^n \frac{h^4\sigma^2}{\sin^2(k)\,s_j}\,\left(\frac18\,-\,\frac14\,\cos(2\bar\theta_j)\,+\,\frac18\,\cos(4\bar\theta_j) \right)\,.
\end{align*}
Therefore, \eqref{eq-log(T)} follows from
\begin{equation}\label{eq-oscil-terms}
\left|\sum_{j=1}^n s_j^{-1}\,\sin(2\bar\theta)\,\right|\,+\,\left|\sum_{j=1}^n s_j^{-1}\,\cos(2\bar\theta)\,\right|\,+\,
\left|\sum_{j=1}^n s_j^{-1}\,\cos(4\bar\theta)\,\right|\,=\,o\left(\sum_{j=1}^n j^{-\alpha} \right)\;.
\end{equation}
Using Lemma~\ref{lem-conv} and the existence of the limit $\lim_{n\to\infty} s_n^{-1} n^\alpha$, this can be proved analogously as in \cite{KLS}. 
To see this, let us recall the following Lemma: 
\begin{lemma}[$\sim$ Lemma 8.5 in \cite{KLS}]\label{lem-ql}
Let $k_0\in\RR$ be not in $\pi \ZZ$. Then there exists integers $q_l\to\infty$ such that for any $\theta_0,\ldots,\theta_{q_l}$,
$$
\left| \sum_{j=1}^{q_l}\,e^{i\theta_j}\right|\,\leq\,1\,+\,\sum_{j=1}^{q_l}\,\left|\theta_j-\theta_0-jk_0 \right|\;.
$$
\end{lemma}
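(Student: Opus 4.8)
The plan is to separate the sum into an ideal geometric part, in which the angles are taken to be exactly $\theta_0+jk_0$, and an error part that is handled term by term. Writing
$$
\sum_{j=1}^{q}e^{i\theta_j}\,=\,e^{i\theta_0}\sum_{j=1}^{q}e^{ijk_0}\,+\,\sum_{j=1}^{q}\bigl(e^{i\theta_j}-e^{i(\theta_0+jk_0)}\bigr),
$$
the error part is controlled by the elementary Lipschitz bound $|e^{ia}-e^{ib}|=2|\sin(\tfrac{a-b}{2})|\leq|a-b|$, so that its modulus is at most $\sum_{j=1}^{q}|\theta_j-\theta_0-jk_0|$, which is exactly the sum appearing on the right-hand side of the claim. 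Since $|e^{i\theta_0}|=1$, the whole statement therefore reduces to choosing integers $q_l\to\infty$ for which the purely geometric sum satisfies $\bigl|\sum_{j=1}^{q_l}e^{ijk_0}\bigr|\leq 1$; this choice does not depend on the $\theta_j$, which is what makes the bound uniform over all angles.

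Next I would evaluate the geometric sum in closed form. Because $k_0\notin\pi\ZZ$ we have $e^{ik_0}\neq 1$, so the denominator below is nonzero and
$$
\sum_{j=1}^{q}e^{ijk_0}\,=\,e^{ik_0}\,\frac{e^{iqk_0}-1}{e^{ik_0}-1}\,,\qquad
\left|\sum_{j=1}^{q}e^{ijk_0}\right|\,=\,\frac{|e^{iqk_0}-1|}{|e^{ik_0}-1|}\,=\,\frac{|\sin(qk_0/2)|}{|\sin(k_0/2)|}\,.
$$
Thus the bound $\bigl|\sum_{j=1}^{q}e^{ijk_0}\bigr|\leq 1$ is equivalent to $|\sin(qk_0/2)|\leq|\sin(k_0/2)|$, i.e.\ to $qk_0$ lying, modulo $2\pi$, at least as close to $2\pi\ZZ$ as $k_0$ itself does. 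It then remains only to exhibit infinitely many integers $q$ with this property.

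The one mildly non-trivial step is producing such $q_l\to\infty$, and I would do this by the standard dichotomy on $k_0/(2\pi)$. If $k_0/(2\pi)=p/b$ is rational in lowest terms, then $k_0\notin\pi\ZZ$ forces $b\geq 3$, the number $e^{ik_0}$ is a primitive $b$-th root of unity, and the full-period sum $\sum_{j=1}^{b}e^{ijk_0}$ vanishes; taking $q_l=lb$ then gives geometric sums equal to $0$ for all $l$. If $k_0/(2\pi)$ is irrational, the orbit $\{qk_0\bmod 2\pi\}_{q\geq 1}$ is dense in $[0,2\pi)$ (by equidistribution, or by a direct pigeonhole argument on the points $e^{ijk_0}$), so there are infinitely many $q$ with $qk_0\bmod 2\pi$ arbitrarily close to $0$, and one selects $q_l\to\infty$ along which $|\sin(q_lk_0/2)|\to 0\leq|\sin(k_0/2)|$. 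In either case $\bigl|\sum_{j=1}^{q_l}e^{ijk_0}\bigr|\leq 1$, and combining this with the splitting of the first paragraph yields the claimed inequality for every choice of $\theta_0,\dots,\theta_{q_l}$. I expect no genuine obstacle: the content is entirely in the reduction performed in the first paragraph together with this elementary recurrence of the rotation near the identity.
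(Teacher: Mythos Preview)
Your proof is correct. The paper itself gives no proof of this lemma---it merely cites it as (essentially) Lemma~8.5 in \cite{KLS} and places a $\Box$---so there is nothing to compare against; your decomposition into the geometric sum plus the Lipschitz-controlled remainder, together with the rational/irrational dichotomy for producing $q_l\to\infty$ with $|\sin(q_l k_0/2)|\leq|\sin(k_0/2)|$, is exactly the standard argument and is complete as written.
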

$\phantom{hh} \hfill \Box$

\vspace{.2cm}

For illustration how to obtain \eqref{eq-oscil-terms} let us pick the first 
term.  
Also note that by convergence of $s_n\,/\,n^\alpha$ and $\sum_n s_n^{-1} = \sum_n n^{-\alpha}=\infty$ one obtains
$$
\sum_{j=0}^n s_j^{-1} \,\sin(\bar\theta_j)\,=\, \sum_{j=1}^n c\, j^{-\alpha} \,\sin(\bar\theta_j)\,+\, o\left(\sum_{j=1}^n j^{-\alpha} \right)\;,
$$
thus, we can consider $\sum_j j^{-\alpha} \sin(2\bar\theta_j)$.

So let $q_l$ be the sequence as in Lemma~\ref{lem-ql} for $k_0=2k$.
By Lemma~\ref{lem-conv} we find almost surely a random sequence $n_l>q_l^2$ such that for any $n>n_l$ one has $|x_n|\,<\,q_l^{-2}$ and 
$q_l^{1-\alpha/2} (n+q_l)^{-\alpha} \geq n^{-\alpha}$.
Then one obtains for $n>n_l$ that
$$
\left|\bar \theta_{n+j}\,-\,\bar\theta_n\,-\,jk\right|\,\leq\, \sum_{i=1}^j |x_{n+i}|\,\leq\, j\,q_l^{-2}\;.
$$
Let $N=n_l+K q_l$, then
\begin{align}
\sum_{j=n_l+1}^N j^{-\alpha} \,\sin(2\bar\theta_j)\,&\leq\,
\sum_{m=0}^{K-1}\sum_{j=1}^{q_l}\bigg[ (n_l+mq_l)^{-\alpha}\, 2\,\left|\bar\theta_{n_l+mq_l+j}-\bar\theta_{n_l+mq_l}-jk \right| \,+\, \notag \\
& \qquad\qquad\qquad\qquad\left| (n_l+mq_l+j)^{-\alpha}-(n_l+mq_l)^{-\alpha}\right|\bigg] \notag \\
& \leq\,\sum_{m=0}^{K-1} (3\,+\,\alpha)\,(n_l+mq_l)^{-\alpha} \,\leq\, 4\,q_l^{-\alpha}\,\sum_{j=n_l+1}^N j^{-\alpha}
\;.\notag 
\end{align}
In the second estimate  we used concavity of the function $-x^{-\alpha}$ for the second term giving that tangents always lie above the graph and hence
for $N>q_l^2$,
$$
\sum_{j=1}^{q_l}\,N^{-\alpha}\,-\,(N+j)^{-\alpha}\,\leq\, \sum_{j=1}^{q_l} \alpha N^{-\alpha-1}\,j\,\leq\,\alpha\,N^{-\alpha}\,q_l^2/N\,\leq\,
\alpha\,N^{-\alpha}\;.
$$
The last estimate comes from $q_l^{-\alpha/2} \sum_{j=1}^{q_l} (n+j)^{-\alpha} \geq q_l^{1-\alpha/2} (n+q_l)^{-\alpha}\geq n^{-\alpha}$ for all
$n>n_l$ which was one of the conditions on $n_l$.
Therefore,
$$
\limsup_{N\to \infty} \left(\sum_{j=1}^N j^{-\alpha} \,\sin(2\bar\theta_j)\right)\,/\,
\left(\sum_{j=1}^N j^{-\alpha} \right)\,\leq\, q_l^{-\alpha}\,\to\,0\qtx{for} q_l\,\to\,\infty\;.
$$
Repeating these arguments for the second and third term in \eqref{eq-oscil-terms} finishes the proof of \eqref{eq-log(T)}.

For $\alpha<1$, \eqref{eq-exist-subord} follows directly from \eqref{eq-log(T)} and \cite[Theorem~8.3]{LaSi}. 
In the case $\alpha=1$, i.e. $d=2$, we need to use \cite[Lemma~8.7]{KLS} and verify the conditions following the arguments of
the proof of \cite[Lemma~8.8]{KLS}.

So let $\alpha=1,\,d=2$ and $\beta=c\, \gamma_{E,\lambda}$, then we have $\log\|T_E(n)\|\,/\,\log(n)\,\to\,\beta$. Moreover, let $R^{(j)}_n$ and $\bar\theta_n^{(j)}=\theta^{(j)}_n+k$ for $j=1,2$ be defined as $R_n$ and $\bar\theta_n=\theta_n+k$ in \eqref{eq-def-R-th} and \eqref{eq-rec-R} with starting angles $\theta^{(1)}_0=0$ and $\theta^{(2)}_0=\pi/2$.
As in \cite{KLS} we get almost surely $\log|\theta_n^{(1)}-\theta_n^{(2)}|\,/\,\log(n)\to-2\beta$.
Then, using $f(x,\bar\theta)$ as in \eqref{eq-def-f(x,th)}, define the random variable
$$
L(n)\,:=\,f(x_n,\bar\theta^{(1)}_n)\,-\,f(x_n,\bar\theta^{(2)}_n)\;.
$$
Following the proof of \cite[Lemma~8.8]{KLS} the main point is to show that for any $\varepsilon>0$ we have (cf. \cite[eq. (8.22)]{KLS})
\begin{equation}\label{eq-est-L(n)}
\left|\sum_{n=N}^\infty L(n)\right| \,\leq\, C_\omega N^{-2\beta+\varepsilon}
\end{equation}
almost surely, for some random variable $C_\omega$
(recall that $\omega\in\Omega$ denoted the randomness).
There are some differences in the setup here to arrive at this estimate.

Taking $J>2+2\beta$ and noting that $s_n\sim n$ in this case $(d=2)$, we have by \eqref{eq-x-bounds2} that $\EE(x_n^{2J})=O(n^{-J})=o(n^{-2-2\beta})$.
A Borel Cantelli argument as in Lemma~\ref{lem-conv} shows that
$x_n^{2J}=o(n^{-2\beta-1+\varepsilon})$ almost surely. Furthermore, we have $|\theta_n^{(1)}-\theta_n^{(2)} |=o(n^{-2\beta+\varepsilon})$ almost surely.
Taking a Taylor expansion of $f(x,\bar\theta)$ in the first variable up to reminder term $O(x_n^{2J})$ we see that
$\EE(L(n))=o(n^{-2\beta-1+\varepsilon})$ and
$$
L(n)\,=\,\sum_{j=1}^{2J-1} (x_n^j-\EE(x_n^j))\,\left[ f^{(j)}(0,\bar\theta^{(1)}_n)-f^{(j)}(0,\bar\theta^{(2)}_n)\right]\,+\,o(n^{-2\beta-1+\varepsilon})\;.
$$
$\bar\theta_n^{(1)},\,\bar\theta_n^{(2)}$ depend only on $x_0,\ldots,x_{n-1}$ and the variance of each term is of order
$o(n^{-4\beta-1+2\varepsilon})$. Therefore, we can use \cite[Lemma~8.4 part(3) with $2\alpha=1+4\beta-2\varepsilon$]{KLS} to obtain \eqref{eq-est-L(n)}.
After this estimate we can conclude as in \cite[Lemma~8.8]{KLS} and obtain that the assumptions for \cite[Lemma~8.7]{KLS} 
are fulfilled almost surely. This gives \eqref{eq-exist-subord} in the case $d=2$.


\end{document}